\newtheorem{theorem}{Theorem}
\newtheorem{assumption}{Assumption}
\newtheorem{lemma}{Lemma}
\newtheorem{definition}{Definition}
\newtheorem{proposition}{Proposition}
\newtheorem*{nono-theorem}{Theorem}
\newcommand{\E}{\mathcal{E}_{N}}
\newcommand{\linn}{\mathcal{L}_{N}}
\newcommand{\lin}{\mathcal{L}}
\begin{document}

\begin{center}{\large
\textbf{Quantum fluctuation dynamics of open quantum systems with collective operator-valued rates, and applications to Hopfield-like networks}  }
\end{center}

\begin{center}
Eliana Fiorelli
\end{center}

\begin{center}
Instituto de F\'isica Interdisciplinar y Sistemas Complejos (IFISC), UIB–CSIC, UIB Campus, Palma de Mallorca, 07122, Spain
\\

* eliana@ifisc.uib-csic.es 
\end{center}

\begin{center}
\today
\end{center}


\section*{Abstract}

We consider a class of open quantum many-body systems that evolves in a Markovian fashion, the dynamical generator being in GKS-Lindblad form. Here, the Hamiltonian contribution is characterized by an all-to-all coupling, and the dissipation features local transitions that depend on collective, operator-valued rates, encoding average properties of the system. These types of generators can be formally obtained by generalizing, to the quantum realm, classical (mean-field) stochastic Markov dynamics, with state-dependent transitions. Focusing on the dynamics emerging in the limit of infinitely large systems, we build on the exactness of the mean-field equations for the dynamics of average operators. In this framework, we derive the dynamics of quantum fluctuation operators, that can be used in turn to understand the fate of quantum correlations in the system. We apply our results to quantum generalized Hopfield associative memories, showing that, asymptotically and at the mesoscopic scale only a very weak amount of quantum correlations, in the form of quantum discord, emerges beyond classical correlations.

\newpage

\section{Introduction}

Open quantum many-body systems currently represent an extremely active research field. From a fundamental perspective, their importance stems from noticing that any quantum system is never truly isolated, hence the relevance of modelling the effect of a quantum environment \cite{BreuerP:2002}. The many literature contributions to this field focus, on the one hand, on unveiling the features that open quantum many-body dynamics can exhibit. On the other hand, on the possibility of tackling the openness as a resource, that is, on engineering dissipative processes so as to realize interesting stationary quantum many-body states \cite{Diehl08, Verstraete:NatPhys:2009, Weimer:Nat:10}. For instance, the competition between the coherent term and the dissipative processes in markovian open quantum systems can give rise to interesting nonequilibrium stationary or dynamical phases \cite{diehl2010,dallatorre2010,Schindler2013,tauber2014,marcuzzi2016,minganti2018,iemini2018,carollo2019,chertkov2022}, and to nonequilibrium critical dynamics \cite{sieberer2013,helmrich2020,jo2021,jo2022, KellyRM21}.

Notoriously, these type of systems are generically hard to treat, both from a rigorous analytical viewpoint, as well as from a numerical perspective \cite{weimer2021}. To get around the impractical description at the microscopic scale, one typically addresses the problem by looking at the \textit{collective} behavior of the system. To characterize the latter, standard observables are the so-called system-average ones (e.g., the total magnetization in a many-body spin system) \cite{LandfordR69, BenattiEtAl18, SewellS88, Strocchi85}. Here, a first analytical comprehension can be obtained when employing a mean-field approximation -- roughly speaking, the omission of correlations amongst these observables. In relation to this, a number of research efforts deal with proving the validity of such a treatment, in the thermodynamic limit, and for given class of (open) systems \cite{hepp1973, alicki1983, BenedikterPS15, MerkliR18, Porta16, Pickl11, hioe1973, CarolloL:PRL:21}. It is important to highlight that a common characteristic of system-average operators is that they scale as $1/N$ and give rise, in the limit of infinite systems, to a classical algebra of commuting observables. For this reason, they are linked to a \textit{macroscopic} scale description \cite{BenattiCF_ADP_15}. Different types of collective observables are taken in consideration when one aims at unveiling possible quantum footprints left beyond the macroscopic scale. These are referred to as quantum fluctuations, to highlight their conceptual connection with fluctuations associated to random variables in statistical mechanics. Quantum fluctuations, at variance with mean-field operators, scale as $1/\sqrt{N}$, and can be associated, in the limit of infinite systems, to a bosonic algebra of operators \cite{Verbeure10, Goderis89, Goderis90s}. In this sense, they retain a quantum character, and describe the so-called \textit{mesoscopic} scale \cite{Benatti2016, BenattiEtAl_JPA_17Qflu,  BenattiEtAl18, BenattiCF_ADP_15, NarnhoferT_PRA_02}.

Making use of the collective description given by system-average operators, as well as by quantum fluctuations, in this work we focus on many-body, open quantum systems in the Markovian regime. Their evolution can then be described in terms of a quantum master equation, given in the Gorini-Kossakowski-Sudarshan and Lindblad, GKS-Lindblad form \cite{Lindblad76, GoriniKS76, BreuerP:2002}. Furthermore, we use the fact that, in this formalism, one can embed a classical stochastic dynamics as a dissipative contribution \cite{Garrahan18, CarolloGK:JSP:21}, and then introduce additional quantum terms, so as to subsequently analyze the impact of the quantum effects on paradigmatic classical models \cite{marcuzzi2016,jo2021,chertkov2022}. More specifically, the open quantum dynamics is assumed to be given in terms of an all-to-all coupling Hamiltonian, and by dissipative (stochastic) single-body transitions, dependent on system-average operators thorugh \textit{collective} operator-valued functions. In relation to this class of systems, the validity of the mean-field theory for system-average operators has been demonstrated in Ref. \cite{FiorelliEtALC_NJP_23}. Building on this result, we derive the effective map that implements the time-evolution of quantum fluctuation operators, in the thermodynamic limit. To give an example of the use of our result, we apply it to quantum generalizations of Hopfield-like associative-memory dynamics \cite{Hopfield:1982}. These  are recently receiving attention \cite{LabayMoraZG23, BoedekerFM23, Rotondo:JPA:2018} also due to the investigation of their possible realization in multi-modal cavity setups \cite{MarshEtAl:PhysRevX:21, GuoEtAl:PRL:2019, VaidyaEtAl:PRX:2018, Rotondo:PRL:2015}. More in general, our results are of interest for the physical scenarios where a collective representation of the master equation can be employed, ranging from quantum-optical settings to superradiant atomic ensembles \cite{LuoEtAl:arxiv:23, SeetharamEtAl:PRR:2022, KellyRM21, MarshEtAl:PhysRevX:21, GuoEtAl:PRL:2019, VaidyaEtAl:PRX:2018, NorciaEtAl:Science:2018, TorgglerKR17}.

Our paper is organized as follows: Section \ref{sec3} introduces the system-average operators, the quantum fluctuation ones, and the mathematical tools we use to deal with them. In Section \ref{sec_generators} the form of the dynamical generator is specified, and an example of the type of employed dynamics is discussed in \ref{motivation}. The dynamics of the system-averaged operators is contained in Section \ref{sec4}. As previously anticipated, these results have been already presented by Ref.\cite{FiorelliEtALC_NJP_23}, and we report them here for the sake of a global understanding. The original contribution of this paper consists in deriving the effective map that evolves quantum fluctuation operators (Theorem \ref{theorem-mesoscopicdynamics}), and it is contained in Section \ref{dyn-mesoscopic}, while part of the related Lemmas are given in \ref{app:B}. Finally, as an example of application of our results, Section \ref{sec5} considers open quantum generalized Hopfield networks, focusing on the behavior of system-average operators as well as on quantum correlations.

\section{Model for many-body quantum systems}
\label{sec3}
In this section we deal with the mathematical tools employed for describing many body quantum systems: the algebra of operators and a functional representation of the quantum states \cite{BratteliR82}. We focus on two types of collective, many-body observables that are of interest for this manuscript. First, we introduce the so-called average operators, which can be regarded as sample-mean averages of a given single-particle operator \cite{BenattiEtAl18,Verbeure10} over the total system. Subsequently, we present the fluctuation operators, that can be considered as the analogous of fluctuations for stochastic, classical variables, here however defined with respect to average operators. We further discuss the properties of the two types of operators when considering states with sufficiently short-ranged correlations (in the sense defined in \ref{clustering}), called clustering states \cite{LandfordR69,Strocchi05,thirring2013quantum}.   

\subsection{Quasi-local algebra operators and states}
We consider a many-body quantum system $S$, composed of a (countably) infinite number of identical (distinguishable) particles. Here, each particle is assumed to have a finite-dimensional Hilbert space, with dimension $d<\infty$, and its physical properties can be thus described by the algebra of $ d \times d $ complex matrices, $M_d(\mathbb{C})$. To move the description from the single particle, to the whole many-body system, notice first that, being distinguishable, each particle can be identified by an integer index $k\in\mathbb{N}$. Given any single-particle operator $x^{(k)}$, with $x\in M_d(\mathbb{C})$, that acts non-trivially only on the $k$th particle, then an operator acting on the many-body system is obtained as
$$
x^{(k)}={\bf 1}_d\otimes {\bf 1}_d\otimes \dots \otimes x\otimes {\bf 1}_d\otimes {\bf 1}_d \otimes \dots ,
$$
where ${\bf 1}_d$ is the identity in $M_{d}(\mathbb{C})$ and $x$ takes $k$th entry of the tensor product. More in general, all the observables of the many-body system are contained in the {\it quasi-local} $C^*$-algebra $\mathcal{A}$, obtained as the norm closure \footnote{We consider the operator norm, denoted as $\|\cdot\|$, given by the largest eigenvalue, in modulus, of the operator.} of the union of all possible local sub-algebras (e.g., $M_{d}(\mathbb{C})$ or $\otimes_{k} M_{d}^{(k)}(\mathbb{C})$, with $k$ extending over a finite number of particles) of the system \cite{BratteliR82}. Practically speaking, the quasi-local algebra $\mathcal{A}$ contains all the operators that are supported on a finite number of particles, which are called strictly local operators, as well as the operators that are extended over the whole system, and yet can be obtained as the limit of a converging sequence of local operators. These are referred to as quasi-localised operators. 

Given the algebra $\mathcal{A}$, states of the quantum system are linear, positive and normalized functionals, $\omega$, on the quasi-local algebra \cite{BratteliR82}, which associate to each operator $A\in\mathcal{A}$ a complex number $\langle A\rangle$ embodying the expectation of the operator itself, 
$$\mathcal{A}\ni A\mapsto \omega(A)=\langle A\rangle \, ,$$
and $\omega({\bf 1})=1$ where ${\bf 1}$ is the identity of $\mathcal{A}$. That is, information about the state of a physical system is equivalent to the knowledge of all possible expectation values for its operators. {The state is called translation invariant when the expectation values of single-particle operators do not depend on the considered particle, $\omega(x^{(k)})=\langle x\rangle$, $\forall k\in \mathbb{N}$, for $x\in M_d(\mathbb{C})$, [see also Definition \ref{clustering}]. }

\subsection{System-average operators}
The operators belonging to the quasi-local algebra $\mathcal{A}$ contribute to the microscopic description of the system. However, when considering many-body systems, one is often interested in unveiling the behavior of collective operators, so as to exemplify average properties of the system. This is typically the situation when dealing with, e.g., equilibrium as well as nonequilibrium phase transitions, where collective quantities serving as order-parameters play such a central role. 

We thus focus on operators of the form
\begin{equation}
X_{N} \equiv \frac{1}{N} \sum_{k=1}^{N} x^{(k)}, \qquad \mbox{with } \qquad x\in M_d(\mathbb{C})\, ,
\label{eq:average-operators}
\end{equation}
that represent sample-mean averages of a given single-particle operator, and can be linked to the random variables in the law of large numbers \cite{grimmett2020probability}. It is clear that $\forall N <\infty$, $X_N \in \mathcal{A}$, as the number of particles involved in the summation is finite. Differently, the limiting point $X_\infty$ of the sequence $X_N$ is not an element of the quasi-local algebra $\mathcal{A}$, as the sequence $X_N$ does not converge in the norm topology \cite{BratteliR82}. 

In the thermodynamic limit, average operators give rise to an emergent classical algebra: the norm of the commutator between any two of them, $[X_N,Y_N]$, is bounded by $2\|x\| \|y\|/N$, so their commutator goes to zero in the large $N$ limit \cite{LandfordR69,BratteliR82,BenattiEtAl18}. However, for investigating the structure of the operators themselves, a weaker forms of convergence needs to be introduced. This is referred to as {\it weak operator topology} \cite{Strocchi05}, and it is defined as follows: 
\begin{definition}\label{w-lim}
A sequence of operators $C_n$ converges weakly to the operator $C$,
\begin{equation*}
C=(\mathrm{w\mbox{--}})\!\lim_{n\to\infty} C_n, \quad if \quad
\lim_{n\to\infty} \omega( A^\dagger C_n B)=\omega(A^\dagger  C B)\, \qquad \forall A,B\in \mathcal{A}\, . 
\end{equation*}   
\end{definition}
According to the above definition \footnote{This form of convergence coincides with the  weak operator convergence within the so-called GNS representation of the algebra $\mathcal{A}$ induced by the state $\omega$ \cite{BratteliR82,Strocchi05}} the information on the limiting operator $C$ are characterized by controlling the expectation values [associated to the state $\omega(\cdot)$] of all of its possible correlation functions with any quasi-local operator. 

For \textit{clustering} quantum states, that are states with sufficiently short-ranged correlations, the limiting operators $X_\infty$ of the sequences $X_N$ are multiples of the identity \cite{LandfordR69,BratteliR82,Strocchi05,Verbeure10} in the sense that
$$X_\infty={\rm (w\mbox{--}})\!\lim_{N\to\infty}X_N=\omega(x) \, ,$$
having further assumed translation invariance of the state, and omitted the identity multiplying the right-hand side of the above limit, for the sake of a lighter notation. More specifically, we define clustering states through the following:

\begin{definition}\label{clustering}
We refer to quantum states $\omega$ of the quasi-local algebra $\mathcal{A}$ as translation-invariant clustering states if the following properties are satisfied: 
\begin{eqnarray}
    & \mbox{i)} \quad \omega(x^{(k)})=\omega(x^{(h)})=\langle x\rangle, \qquad &\forall x\in M_d(\mathbb{C}), \forall k,h\in \mathbb{N}\, ; \\
    & \mbox{ii)}\lim_{N\to\infty}\omega([X_N-\langle x\rangle]^2)=0, \qquad &\forall x=x^\dagger \in M_d(\mathbb{C})\, . \label{def_clust} 
\end{eqnarray}
\end{definition}
From $ii)$ follows that states for which average operators $X_N$ display vanishing variance in the large $N$ limit, feature the limiting point $X_\infty$ converging to multiples of the identity. It is indeed possible to show that Eq.~\eref{def_clust} implies the weak convergence of $X_N$ to $X_\infty=\langle x\rangle$, as defined by Def.~\eref{w-lim}. 

\subsection{Quantum fluctuation operators}

We have seen that average operators generate a commuting algebra, in the thermodynamic limit. As such, they provide a collective and classical description of the many-body quantum system. For the case in which one is interested to unveil (possible) quantum behavior of collective observables, the right quantities to take into account are combinations of microscopic operators of the form 
\begin{equation}\label{e:qFlu_general}
    F^{N}(x) = \frac{1}{\sqrt{N}}\sum_{k=1}^{N}\left(x^{(k)}-\omega(x)\right),
\end{equation}
for which $\omega(F^{N}(x)) = 0$. The operators above defined are referred to as local quantum fluctuations, as they provide a description of the `deviation' from the average operators, and represent an analog of the fluctuations defined for classical, stochastic variables \cite{Goderis89, Goderis90s}. Moreover, the presence of quantum features in these collective operators can be understood by considering commutators of quantum fluctuations, reading
$$[F^N(x), F^N(y)] = Z_N \, ,$$
with $Z_N$ an average observable. As a result, commutators of quantum fluctuations tend, in the weak operator topology [see Def.~\eref{w-lim}] to multiples of the identity, ${\rm (w\mbox{--}})\lim_{N\rightarrow \infty} Z_N = \omega(Z)$. This fact suggests that in the large $N$ limit the set of quantum fluctuation operators converge to a bosonic algebra. In the following, we summarize the main steps that show how to construct such an algebra (for a more exhaustive treatment see, e.g., Ref. \cite{BratteliR82,Verbeure10, Goderis89, Goderis90s, BenattiEtAl_JPA_17Qflu, BenattiEtAl18}. Before going ahead we also remark that, due to the scaling $\frac{1}{\sqrt{N}}$, the operators \eref{e:qFlu_general} provide a description at an intermediate level between the microscopic scale, given by strictly local operators, and the macroscopic degrees of freedom such as the average operators. Indeed, quantum fluctuations are collective operators, with a non-commutative character being the footprint of the microscopic level they emerge from. For this reason, they are said to identify the \textit{mesoscopic} level \cite{NarnhoferT_PRA_02, Benatti2016, BenattiEtAl18}.  

In order to construct the algebra of the quantum fluctuations, let us consider the set of single particle operators $\chi = \lbrace x_{i} \rbrace_{i=1}^{p} \in M_d(\mathbb{C})$, and the corresponding fluctuation operators $F^{N}(x_i)$:
\begin{definition}\label{normal_fluctuations}
The set $(\omega, \chi)$, with $\chi$ translational invariant, and $\omega$ clustering state [in the sense of Def.~\eref{clustering}], has normal fluctuation if, $\forall x_i, x_j \in \chi$,
$$\sum_{l\in\mathbb{Z}} |\omega(x_i^{(0)}x_j^{(l)})-\omega(x_i^{(0)})\omega(x_j^{(l)})| < +\infty, $$
i.e. $\omega$ is $L_1$ clustering, and
$$\lim_{N \rightarrow \infty} \omega\left([F^{N}(x_i)]^{2}\right) \equiv \Sigma_{ii}^{\omega} \qquad \lim_{N \rightarrow \infty} \omega( e^{i \alpha F^N(x_i)}) = e^{-\frac{\alpha^{2}}{2}\Sigma_{ii}^{\omega}} \, $$
\end{definition}
In other words, the set is specified by a characteristic function $ \omega( e^{i \alpha F^N(x_i)})$  converging to a gaussian function, the latter with covariance
$$\Sigma_{ij}^{\omega} = \frac{1}{2} \lim_{N \rightarrow \infty}\omega\left(\lbrace F^{N}(x_i), F^{N}(x_j)\rbrace \right).$$
Within the normal quantum fluctuations, and considering the linear, real span $\bm{\chi} = \lbrace x_{\vec{r}} = \sum_{i=1}^{p} r_i x_i, x_i \in \chi , (r_{1},...,r_{p}) \in \mathbb{R}^p \rbrace$, the following bilinear, positive and symmetric map is well-defined,
$$
(x_{\vec{r}_1}, x_{\vec{r}_2}) \rightarrow (x_{\vec{r}_1}, \Sigma^{\omega }x_{\vec{r}_2}) = \sum_{ij} r_{1i} r_{2j} \Sigma^{\omega }_{ij} \, ,
$$
as well as the symplectic bilinear form $\sigma^{\omega}$, that reads
$$\sigma^{\omega}_{kl}  = - i \lim_{N \rightarrow \infty} \omega( [F^N(x_k), F^N(x_l)]) \, , \qquad \sigma^{\omega}_{kl}= -\sigma^{\omega}_{lk} \, .$$
Notice that, with the above definition, the correlation matrix,
$$C_{ij}^{\omega} \equiv \lim_{N\rightarrow \infty} \omega (F^{N}(x_i)F^{N}(x_j)) \, ,$$
is well defined and it is related to the covariance matrix and the symplectic matrix by the following relation
$$C^{\omega} = \Sigma^{\omega} + \frac{i}{2} \sigma^{\omega}. $$ 

By means of the symplectic form  $\sigma^{\omega}$, one can construct the abstract Weyl algebra $\mathcal{W}(\chi, \sigma^{\omega})$, linearly generated by generic Weyl operators $W(\vec{r}) = W(\sum_{j=1}^{p} r_j x_j)$, $\vec{r} = (r_{1},...,r_{p}) \in \mathbb{R}^p$, obeying the following conditions
\begin{equation}\label{Weyl algebra}
\eqalign{
 i) \, & W(\vec{r})^{\dagger} =  W(-\vec{r}) \, , \\
ii) \, &  W(\vec{r}_1)W(\vec{r}_2) = W(\vec{r}_1+\vec{r}_2)e^{-\frac{i}{2} \vec{r}_1 \cdot(\sigma^{\omega} \vec{r}_2)} \, . 
}
\end{equation}
Upon introducing the notation $\vec{r} \cdot \vec{F}^{N} \equiv \sum_{i=1}^{p} r_i F^N(x_i) $, it can be now understood in which manner Weyl-like operators, $W^N(\vec{r}) \equiv e^{i \vec{r} \cdot \vec{F}^{N}}$ yield, in the large $N$ limit, Weyl operator $W(\vec{r})$. Indeed, it can be proven a theorem (see e.g. \cite{Verbeure10}) stating that any system $(\omega, \bm{\chi})$ with normal fluctuations admits a regular gaussian state $\Omega$ on $\mathcal{W}(\chi, \sigma^{\omega})$ such that $$\lim_{N \rightarrow \infty } \omega \left( W^N(\vec{r}_1)W^N(\vec{r}_2 \right)\cdots W^N(\vec{r}_n)) = \Omega(W(\vec{r}_1)W(\vec{r}_2)\cdots W(\vec{r}_n)) \, ,$$
    where $W^N(\vec{r}_j)$ obeys Eq.~\eref{Weyl algebra}, and 
    \begin{equation}\label{gaussianstateWeyl}
     \Omega(W(\vec{r})) = e^{-\frac{\vec{r}\cdot (\Sigma^{\omega} \vec{r})}{2}} \, , \qquad \forall \vec{r} \in \mathbb{R}^{p} \, . 
    \end{equation}
Notice that \eref{normal_fluctuations} guarantees the regular character of the Gaussian state $\Omega$. This in turn allows one to write a regular representation of Weyl operators acting on Hilbert spaces, $W(\vec{r}) = e^{i \vec{r} \cdot \vec{F}}$. Here, $\vec{F}$ is a $p$-dimensional vector, with components $F(x_i)$ that are obtained from the local quantum fluctuations through the so called \textit{mesoscopic limit}: 
\begin{definition}\label{def:mesoscopicLimit}
A sequence of operator $X_N \in \bf{\mathcal{A}}$ converges to the operator $X \in \mathcal{W}(\chi, \sigma^{\omega})$, 
$X \equiv (m)-\lim_{N \rightarrow \infty }X_{N}$ if and only if $$\Omega_{\vec{r}_1 \vec{r}_2}(X) = \lim_{N \rightarrow \infty} \omega_{\vec{r}_1 \vec{r}_2}(X_N) \, ,$$ where we have employed the notation $\omega_{\vec{r}_2 \vec{r}_2}(X_N) = \omega\left(W^N(\vec{r}_1) X_N W^N(\vec{r}_2)\right)$, and $\Omega_{\vec{r}_2 \vec{r}_2}(X) = \Omega\left(W(\vec{r}_1) X W(\vec{r}_2)\right)$.
\end{definition}
As a consequence of the above definition, by varying $\vec{r}_1, \vec{r}_2 \in \mathbb{R}^{p}$, the expectation values $\Omega_{\vec{r}_1, \vec{r}_2}(x)$ allows one to reconstruct complete information on the operator $X$ on the algebra $ \mathcal{W}(\chi, \sigma^{\omega})$ \cite{BratteliR82}.

\section{Model dynamical generators }
\label{sec_generators}

In this section we present the class of dynamical generators under investigation. To start with, we focus on many-body systems subject to a Markovian open quantum dynamics \cite{Lindblad76,BreuerP:2002}. This is described by means of a quantum master equation, with a time-independent dynamical generator in Gorini-Kossakowski-Sudarsan and Lindblad (GKS-Lindblad) form \cite{Lindblad76, GoriniKS76}. 

Any operator $O\in \mathcal{A}$ evolves in time according to the equation
\begin{equation}
\dot{O}(t)=\mathcal{L}_N[O(t)]\, ,
    \label{QME}
\end{equation}
where $\mathcal{L}_N$ identifies the GKS-Lindblad operator evolving observables, (i.e., the dual with respect
to the trace operation of the generator evolving states in the Hilbert space). The above equation is formally solved by $O(t)=e^{t\mathcal{L}_N}[O]$. Here, the label $N$ stresses that, according to the typical procedure for analyzing the emergent dynamics at infinite system size, we first define the dynamical generator $\mathcal{L}_N$ for finite, sized-$N$ systems, deriving only eventually the asymptotic dynamics (in the limit $N\to\infty$). 

Before further describing the form of the dynamical generator, let us introduce a basis, $\{v_\alpha\}_{\alpha=1}^{d^2}$, for the single-particle algebra $M_{d}(\mathbb{C})$. More specifically,  we consider the collection of operators $\{v_\alpha\}_{\alpha=1}^{d^2}$, to form an hermitian, $v_\alpha=v_\alpha^\dagger$, and orthogonal $\tr{(v_{\alpha} v_{\beta})}=\delta_{\alpha \beta}$ (implying  $\|v_\alpha\|\le 1$) basis. Any operator $x\in M_d(\mathbb{C})$ can thus be written as
\begin{equation}
x=\sum_{\alpha=1}^{d^2} \tr{(x \, v_{\alpha})}v_{\alpha}\, .
\end{equation}
We also introduce the structure coefficients $a_{\alpha\beta}^{\gamma}$ for the above defined basis,
\begin{equation}
[v_{\alpha},v_{\beta}] = \sum_{\gamma=1}^{d^2}  a_{\alpha\beta}^{\gamma}  v_{\gamma}, \quad a_{\alpha\beta}^{\gamma} \equiv \tr{([v_{\alpha},v_{\beta}] v_{\gamma})} \, ,
\end{equation}
which will be useful in the next Sections.

The GKS-Lindblad generator can be decomposed as
\begin{equation}
 \mathcal{L}_N[O]=i[H,O]+\sum_{\ell=1}^{q} \mathcal{D}_\ell[O]\, ,
    \label{Lindblad}
\end{equation}
where, $H$ is the Hamiltonian of the system and it reads
\begin{equation}\label{e0_totalHamiltonian}
H= \sum_{k=1}^{N} \sum_{\alpha=1}^{d^2} \epsilon_{\alpha} v_{\alpha}^{(k)} + \frac{1}{N} \sum_{k,j=1}^{N} \sum_{\alpha, \beta=1}^{d^2}  h_{\alpha \beta} v_{\alpha}^{(k)} v_{\beta}^{(j)}\, ,
\end{equation}
with $\epsilon_\alpha \in \mathbb{R}$, $h_{\alpha\beta}=h_{\beta\alpha}^*$. The first term on the right-hand side of the above equation represents a single-particle contribution. The second one represents an all-to-all, two-body interaction with a strength proportional to $1/N$. The terms contributing to the collection of maps   $\{ \mathcal{D}_\ell \}$, that we can dub dissipators, describe instead dissipative contributions to the time evolution. We take them to be of the form 
\begin{equation}
    \mathcal{D}_\ell[O]=\frac{1}{2}\sum_{k=1}^N \left([J_{\ell}^{k\,  \dagger}, O] J_{\ell}^{k} + J_{\ell}^{k\, \dagger }[ O, J_{\ell}^{k} ]  \right) \, ,
    \label{dissipator}
\end{equation}
with
\begin{equation}
J_\ell^{k}=j_\ell^{(k)}\Gamma_\ell(\Delta_N^\ell)
    \label{jumps}
\end{equation} 
being the jump operators. These operators are thus structured as follows: $j_\ell^{(k)}$ acts solely on site $k$; instead, $\Gamma_\ell(\Delta_N^\ell)=[\Gamma_\ell(\Delta_N^\ell)]^\dagger$ is an (hermitian) operator-valued function computed for the operator $\Delta_N^\ell=[\Delta_N^\ell]^\dagger$. The latter operators are taken to be real, linear combinations of average operators \eref{eq:average-operators}, i.e., 
\begin{equation}\label{eq:Delta_lc_average}
\Delta_N^\ell=\sum_{\alpha=1}^{d^2}r_{\ell\alpha} \left[\frac{1}{N}\sum_{k=1}^{N}v_\alpha^{(k)}\right]\,, \qquad \mbox{ with }r_{\ell \alpha}\in \mathbb{R}.
\label{delta}
\end{equation}
From their definition, the above defined linear combination are norm-bounded operators, i.e., $\|\Delta_N^\ell\|\le \delta_\ell$, where 
\begin{equation}
 \delta_\ell =\sum_{\alpha=1}^{d^2}|r_{\ell\alpha}|<\infty\, .
 \label{delta_alpha}
\end{equation}
In the rest of this work, we consider functions $\Gamma_\ell(\Delta_N^\ell)$ that satisfy the following:

\begin{assumption}
\label{Gamma}
The operator-valued functions $\Gamma_\ell(\Delta_N^\ell)$ can be written as power series 
$$
\Gamma_\ell(\Delta_N^\ell)=\sum_{n=0}^\infty c_\ell^n (\Delta_N^\ell)^{n}\, , 
$$
with coefficient $c_\ell^n$ such that for any $z\in\mathbb{R}$
\begin{equation}
    \gamma(z)=\sum_{n=0}^\infty |c_\ell^n||z|^n <\infty\, .
\end{equation}
The assumption on the series $\gamma(z)$ also implies that
$$
\gamma'(z):=\sum_{n=0}^\infty n|c_\ell^n||z|^{n-1} <\infty\, .
$$
\end{assumption}
Therefore, we are considering functions $\Gamma_\ell$ that admit a Taylor expansion, around zero, with infinite radius of convergence. Although this is a strong assumption, it allows us to find results for a broad class of dynamical generators. Considering Assumption~\eref{Gamma}, the following result holds true (for a proof, see Appendix of \cite{FiorelliEtALC_NJP_23})
\begin{lemma}
\label{lemma_commutators}
For any given operator-valued function $\Gamma_\ell(\Delta_N^\ell)$ satisfying Assumption~\eref{Gamma}, the following relations hold 
\begin{eqnarray*}
	&i) \left\| \left[\Gamma_\ell(\Delta_N^\ell), O\right] \right\| \le \frac{2N_O}{N}\|O\|\delta_\ell \gamma'(\delta_\ell) \, ,\\
	&ii) \left\| \left[\Gamma_\ell(\Delta_N^\ell), X_{N}\right] \right\| \le \frac{2}{N}\|x\| \delta_\ell  \gamma'(\delta_\ell)\, ,\\
	&iii) \left\| \left[\Gamma_\ell(\Delta_N^\ell),\left[\Gamma_\ell(\Delta_N^\ell), O\right]\right] \right\| \le \frac{4N_O^2}{N^2}\|O\|\delta_\ell^2 [\gamma'(\delta_\ell)]^2  \, , \\
	& iv)  \left\| \left[\Gamma_\ell(\Delta_N^\ell),\left[\Gamma_\ell(\Delta_N^\ell), X_N\right]\right] \right\| \le \frac{4}{N^2}\|x\|\delta_\ell^2 [\gamma'(\delta_\ell)]^2    \, ,
\end{eqnarray*}
with $O$ being any operator with strictly local support, $N_O$ the length of such support, and $X_N$ any average operator as defined in Eq.~\eref{eq:average-operators}.
\end{lemma}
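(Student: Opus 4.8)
\medskip
\noindent\textbf{Proof strategy (plan).} The starting point is the observation that each $\Delta_N^\ell$ is itself an average operator: setting $w_\ell\equiv\sum_{\alpha=1}^{d^2}r_{\ell\alpha}\,v_\alpha\in M_d(\mathbb{C})$, Eq.~\eref{delta} reads $\Delta_N^\ell=\frac{1}{N}\sum_{k=1}^{N}w_\ell^{(k)}$, with $\|w_\ell\|\le\sum_\alpha|r_{\ell\alpha}|\,\|v_\alpha\|\le\delta_\ell$ because $\|v_\alpha\|\le1$. The plan is then: (a) prove the single- and double-commutator estimates for the \emph{bare} operator $\Delta_N^\ell$ against a strictly local $O$ and against an average operator $X_N$; and (b) lift these to the power series $\Gamma_\ell(\Delta_N^\ell)=\sum_n c_\ell^n(\Delta_N^\ell)^n$ via a Leibniz-type identity for commutators with powers, the factor $\gamma'(\delta_\ell)$ appearing precisely as the resummation $\sum_{n\ge1}n|c_\ell^n|\delta_\ell^{n-1}$.

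For step (a): if $O$ is supported on $N_O$ sites, $[\Delta_N^\ell,O]=\frac1N\sum_{k\in\mathrm{supp}(O)}[w_\ell^{(k)},O]$ is a sum of at most $N_O$ terms of norm $\le 2\|w_\ell\|\|O\|\le 2\delta_\ell\|O\|$, so $\|[\Delta_N^\ell,O]\|\le\frac{2N_O}{N}\delta_\ell\|O\|$. For $X_N=\frac1N\sum_j x^{(j)}$ the double sum in $[\Delta_N^\ell,X_N]=\frac1{N^2}\sum_{k,j}[w_\ell^{(k)},x^{(j)}]$ collapses to the diagonal $k=j$ since operators on different sites commute, leaving $N$ terms and $\|[\Delta_N^\ell,X_N]\|\le\frac2N\delta_\ell\|x\|$. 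Iterating once more — and again using that $[w_\ell^{(k')},[w_\ell^{(k)},\cdot]]$ survives only when the supports overlap — gives $\|[\Delta_N^\ell,[\Delta_N^\ell,O]]\|\le\frac{4N_O^2}{N^2}\delta_\ell^2\|O\|$ (at most $N_O^2$ surviving terms, each of norm $\le 4\delta_\ell^2\|O\|$) and $\|[\Delta_N^\ell,[\Delta_N^\ell,X_N]]\|\le\frac4{N^2}\delta_\ell^2\|x\|$ (here both inner sums collapse to the single diagonal site).

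For step (b) I use the identity $[D^n,A]=\sum_{a=0}^{n-1}D^a[D,A]D^{n-1-a}$ with $D\equiv\Delta_N^\ell$, whence $\|[D^n,A]\|\le n\|D\|^{n-1}\|[D,A]\|\le n\delta_\ell^{n-1}\|[D,A]\|$; summing $[\Gamma_\ell(D),A]=\sum_{n\ge1}c_\ell^n[D^n,A]$ then yields $\|[\Gamma_\ell(D),A]\|\le\gamma'(\delta_\ell)\,\|[D,A]\|$, and inserting the step-(a) bounds for $A=O$ and $A=X_N$ proves $i)$ and $ii)$. For the double commutators I expand $[\Gamma_\ell(D),[\Gamma_\ell(D),A]]=\sum_{n,m\ge1}c_\ell^n c_\ell^m[D^n,[D^m,A]]$ and apply the identity twice: first $[D^n,[D^m,A]]=\sum_{a=0}^{m-1}D^a[D^n,[D,A]]D^{m-1-a}$, pulling $D^n$ through the commuting $D$-powers, then expanding $[D^n,[D,A]]$, which gives $\|[D^n,[D^m,A]]\|\le nm\,\delta_\ell^{n+m-2}\|[D,[D,A]]\|$; resummation produces the factor $[\gamma'(\delta_\ell)]^2$, and the step-(a) double-commutator bounds then give $iii)$ and $iv)$.

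There is no real obstacle here; the only point requiring care is the interchange of the infinite series with the operator-norm estimates and with the commutators, which is exactly what Assumption~\eref{Gamma} secures: $\gamma(\delta_\ell)<\infty$ makes $\Gamma_\ell(\Delta_N^\ell)$ a norm-convergent series (so the commutators are norm limits of partial sums and the triangle inequality applies term by term), while $\gamma'(\delta_\ell)<\infty$ makes the rearranged sums $\sum_n n|c_\ell^n|\delta_\ell^{n-1}$, and their squares, absolutely convergent. Everything else is bookkeeping of the combinatorial factors $n,m$ and $N_O$.
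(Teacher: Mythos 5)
Your proof is correct, and it follows the route this lemma is meant to be proved by (the paper itself only cites the appendix of Ref.~\cite{FiorelliEtALC_NJP_23} rather than reproducing the argument): write $\Delta_N^\ell=\frac{1}{N}\sum_k w_\ell^{(k)}$ with $\|w_\ell\|\le\delta_\ell$, exploit the collapse of the site sums onto the (finite or diagonal) support to get the bare $1/N$ and $1/N^2$ bounds, and lift to $\Gamma_\ell$ through $[D^n,A]=\sum_{a=0}^{n-1}D^a[D,A]D^{n-1-a}$, with $\gamma'(\delta_\ell)$ (and its square for the double commutators) arising from the resummation, exactly reproducing the stated constants. No gaps.
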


\begin{figure}[t]
\centering
\includegraphics[width=0.8\textwidth]{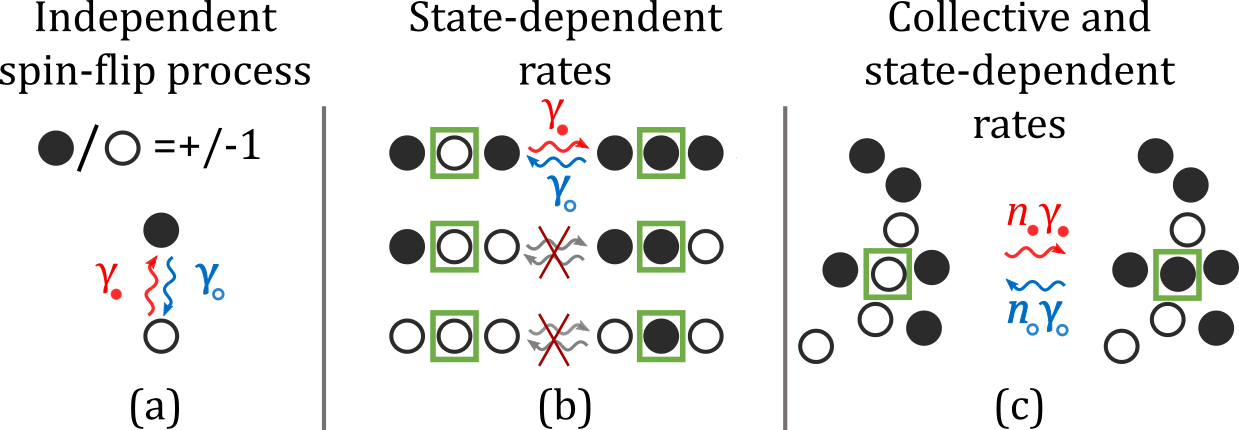}
\caption{{\bf Classical, collective state-dependent rates.} Representation of transitions in stochastic Markovian processes for systems made of classical Ising spins. Each one of the particle can be found in the configuration $\bullet / \circ$, corresponding to, e.g., excited state and ground state, respectively. In panel (a) a classical stochastic non-interacting dynamics is sketched. It consists of independent spin-flips, $\bullet {{\, }}^{\to}  /_{ \leftarrow} \circ$, occurring at rates $\gamma_{\circ /\bullet}$, respectively. Here, transitions rates are independent of the state of neighboring particles. Panel (b) represents an instance of kinetically-constrained model, in which the central particle can change its state only if the neighboring ones are both in the excited state $\bullet \,$. Finally, in panel (c), a collective all-to-all model is displayed, as the dynamics sketched in panel (b) features here transition rates that depend on the square of the density of excited particles, $n_\bullet$. }
\label{Fig1}
\end{figure} 

\subsection{Origin of dissipation with operator-valued rates}
\label{motivation}
Before moving further, we illustrate the motivation behind the choice of the dissipators with jumps operators defined by Eq.~\eref{jumps}. Let us consider an ensemble of $N$ classical (Ising) spin-$1/2$ particles, where each spin-$1/2$ particle can either be found in an {\it excited state} $\ket{\bullet}$ or in a {\it ground state} $\ket{\circ}$ [cf.~Fig.~\ref{Fig1}(a)]. For this system, the simplest stochastic Markovian dynamics is given by a non-interacting ``thermal" time evolution. Here, each particle undergoes an independent spin-flip process at rate  $\gamma_\circ$ [$\gamma_\bullet$] for the transition from the excited state to the ground state, $\ket{\bullet}\to\ket{\circ}$ [and viceversa], as shown in Fig.~\ref{Fig1}(a). This example does not show particularly interesting dynamical nor stationary features. 

More complex dynamics can emerge when the rate for the single-particle transitions does depend on the configuration of the rest of the system. We exemplify this case in [cf.~Fig.~\ref{Fig1}(b)], where the rate of flipping the $k$-th spin into the excited state depends on whether particles $k-1$ and $k+1$ are in their excited or in their ground state. Situations with configuration-dependent rates include, e.g., the so-called kinetically-constrained models \cite{fredrickson1984,cancrini2008,garrahan2011}, where only transitions satisfying given constraints are permitted, and collective all-to-all classical Hamiltonian functions, where transition rates depend on collective properties of the system \cite{glauber1963,walter2015}. To this regard, we can give an example of how to generalize the example illustrated in Fig.~\ref{Fig1}(b) to the case of collective rates. This can be achieved, e.g., by choosing rates that depend on the square of the operator $n_{\bullet}$, describing the density of excited states, $n_\bullet=\frac{1}{N}\sum_{k=1}^{N}n^{(k)}$, $n^{(k)}$ representing the operator $n=\ket{\bullet}\!\bra{\bullet}$ for the $k$th particle [see an illustration in Fig.~\ref{Fig1}(c)]. 
This dynamics, like any classical stochastic dynamics, can be embedded in the density-matrix formalism of open quantum systems \cite{Garrahan18}. One can indeed introduce a dynamical generator --- which preserves diagonal density matrices (see, e.g., Ref.~\cite{CarolloGK:JSP:21}) --- as follows \footnote{\label{gen-dualgen} The dynamical generators acting on density-matrices is identified with a $*$. Its dual operator, with respect to the trace operation, evolves instead operator.}
$$
\mathcal{D}^{*}[\rho]=\sum_{k=1}^N \left(J_\bullet^{k}\rho J_\bullet^{k\, \dagger }-\frac{1}{2}\left\{J_\bullet^{k\, \dagger } J_\bullet^{k},\rho \right\}\right)+\sum_{k=1}^N \left(J_\circ^{k}\rho J_\circ^{k\, \dagger }-\frac{1}{2}\left\{J_\circ^{k\, \dagger } J_\circ^{k},\rho \right\}\right)\, ,
$$
with  
$$
J_\bullet^k=\sqrt{\gamma_\bullet} \sigma_+^{(k)} n_\bullet\, , \qquad \qquad J_\circ^k=\sqrt{\gamma_\circ}\sigma_-^{(k)} n_\bullet\, ,
$$
and $\sigma_{+}=\ket{\bullet}\!\bra{\circ}$,  $\sigma_-=\sigma_+^\dagger$. Note that the rates are now operator-valued functions of a collective observable, namely the density of excited particles $n_\bullet$. 

Even though formulated in a quantum language, the above dynamics is fully classical. Nonetheless, it is straightforward to add quantum coherent Hamiltonian contributions to such a dissipative generator, and to investigate their impact on the behavior of the system. This can be done by considering the more general quantum master equation
\begin{equation}
\dot{\rho}_t=\mathcal{L}^*[\rho_t]:=-i[H,\rho_t]+\mathcal{D}^*[\rho_t]\, .
\label{Lindblad-generator-Schr}
\end{equation}
As introduced in the previous section, these are the type of GKS-Lindblad generators we consider: with single-particle and all-to-all interacting Hamiltonian $H$, and dissipation characterized by collective operator-valued functions. In relation to the latter, the argument of the operator-valued function is an average operator, $\Delta^{\ell}_N$, so that the squared function $\Gamma^2_\ell(\Delta_N^\ell)$ should have the structure of an average operator on clustering states. This consideration, together with the last example [cf. Fig.~\ref{Fig1}(c)], is meant to introduce the idea that, given the jump operator $J^k_\ell$ of Eq.~\eref{jumps}, the squared function $\Gamma^2_\ell(\Delta_N^\ell)$ behaves as an operator-valued \textit{rate} for the transition implemented by the on-site operator $j^{(k)}_\ell$. Rigorously, this has been shown in Ref.\cite{FiorelliEtALC_NJP_23} and reported in the next section. 

\section{Dynamics at macroscopic scale}
\label{sec4}

In this section, we consider the behavior of the average operators introduced by Eq.~\eref{eq:average-operators}. As they give rise, in the thermodynamic limit, to a commuting algebra, they account for a collective and \textit{macroscopic} description of the many-body system. The results shown in this Section summarize the main contribution of Ref.~\cite{FiorelliEtALC_NJP_23}, i.e., a proof of the validity of the mean-field treatment with respect to the average operators introduced in Sec. \ref{sec3}, in the thermodynamic limit. 

Before focusing on the relation between the evolution of the averages operator and their corresponding equations under mean-field theory, we state a preliminary result concerning the action of the map on strictly local operators, and on average operators:

\begin{lemma}
\label{Cor_diss_dyn} 
The maps $\mathcal{D}_\ell$ defined by Eqs.~\eref{dissipator}-\eref{jumps} with functions $\Gamma_\ell(\Delta_N^\ell)$ obeying Assumption~\eref{Gamma} are such that 
\begin{eqnarray*}
	& \left\|\mathcal{D}_\ell[O]-\Gamma_{\ell}^2(\Delta_N^\ell) \mathcal{D}_\ell^{\rm Loc}[O]\right\|\le \frac{C_O}{N}\, ,\\
	& \left\|\mathcal{D}_\ell[X_{N}]-\Gamma_{\ell}^2(\Delta_N^\ell) \mathcal{D}_\ell^{\rm Loc}[X_N]\right\|\le \frac{C_{x}}{N}\, , \qquad 
\end{eqnarray*}
with 
\begin{equation}
\mathcal{D}^{\rm Loc}_\ell[A]=\frac{1}{2}\sum_{k=1}^N \left([{j}_{\ell}^{ \dagger\, (k)}, A] {j}_{\ell}^{(k)} + {j}_{\ell}^{ \dagger\, (k)}[A, {j}_{\ell}^{(k)}]  \right)\, .
\label{D_loc}
\end{equation} 
In the above expression, $O$ is any local operator with support on a finite number of sites, $N_O$ is the extension of its support, and $X_N$ the average operator constructed from the  single-particle operator $x$, as shown in Eq.~\eref{eq:average-operators}, and $C_O$, $C_{x}$ are $N$ independent constants, reading
$$
C_O= 2 N_O \|O\| \| j_{\ell} \|^2 \left\lbrace \delta_{\ell}\gamma'(\delta_{\ell})[ \delta_{\ell} \gamma'(\delta_{\ell})(1+N_O)+ 3 N_O \gamma(\delta_{\ell})]  + 2 \gamma(\delta_{\ell}) \delta^2_{\ell} \gamma''(\delta_{\ell}) \right\rbrace \, ,
$$
$$C_{x} = 2 \| x \| \| j_{\ell} \|^2 \left\lbrace [2 \delta_{\ell}\gamma'(\delta_{\ell}) +3\gamma(\delta_{\ell})]\delta_{\ell}\gamma'(\delta_{\ell}) + 2 \gamma(\delta_{\ell}) \delta^2_{\ell} \gamma''(\delta_{\ell})\right\rbrace \, ,
$$ 
respectively.
\end{lemma}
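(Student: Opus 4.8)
The plan is to isolate, by an exact algebraic rearrangement, the difference $\mathcal{D}_\ell[O]-\Gamma_\ell^2(\Delta_N^\ell)\,\mathcal{D}^{\rm Loc}_\ell[O]$ as a sum of a few commutator terms, and then to estimate each of them using Lemma~\ref{lemma_commutators} together with one short additional lemma. Write $\Gamma_\ell\equiv\Gamma_\ell(\Delta_N^\ell)$; since $\Gamma_\ell$ is hermitian the jump operators \eref{jumps} read $J_\ell^{k}=j_\ell^{(k)}\Gamma_\ell$ and $J_\ell^{k\,\dagger}=\Gamma_\ell\, j_\ell^{(k)\dagger}$. Inserting these into the dissipator \eref{dissipator}, written in the equivalent form $\mathcal{D}_\ell[O]=\sum_k\big(J_\ell^{k\,\dagger}O J_\ell^{k}-\frac12\{J_\ell^{k\,\dagger}J_\ell^{k},O\}\big)$, expanding the commutators with the Leibniz rule, and using $\sum_{k=1}^N j_\ell^{(k)\dagger}j_\ell^{(k)}=N\,Q_N$ with $Q_N:=\frac1N\sum_{k=1}^N j_\ell^{(k)\dagger}j_\ell^{(k)}$ an average operator, one obtains the identity
\[
\mathcal{D}_\ell[O]=\Gamma_\ell\,\mathcal{D}^{\rm Loc}_\ell[O]\,\Gamma_\ell+\frac N2\Big(\Gamma_\ell Q_N[O,\Gamma_\ell]-[O,\Gamma_\ell]Q_N\Gamma_\ell\Big),
\]
and hence, using $\Gamma_\ell A\Gamma_\ell-\Gamma_\ell^2 A=\Gamma_\ell[A,\Gamma_\ell]$,
\[
\mathcal{D}_\ell[O]-\Gamma_\ell^2\,\mathcal{D}^{\rm Loc}_\ell[O]=\Gamma_\ell\,[\mathcal{D}^{\rm Loc}_\ell[O],\Gamma_\ell]+\frac N2\Big(\Gamma_\ell Q_N[O,\Gamma_\ell]-[O,\Gamma_\ell]Q_N\Gamma_\ell\Big).
\]
No locality of $O$ is used in this step, so the same identity holds verbatim with $O$ replaced by an average operator $X_N$.

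Next I would bound the first term, $\Gamma_\ell[\mathcal{D}^{\rm Loc}_\ell[O],\Gamma_\ell]$. For strictly local $O$ the operator $\mathcal{D}^{\rm Loc}_\ell[O]$ is itself strictly local, supported on at most $N_O$ sites, with $\|\mathcal{D}^{\rm Loc}_\ell[O]\|\le2N_O\|j_\ell\|^2\|O\|$; combining part $i)$ of Lemma~\ref{lemma_commutators} with $\|\Gamma_\ell\|\le\gamma(\delta_\ell)$ (which follows from Assumption~\ref{Gamma} and $\|\Delta_N^\ell\|\le\delta_\ell$) bounds this term by a numerical constant times $N_O^2\|j_\ell\|^2\|O\|\,\gamma(\delta_\ell)\delta_\ell\gamma'(\delta_\ell)/N$. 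In the average-operator case one uses instead that $\mathcal{D}^{\rm Loc}_\ell[X_N]$ is itself an average operator, built from the single-site operator $w=\frac12([j_\ell^\dagger,x]j_\ell+j_\ell^\dagger[x,j_\ell])$ with $\|w\|\le2\|j_\ell\|^2\|x\|$, so that part $ii)$ of Lemma~\ref{lemma_commutators} gives the analogous $O(1/N)$ bound with $N_O^2\|O\|$ replaced by $\|x\|$.

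For the second term I would set $G:=[O,\Gamma_\ell]$ (resp. $[X_N,\Gamma_\ell]$) and expand $\Gamma_\ell Q_N G-G Q_N\Gamma_\ell=[\Gamma_\ell,G]\,Q_N+\Gamma_\ell\,[Q_N,G]+G\,[\Gamma_\ell,Q_N]$. Here $[\Gamma_\ell,G]\,Q_N=-[\Gamma_\ell,[\Gamma_\ell,O]]\,Q_N$ is $O(1/N^2)$ by part $iii)$ of Lemma~\ref{lemma_commutators} (resp. part $iv)$), and $G\,[\Gamma_\ell,Q_N]$ is $O(1/N^2)$ since $\|G\|=O(N_O/N)$ (part $i)$) and $\|[\Gamma_\ell,Q_N]\|=O(1/N)$ (part $ii)$ with seed operator $j_\ell^\dagger j_\ell$); after the prefactor $N/2$ both are $O(1/N)$. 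The remaining summand $\Gamma_\ell[Q_N,[O,\Gamma_\ell]]$ is the crux and is \emph{not} covered by Lemma~\ref{lemma_commutators}: the naive bound $\|[Q_N,[O,\Gamma_\ell]]\|\le2\|Q_N\|\,\|[O,\Gamma_\ell]\|=O(N_O/N)$ leaves an $O(1)$ term after multiplication by $N/2$. I would therefore prove a short auxiliary lemma (of the kind announced for Appendix~\ref{app:B}) showing $\|[Q_N,[O,\Gamma_\ell]]\|\le\frac{4N_O\|j_\ell\|^2\|O\|}{N^2}\big(\delta_\ell^2\gamma''(\delta_\ell)+N_O\,\delta_\ell\gamma'(\delta_\ell)\big)$: expand $\Gamma_\ell=\sum_n c_\ell^n(\Delta_N^\ell)^n$ and $[O,(\Delta_N^\ell)^n]=\sum_{m=0}^{n-1}(\Delta_N^\ell)^m[O,\Delta_N^\ell](\Delta_N^\ell)^{n-1-m}$, and apply the Leibniz rule a \emph{second} time to commute $Q_N$ through each such product — commutators of $Q_N$ with powers of $\Delta_N^\ell$ are commutators of average operators, of size $O(\delta_\ell^m/N)$ each, while the local core $[O,\Delta_N^\ell]$ has norm $O(N_O\|O\|\delta_\ell/N)$, so commuting $Q_N$ through it costs a further $O(N_O/N)$. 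Summing over $m$ brings down factors $n$ and $n-1$, and the final sum over $n$ against $|c_\ell^n|\delta_\ell^n$ reproduces exactly $\delta_\ell^2\gamma''(\delta_\ell)$ and $N_O\,\delta_\ell\gamma'(\delta_\ell)$; this is the point where the finiteness of $\gamma''$ — i.e. the infinite radius of convergence assumed in Assumption~\ref{Gamma} — is needed, and where the $\gamma''$ contributions to $C_O,C_x$ originate. Multiplying by $\|\Gamma_\ell\|\le\gamma(\delta_\ell)$ and by $N/2$ makes this term $O(1/N)$; the average-operator case is identical, with $[X_N,\Delta_N^\ell]$ (which is $1/N$ times an average operator) in place of the core, so $N_O\|O\|$ becomes $\|x\|$.

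Collecting the three estimates and regrouping the $N$-independent numerical prefactors yields $\|\mathcal{D}_\ell[O]-\Gamma_\ell^2\mathcal{D}^{\rm Loc}_\ell[O]\|\le C_O/N$ and $\|\mathcal{D}_\ell[X_N]-\Gamma_\ell^2\mathcal{D}^{\rm Loc}_\ell[X_N]\|\le C_x/N$, with $C_O$ and $C_x$ of the stated form (consistently, $C_x$ is the $N_O=1$, $\|O\|\to\|x\|$ specialization of $C_O$). The main obstacle is the double commutator $[Q_N,[O,\Gamma_\ell]]$ appearing in the third step: everything else is bookkeeping of commutator norms already supplied by Lemma~\ref{lemma_commutators}, whereas this term genuinely requires a second-order expansion of $\Gamma_\ell$ and a careful resummation of the resulting series, and it is precisely where the full strength of Assumption~\ref{Gamma} — notably $\gamma''<\infty$ — is indispensable.
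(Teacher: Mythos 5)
Your proposal is correct, and it is essentially self-contained: the paper itself does not prove this lemma (it is imported from Ref.~\cite{FiorelliEtALC_NJP_23}), so there is no in-text argument to compare against, but your route reproduces exactly the structure encoded in the stated constants. The exact splitting $\mathcal{D}_\ell[O]-\Gamma_\ell^2\mathcal{D}_\ell^{\rm Loc}[O]=\Gamma_\ell[\mathcal{D}_\ell^{\rm Loc}[O],\Gamma_\ell]+\tfrac{N}{2}\bigl(\Gamma_\ell Q_N[O,\Gamma_\ell]-[O,\Gamma_\ell]Q_N\Gamma_\ell\bigr)$ checks out, the three commutator estimates via Lemma~\ref{lemma_commutators} are applied correctly (including the observation that $\mathcal{D}_\ell^{\rm Loc}[O]$ is local of support $N_O$ with norm $\le 2N_O\|j_\ell\|^2\|O\|$, and that $\mathcal{D}_\ell^{\rm Loc}[X_N]$ is an average operator of a single-site $w$ with $\|w\|\le 2\|j_\ell\|^2\|x\|$), and your auxiliary estimate $\|[Q_N,[O,\Gamma_\ell]]\|\le \frac{4N_O\|j_\ell\|^2\|O\|}{N^2}\bigl(\delta_\ell^2\gamma''(\delta_\ell)+N_O\,\delta_\ell\gamma'(\delta_\ell)\bigr)$ is indeed the missing ingredient and is provable exactly as you sketch, the finiteness of $\gamma''$ following from the infinite radius of convergence in Assumption~\ref{Gamma}. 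Carrying out the bookkeeping, your bounds sum to $2N_O\|O\|\|j_\ell\|^2\{(1+N_O)\delta_\ell^2\gamma'^2(\delta_\ell)+3N_O\gamma(\delta_\ell)\delta_\ell\gamma'(\delta_\ell)+\gamma(\delta_\ell)\delta_\ell^2\gamma''(\delta_\ell)\}/N$ and its $N_O=1$, $\|O\|\to\|x\|$ analogue, i.e.\ constants at most (in fact slightly sharper than, on the $\gamma''$ term) the stated $C_O$ and $C_x$, so the claimed inequalities follow.
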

That is, in the thermodynamic limit, the maps $\mathcal{D}_{\ell}$ act on strictly local operators and average ones as if they were local maps weighted by a pre-factor equal to $\Gamma_{\ell}^2(\Delta_N^\ell)$. This together with Assumption~\eref{Gamma}, clarifies that the considered jump operators implement \textit{local} transitions associated with operator-valued \textit{rates}. 

\subsection{Heisenberg equations and mean-field dynamics}
\label{Heis-MF}

Let us consider average operators of the type defined by Eq.~\eref{eq:average-operators}. By means of the single-particle algebra $\{v_{\alpha} \}_{\alpha=1}^{d^2}$, we can construct a basis for all possible average operators, and deal with the set
\begin{equation}
    m_\alpha^N=\frac{1}{N}\sum_{k=1}^N v_\alpha^{(k)}\, , \qquad \alpha=1,2,\dots d^2\, ,
\end{equation}
In order to derive their time evolution, $e^{t\mathcal{L}_N}[m_\alpha^N]$, in the thermodynamic limit, and starting from a translation invariant clustering state as in Definition \ref{clustering}, one needs first to compute the Heisenberg equations of motion. These read
\begin{equation}
    \frac{d}{dt} e^{t\mathcal{L}_N}[m_\alpha^N]=\mathcal{L}_N\left[e^{t\mathcal{L}_N}[m_\alpha^N]\right]=e^{t\mathcal{L}_N}\left[\mathcal{L}_N [m_\alpha^N]\right]\, ,
    \label{eq:Heisenberg}
\end{equation}
and require, in turn, to control the action of the GKS-Lindblad operator on average operators, $\mathcal{L}_N [m_\alpha^N]$. 
\begin{lemma}
\label{lemma_gen_action}
Given the generator $\lin_N$ specified by Eqs.~\eref{Lindblad}-\eref{jumps}, with functions $\Gamma_\ell(\Delta_N^\ell)$ obeying Assumption~\eref{Gamma}, we have that 
$$
\| \mathcal{L}_N[m^{N}_{\alpha}] -  f_\alpha(\vec{m}^N)\| \leq \frac{C_{L}}{N}
$$
where 
\begin{eqnarray*}
& f_\alpha(\vec{m}^N) =  i \sum_{\beta = 1}^{d^2} A_{\alpha \beta} m^{N}_{\beta} + i \sum_{\beta,\gamma=1}^{d^2} B_{\alpha \beta \gamma} m^{N}_{\beta} m^{N}_{\gamma} + \sum_{\ell, \beta} M_{\ell \alpha }^{\beta} \Gamma^{2}_{\ell}(\Delta^{\ell}_N) m^{N}_{\beta} \\
& A_{\alpha \beta} = \sum_{\beta'=1}^{d^2} \epsilon_{\beta'} a_{\beta' \alpha}^{\beta} \quad B_{\alpha \beta \gamma} = \sum_{\beta'=1}^{d^2} a_{\beta' \alpha}^{\gamma}(h_{\beta \beta'} + h_{\beta' \beta})\, .
\end{eqnarray*} 
Here, $M$ is a real matrix, such that the action of $\mathcal{D}^{\mathrm{Loc}}_{\ell}[\cdot]$ on an element of the single-site operator basis $v_\alpha^{(k)}$ reads
$$
\mathcal{D}_\ell^{\rm Loc}[v_\alpha^{(k)}]=\sum_{\beta=1}^{d^2}M_{\ell \alpha}^{\beta} v_\beta^{(k)}\, . 
$$
$C_{L}$ is an $N$-independent bounded quantity, that reads
$$C_{L} = d^8 h_{{\mathrm {max}}} a^{2}_{{\mathrm {max}}} +qC_v \, $$
where $h_{{\mathrm {max}}} = {\mathrm {max}}_{\beta,\beta'} h_{\beta, \beta'}$ ,  $a_{{\mathrm {max}}} = {\mathrm {max}}_{\alpha, \beta, \gamma} a_{\alpha  \beta}^{\gamma}$, and \\ $C_v = {\mathrm {max}}_{\forall \ell} \left\lbrace 2 \| j_{\ell} \|^2  [2\delta_{\ell} \gamma'(\delta_{\ell}) +3\gamma(\delta_{\ell})] \delta_{\ell} \gamma'(\delta_{\ell}) +2 \gamma(\delta_{\ell}) \delta^2_{\ell} \gamma''(\delta_{\ell}) \right\rbrace$.
\end{lemma}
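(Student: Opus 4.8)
The plan is to split the generator as $\mathcal{L}_N=i[H,\cdot]+\sum_{\ell=1}^{q}\mathcal{D}_\ell$ and treat the two pieces separately: I would show that the Hamiltonian part reproduces the linear and quadratic terms of $f_\alpha(\vec m^N)$ up to an $O(1/N)$ error in operator norm, and that the dissipative part reproduces the $\Gamma_\ell^2$-weighted term up to an $O(1/N)$ error; the constant $C_L$ is then the sum of the two error constants.

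For the Hamiltonian term $i[H,m_\alpha^N]$ I would insert the explicit form~\eref{e0_totalHamiltonian}, expand the commutator, and use that operators on distinct sites commute, so that only same-site contributions survive, rewriting single-site commutators through the structure coefficients $[v_{\beta'},v_\alpha]=\sum_\gamma a_{\beta'\alpha}^\gamma v_\gamma$. The single-particle term then yields \emph{exactly} $i\sum_\beta A_{\alpha\beta}m_\beta^N$ with $A_{\alpha\beta}=\sum_{\beta'}\epsilon_{\beta'}a_{\beta'\alpha}^\beta$. For the all-to-all term I would use $[AB,C]=A[B,C]+[A,C]B$, resum the site indices into products of average operators $m_\beta^N m_\gamma^N$, and relabel dummy indices to recognise $B_{\alpha\beta\gamma}=\sum_{\beta'}a_{\beta'\alpha}^\gamma(h_{\beta\beta'}+h_{\beta'\beta})$. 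The only approximation enters because one of the two resulting families of quadratic terms is produced in the order $m_\gamma^N m_\beta^N$; reordering it costs the commutator $[m_\gamma^N,m_\beta^N]=\frac1N\sum_\delta a_{\gamma\beta}^\delta m_\delta^N$, of norm at most $d^2 a_{\mathrm{max}}/N$. Summing the $d^6$ index combinations with weights $|h_{\beta'\beta}|\,|a_{\beta'\alpha}^\gamma|\le h_{\mathrm{max}}a_{\mathrm{max}}$ bounds this contribution by $d^8 h_{\mathrm{max}}a_{\mathrm{max}}^2/N$, the first term of $C_L$.

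For the dissipative term I would invoke Lemma~\ref{Cor_diss_dyn} in its average-operator version with $X_N=m_\alpha^N$, i.e.\ $x=v_\alpha$ with $\|v_\alpha\|\le1$, giving $\|\mathcal{D}_\ell[m_\alpha^N]-\Gamma_\ell^2(\Delta_N^\ell)\mathcal{D}_\ell^{\rm Loc}[m_\alpha^N]\|\le C_v/N$ with $C_v$ as in the statement. Since $\mathcal{D}_\ell^{\rm Loc}$ is built from strictly single-site maps, $\mathcal{D}_\ell^{\rm Loc}[m_\alpha^N]=\frac1N\sum_k\mathcal{D}_\ell^{\rm Loc}[v_\alpha^{(k)}]=\sum_\beta M_{\ell\alpha}^\beta m_\beta^N$ is an exact identity, and since the $M_{\ell\alpha}^\beta$ are scalars, $\Gamma_\ell^2(\Delta_N^\ell)\mathcal{D}_\ell^{\rm Loc}[m_\alpha^N]=\sum_\beta M_{\ell\alpha}^\beta\Gamma_\ell^2(\Delta_N^\ell)m_\beta^N$. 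Summing over the $q$ dissipators contributes $qC_v/N$, so that altogether $C_L=d^8 h_{\mathrm{max}}a_{\mathrm{max}}^2+qC_v$.

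No analytic input beyond Lemmas~\ref{lemma_commutators} and~\ref{Cor_diss_dyn} is required; the routine but error-prone part is the index bookkeeping for the quadratic Hamiltonian term, and in particular isolating the single $O(1/N)$ discrepancy that is due to the non-commutativity of average operators at finite $N$. I would also remark that the left-to-right ordering $\Gamma_\ell^2(\Delta_N^\ell)m_\beta^N$ fixed in $f_\alpha$ is immaterial at this order, since $\|[\Gamma_\ell^2(\Delta_N^\ell),m_\beta^N]\|=O(1/N)$ by Lemma~\ref{lemma_commutators}; the stated $f_\alpha$ merely picks a convenient convention.
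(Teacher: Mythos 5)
Your proposal is correct and takes essentially the approach the paper relies on (the proof itself is deferred to Ref.~\cite{FiorelliEtALC_NJP_23}, but the structure of $C_L$ reflects exactly your decomposition): the single-particle and two-body commutators are evaluated via the structure constants, with the sole $O(1/N)$ error coming from reordering one family of products of average operators, bounded by $d^8 h_{\mathrm{max}}a_{\mathrm{max}}^2/N$, while the dissipative part is handled by Lemma~\ref{Cor_diss_dyn} with $x=v_\alpha$, contributing $qC_v/N$. Your bookkeeping of the quadratic term and the identification $\mathcal{D}_\ell^{\rm Loc}[m_\alpha^N]=\sum_\beta M_{\ell\alpha}^\beta m_\beta^N$ are both exact as claimed, so no gap remains.
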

By Lemma \ref{lemma_gen_action}, we see that, in the thermodynamic limit, the GKS-Lindblad generator acts on average operators $m_{\alpha}^N$ as a nonlinear function $f_\alpha$ of the average operators themselves, for any $\alpha$. However, being open quantum dynamics not represented by an automorphism (i.e. there appear $e^{t \linn}[m_{\alpha}^N m_{\beta}^N] \neq e^{t \linn}[m_{\alpha}^N]e^{t \linn}[ m_{\beta}^N]$), the nonlinear character of the function $f_\alpha$ prevents the Heisenberg equations \eref{eq:Heisenberg} to close on the set operators $e^{t\mathcal{L}_N}[m_\alpha^N]$. Moreover, the action of the generator on the function $f_\alpha$ gives rise, in the thermodynamic limit, to an infinite hierarchy of equations, which is hardly solvable. 

Nonetheless, the evolution equations for the expectation values of the average operators can be derived from Eq.~\eref{eq:Heisenberg}. By defining $\omega_t\left(A\right):=\omega\left(e^{t\mathcal{L}_N}[A]\right)$ and by Lemma \ref{lemma_gen_action}, we have
\begin{equation}
    \frac{d}{dt} \omega_t\left(m_\alpha^N\right)\approx \omega_t\left(f_\alpha(\vec{m}^N)\right)\, .
    \label{eq:expect}
\end{equation}
At this point, to make some progresses, one would assume that expectations of products of average operators factorize into the product of the expectations, $  \omega_t(m^N_\alpha m^N_\beta)\approx \omega_t(m^N_\alpha)\omega_t(m^N_\beta)$, yielding the so-called mean-field equations of motion for the considered GKS-Lindblad generator,
\begin{equation}
    \frac{d}{dt} m_\alpha (t)= f_\alpha(\vec{m})\, .
    \label{eq:mean-field}
\end{equation}
In the above expression, $m_\alpha(t)$ represents a time-dependent function that could reproduce the dynamics of $\omega_t\left(m_\alpha^N\right)$ in the thermodynamic limit. For this to be case, the initial conditions must be appropriately chosen, i.e.
\begin{equation}
    m_\alpha(0)=\lim_{N\to\infty}\omega (m_\alpha^N)\, .
    \label{init-cond}
\end{equation} 
However, albeit the system of Eqs.~\eref{eq:mean-field}, \eref{init-cond} is in principle solvable, nothing yet guarantees that it provides the exact dynamics of average operators. In fact, Ref.~\eref{eq:mean-field} rigorously shows the exactness of the mean-field equations. It does so by introducing the cost function \begin{equation}
\E (t) = \sum_{\alpha = 1}^{d^2} \omega_t ([m_{\alpha}^{N} -m_{\alpha}(t)]^{2}) \,,
\end{equation} and proving the following:
\begin{theorem}\label{theorem}
Given a generator as the one in Eqs.~\eref{Lindblad}-\eref{delta}, with functions $\Gamma_{\ell}(\Delta^{\ell}_N)$ satisfying Assumption~\eref{Gamma}, 
\begin{equation}\label{e_theorem}
if \quad \lim_{N \rightarrow \infty} \E(0) = 0, \quad then \quad  \lim_{N \rightarrow \infty} \E(t) = 0,  \forall  t< \infty .
\end{equation}
\end{theorem}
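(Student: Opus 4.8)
The plan is to control the time derivative of the cost function $\E(t)$ and show it satisfies a Gr\"onwall-type differential inequality whose coefficients are $N$-independent, so that $\E(t) \le [\E(0) + O(1/N)]\,e^{Kt}$ for some constant $K$, and then let $N\to\infty$. First I would compute $\frac{d}{dt}\E(t)$ using $\frac{d}{dt}\omega_t(A) = \omega_t(\mathcal{L}_N[A])$. Writing each summand as $\omega_t([m_\alpha^N - m_\alpha(t)]^2)$, the derivative produces two kinds of terms: (i) the ``dissipative broadening'' term $\sum_\alpha \omega_t\big(\mathcal{L}_N[(m_\alpha^N)^2] - 2 m_\alpha^N \mathcal{L}_N[m_\alpha^N]\big)$, which is the non-automorphism defect and must be shown to be $O(1/N)$, and (ii) the ``drift'' terms coming from $-2\dot m_\alpha(t)\,\omega_t(m_\alpha^N - m_\alpha(t))$ together with $2\,\omega_t\big((m_\alpha^N - m_\alpha(t))\mathcal{L}_N[m_\alpha^N]\big)$.

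For term (ii) I would insert Lemma~\ref{lemma_gen_action}: replace $\mathcal{L}_N[m_\alpha^N]$ by $f_\alpha(\vec m^N)$ at the cost of $C_L/N$, and replace $\dot m_\alpha(t)$ by $f_\alpha(\vec m(t))$ exactly via the mean-field equation~\eref{eq:mean-field}. Then the combination becomes $2\sum_\alpha \omega_t\big((m_\alpha^N - m_\alpha(t))[f_\alpha(\vec m^N) - f_\alpha(\vec m(t))]\big)$ plus an $O(1/N)$ remainder. The key structural point is that $f_\alpha$ is at most quadratic in the $m_\beta^N$ plus the term $\sum_{\ell\beta} M_{\ell\alpha}^\beta \Gamma_\ell^2(\Delta_N^\ell) m_\beta^N$; since all average operators are norm-bounded ($\|m_\beta^N\|\le 1$, $\|\Gamma_\ell^2(\Delta_N^\ell)\|\le \gamma(\delta_\ell)^2$) and commutators among them are $O(1/N)$ by Lemma~\ref{lemma_commutators} and the remarks on the emergent classical algebra, the difference $f_\alpha(\vec m^N) - f_\alpha(\vec m(t))$ can be written — up to $O(1/N)$ reordering errors — as a sum of terms each containing a factor $(m_\beta^N - m_\beta(t))$ with bounded operator coefficients. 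Applying Cauchy--Schwarz in the state $\omega_t$ (i.e. $|\omega_t(AB)| \le \|A\|\,\omega_t(B^\dagger B)^{1/2}$ type bounds, or the polarization $|\omega_t((m_\alpha^N-m_\alpha)(m_\beta^N-m_\beta))| \le \tfrac12[\omega_t((m_\alpha^N-m_\alpha)^2)+\omega_t((m_\beta^N-m_\beta)^2)]$) bounds these cross terms by $K\,\E(t)$ for an $N$-independent $K$ built from $A_{\alpha\beta}$, $B_{\alpha\beta\gamma}$, $M_{\ell\alpha}^\beta$, $\gamma(\delta_\ell)$, $\gamma'(\delta_\ell)$.

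For term (i), the non-automorphism defect, I would note that $\mathcal{L}_N[(m_\alpha^N)^2] - m_\alpha^N\mathcal{L}_N[m_\alpha^N] - \mathcal{L}_N[m_\alpha^N]m_\alpha^N$ equals the ``carr\'e du champ'' $\sum_\ell \mathcal{D}_\ell$-bilinear piece plus commutator corrections, and that each contribution involves either a commutator of $m_\alpha^N$ with a single-site jump (giving $1/N$) or the bilinear dissipative form $\sum_k [j_\ell^{\dagger(k)}, m_\alpha^N][m_\alpha^N, j_\ell^{(k)}]$, which with the $1/N$ prefactors in $m_\alpha^N$ is itself $O(1/N)$ in norm. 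Using Lemma~\ref{Cor_diss_dyn} to strip $\Gamma_\ell^2(\Delta_N^\ell)$ off the dissipator up to $O(1/N)$ makes the bookkeeping clean. Collecting everything gives $\frac{d}{dt}\E(t) \le K\,\E(t) + C/N$, hence $\E(t) \le e^{Kt}\E(0) + \frac{C}{KN}(e^{Kt}-1)$, and taking $N\to\infty$ with $\E(0)\to 0$ yields the claim for every finite $t$.

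The main obstacle I expect is term (ii): making rigorous the step ``$f_\alpha(\vec m^N) - f_\alpha(\vec m(t))$ is a sum of bounded operators times $(m_\beta^N - m_\beta(t))$ up to $O(1/N)$.'' The quadratic term $B_{\alpha\beta\gamma} m_\beta^N m_\gamma^N$ requires writing $m_\beta^N m_\gamma^N - m_\beta m_\gamma = (m_\beta^N - m_\beta)m_\gamma^N + m_\beta(m_\gamma^N - m_\gamma)$, which is fine, but the $\Gamma_\ell^2(\Delta_N^\ell) m_\beta^N$ term is genuinely nonlinear and nonpolynomial: one must expand $\Gamma_\ell^2(\Delta_N^\ell) - \Gamma_\ell^2(\Delta^\ell(t))$ using Assumption~\ref{Gamma} (telescoping $(\Delta_N^\ell)^n - (\Delta^\ell(t))^n$ into $n$ terms, each a bounded operator times $(\Delta_N^\ell - \Delta^\ell(t)) = \sum_\alpha r_{\ell\alpha}(m_\alpha^N - m_\alpha(t))$), and then resum, which is exactly where the finiteness of $\gamma'(\delta_\ell)$ is needed to keep the operator-coefficient bounded uniformly in $N$. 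Carefully tracking that the reordering commutators generated in this resummation are all $O(1/N)$ (again via Lemma~\ref{lemma_commutators}) is the delicate part; everything else is routine Gr\"onwall and Cauchy--Schwarz.
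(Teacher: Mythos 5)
Your proposal follows essentially the same route as the paper's proof (given in Ref.~\cite{FiorelliEtALC_NJP_23} and mirrored in the proof of Lemma~\ref{lemma_squarefluctu_convergence} here): differentiate $\E(t)$, isolate the carr\'e-du-champ defect via $\linn[AB]=\linn[A]B+A\linn[B]+\sum_{\ell k}[J_\ell^{k\dagger},A][B,J_\ell^{k}]$ and bound it by $O(1/N)$ using Lemma~\ref{lemma_commutators}, replace $\linn[m_\alpha^N]$ by $f_\alpha(\vec m^N)$ via Lemma~\ref{lemma_gen_action}, control $f_\alpha(\vec m^N)-f_\alpha(\vec m(t))$ — including the nonpolynomial $\Gamma_\ell^2$ piece through the telescoped power series with $\gamma'(\delta_\ell)$, which is exactly Lemma~\ref{proof_Lemma-aux} — by Cauchy--Schwarz against $\E(t)$, and close with Gr\"onwall, $\E(t)\le e^{C_1 t}\E(0)+\tfrac{C_2}{C_1 N}(e^{C_1 t}-1)$. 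The argument and the key estimates you identify coincide with the paper's.
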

Indeed, if $\lim_{N\to\infty}\E (t)=0$, then the state $\omega_t$ is clustering in the thermodynamic limit. Moreover, via the Cauchy-Schwarz inequality, it is $|\omega_t(m_{\alpha}^{N}-m_{\alpha})| \le \sqrt{ \omega_t([m_{\alpha}^{N}-m_{\alpha}]^2)} \le \sqrt{\mathcal{E}_N(t)}$, which can be used to control the limit 
\begin{equation}\label{mf_limit}
\lim_{N \rightarrow \infty }\omega_t(m_{\alpha}^{N}) - m_{\alpha}(t) = 0, \qquad \forall t \, ,
\end{equation} 
showing that the state $\omega_t$ is clustering in the sense of Definition \ref{clustering}.

Before going further, we state here some Lemmas that will be used in the following (see the Appendices of Ref.~\cite{FiorelliEtALC_NJP_23} for their proof): 
\begin{lemma}\label{lemma_bound_mf}
The system of equations \eref{eq:mean-field} with initial conditions $m_\alpha(0)$, defined by a quantum state $\omega$ as in Eq.~\eref{init-cond}, has a unique solution for $t\in[0,\infty)$. Moreover, one has 
$$
|m_\alpha(t)|\le \|v_\alpha\|\le 1\, , \qquad \forall t\, .
$$
\end{lemma}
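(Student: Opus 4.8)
The plan is to establish the two claims separately: first the global-in-time existence and uniqueness of the solution to the mean-field system \eref{eq:mean-field}, and then the a priori bound $|m_\alpha(t)|\le 1$. For existence and uniqueness, I would first observe that the right-hand side $f_\alpha(\vec m)$ defined in Lemma \ref{lemma_gen_action} is a polynomial in the components $m_\beta$ together with the analytic functions $\Gamma_\ell^2(\Delta^\ell)$, where $\Delta^\ell = \sum_\alpha r_{\ell\alpha} m_\alpha$ is linear in $\vec m$; by Assumption \ref{Gamma} each $\Gamma_\ell$ is entire, hence $\Gamma_\ell^2$ is entire, so $f_\alpha$ is a composition of entire functions and polynomials and is therefore real-analytic (in particular $C^1$) on all of $\mathbb{R}^{d^2}$. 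Local existence and uniqueness then follow immediately from the Picard–Lindelöf theorem, since $f$ is locally Lipschitz. The subtlety is promoting this to global existence on $[0,\infty)$: a priori a solution of a polynomial ODE can blow up in finite time. Here I would argue that the flow is confined to the compact region carved out by the constraint that $\vec m(t)$ remains the "moment vector" of a genuine single-site density matrix — equivalently, one uses the invariant-domain argument below, which shows the trajectory stays in a fixed compact set, so by the standard continuation criterion the maximal interval of existence is all of $[0,\infty)$.

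For the bound itself, the cleanest route is to exploit the fact that, for finite $N$, the operator $e^{t\linn}[m_\alpha^N]$ is a genuine Heisenberg-evolved observable under a completely positive trace-preserving (hence contractive, in operator norm on the unital side) dynamics: since $\|m_\alpha^N\| \le \|v_\alpha\| \le 1$ for all $N$ and the dual dynamical map $e^{t\linn}$ is unital and positive, it is a contraction in operator norm, so $\|e^{t\linn}[m_\alpha^N]\| \le 1$ for all $t$ and all $N$. Consequently $|\omega_t(m_\alpha^N)| = |\omega(e^{t\linn}[m_\alpha^N])| \le \|e^{t\linn}[m_\alpha^N]\| \le 1$ uniformly in $N$. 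Now invoking Theorem \ref{theorem} together with Eq.~\eref{mf_limit} — granted the initial condition \eref{init-cond} makes $\lim_N \E(0)=0$, which one is free to assume since $m_\alpha(0)$ is \emph{defined} as that limit — one has $m_\alpha(t) = \lim_{N\to\infty} \omega_t(m_\alpha^N)$, and the bound $|m_\alpha(t)|\le 1$ passes to the limit. Strictly, to make this self-contained one should note that Theorem \ref{theorem} presupposes a solution $m_\alpha(t)$ already exists on $[0,\infty)$; so logically I would first run the Picard argument to get a \emph{maximal} solution on $[0,T_{\max})$, use the identification $m_\alpha(t)=\lim_N\omega_t(m_\alpha^N)$ on that interval to deduce $|m_\alpha(t)|\le 1$ there, and then the uniform bound prevents blow-up, forcing $T_{\max}=\infty$.

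An alternative, purely ODE-theoretic proof of the bound — useful if one wants to avoid leaning on Theorem \ref{theorem} — is to show directly that the set $K = \{\vec m : \vec m \text{ is the Bloch/moment vector of some } \rho \in M_d(\mathbb{C}), \rho\ge 0, \tr\rho = 1\}$ is forward-invariant under the flow of \eref{eq:mean-field}. Indeed, the mean-field equations are exactly the equations of motion one gets by applying the single-site generator obtained from $\linn$ by replacing every collective operator $\Delta^\ell_N$ by the c-number $\Delta^\ell = \sum_\alpha r_{\ell\alpha} m_\alpha$ and factorizing — i.e. $\dot\rho = \mathcal{L}^{\rm mf}_\rho[\rho]$ where $\mathcal{L}^{\rm mf}_\rho$ is, for each \emph{fixed} $\rho$, a legitimate GKS–Lindblad generator (the Hamiltonian part is $\epsilon + h$-dependent Hermitian, the dissipative part is $\sum_\ell \Gamma_\ell^2(\Delta^\ell)\,\mathcal{D}^{\rm Loc}_\ell$ with $\Gamma_\ell^2(\Delta^\ell)\ge 0$ a nonnegative scalar). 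A nonlinear equation whose vector field at each point is the generator of a CPTP semigroup evaluated at that point preserves the state space $K$ — this is a standard fact about "mean-field Lindblad" or nonlinear master equations, provable by a Gronwall/tangency argument at the boundary $\partial K$. Since $K$ is compact and $|m_\alpha| = |\tr(\rho v_\alpha)| \le \|\rho\|_1 \|v_\alpha\| \le 1$ on $K$, both the global existence and the bound follow at once. The main obstacle in either route is the same logical point: making sure the bound and the global-existence claim are not argued circularly — the safe ordering is local Picard first, invariance/bound on the maximal interval second, global existence as a corollary third.
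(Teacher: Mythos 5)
The paper does not actually prove this lemma -- it defers the proof to the appendices of Ref.~\cite{FiorelliEtALC_NJP_23} -- so there is no in-paper argument to compare against line by line. That said, your second route is essentially the standard (and, in that reference, the intended) argument: identify $\vec m(t)$ with a single-site matrix $\rho_t=\sum_\alpha m_\alpha(t)v_\alpha$, observe that \eref{eq:mean-field} is a nonlinear master equation whose frozen generator is, at every point, a bona fide GKS--Lindblad generator (hermitian effective Hamiltonian, nonnegative scalar rates $\Gamma_\ell^2(\Delta_\ell)$, since the $c_\ell^n$ are real and $\Delta_\ell$ is a real scalar), and conclude that the state space is preserved, whence $|m_\alpha(t)|=|\mathrm{tr}(\rho_t v_\alpha)|\le\|v_\alpha\|\le 1$ and, by compactness, no finite-time blow-up; local Picard--Lindel\"of (your analyticity remark) gives uniqueness. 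A slightly cleaner implementation than a tangency argument at $\partial K$: take the maximal local solution $\rho_t$, freeze it inside the generator to obtain a \emph{linear} time-dependent Lindblad equation whose two-parameter propagator is CPTP, and note that $\rho_t$ itself solves that linear equation with initial datum a state; uniqueness of the linear problem then forces $\rho_t$ to remain a state on the whole maximal interval. With this route your ordering (local existence, bound on the maximal interval, globality as a corollary) is exactly right.

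Your first route, however, cannot serve as the proof. First, it is circular at a deeper level than the one you flag: the Gronwall estimate behind Theorem~\ref{theorem} (and behind Lemmas~\ref{proof_Lemma-aux} and \ref{lemma_squarefluctu_convergence} in this paper) uses precisely the bounds $|m_\alpha(t)|\le 1$ and $|\Delta_\ell(t)|\le\delta_\ell$ supplied by the present lemma, so invoking Theorem~\ref{theorem} to obtain $m_\alpha(t)=\lim_{N\to\infty}\omega_t(m_\alpha^N)$ presupposes what is to be proved. Second, the hypothesis of the lemma is only Eq.~\eref{init-cond}, i.e.\ convergence of the first moments; the identification $m_\alpha(t)=\lim_N\omega_t(m_\alpha^N)$ requires in addition $\lim_{N\to\infty}\mathcal{E}_N(0)=0$, i.e.\ a clustering initial state, which is not part of the statement, so route one would also prove a weaker result. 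Keep the finite-$N$ contraction observation as motivation if you like, but the self-contained ODE/Lindblad-invariance argument must carry the proof.
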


\begin{lemma}\label{proof_Lemma-aux}
The convergence of the squared operator-valued rates to the same rates computed in their mean-field scalar function is dominated by the convergence of the mean-field operator to the mean-field scalar functions, namely we have that
\begin{eqnarray*}
& |\omega\left(A^\dagger e^{t\mathcal{L}_N}\left[(\Gamma_\ell^2(\Delta_N^\ell)-\Gamma_\ell^2(\Delta_\ell(t)))X\right] B\right)| \\
& \le  C \| X \| \sum_{\alpha=1}^{d^2} |r_{\ell \alpha}| \sqrt{\omega(A^{\dagger} e^{t \linn}[(m_{\alpha}^N-m_{\alpha}(t))^2]A)}\sqrt{\omega(B^{\dagger}B)} \, ,
\end{eqnarray*}
where $C=2\gamma(\delta_{\ell})\gamma'(\delta_\ell)$.

\end{lemma}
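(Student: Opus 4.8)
The plan is to reduce the estimate to two standard ingredients: a linearization of the operator difference $\Gamma_\ell^2(\Delta_N^\ell)-\Gamma_\ell^2(\Delta_\ell(t))$ in the increment $\Delta_N^\ell-\Delta_\ell(t)$, with a norm‑controlled operator coefficient, followed by a Cauchy--Schwarz estimate for the sesquilinear form $Y\mapsto\omega\!\left(A^\dagger e^{t\mathcal{L}_N}[Y]B\right)$ carried out through a Kraus representation of the Heisenberg map. This last move is legitimate because, for finite $N$, $\mathcal{L}_N$ is a bounded GKS--Lindblad generator with $\mathcal{L}_N[\mathbf 1]=0$ [Eqs.~\eref{Lindblad}--\eref{dissipator}], so $e^{t\mathcal{L}_N}$ is a unital completely positive map and admits Kraus operators $K_\mu$ with $\sum_\mu K_\mu^\dagger K_\mu=\mathbf 1$.

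For the linearization I would put $G_\ell:=\Gamma_\ell^2$, which is again an entire function since $\Gamma_\ell$ is, by Assumption~\ref{Gamma}, with real coefficients; thus $G_\ell(z)=\sum_n g_n z^n$, $g_n=\sum_{k=0}^n c_\ell^{k}c_\ell^{n-k}$, and the Cauchy‑product bounds $\sum_n|g_n||z|^n\le\gamma(z)^2$ and $\sum_n n|g_n||z|^{n-1}\le 2\gamma(z)\gamma'(z)$ hold. Since $\Delta_\ell(t)=\sum_\alpha r_{\ell\alpha}m_\alpha(t)$ is a real scalar, it commutes with the Hermitian operator $\Delta_N^\ell$, so the telescoping identity $a^n-b^n=(a-b)\sum_{j=0}^{n-1}a^jb^{n-1-j}$ applies term by term and gives
$$
G_\ell(\Delta_N^\ell)-G_\ell(\Delta_\ell(t))=\big(\Delta_N^\ell-\Delta_\ell(t)\big)R_\ell,\qquad R_\ell:=\sum_n g_n\sum_{j=0}^{n-1}(\Delta_N^\ell)^j(\Delta_\ell(t))^{n-1-j}.
$$
Using $\|\Delta_N^\ell\|\le\delta_\ell$ [Eq.~\eref{delta_alpha}] and $|\Delta_\ell(t)|\le\delta_\ell$ (from $|m_\alpha(t)|\le 1$, Lemma~\ref{lemma_bound_mf}) one gets $\|R_\ell\|\le\sum_n n|g_n|\delta_\ell^{n-1}\le 2\gamma(\delta_\ell)\gamma'(\delta_\ell)=C$. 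Writing $\Delta_N^\ell-\Delta_\ell(t)=\sum_\alpha r_{\ell\alpha}(m_\alpha^N-m_\alpha(t))$ and applying the triangle inequality, it then suffices to show, for each $\alpha$ and with $D_\alpha:=m_\alpha^N-m_\alpha(t)=D_\alpha^\dagger$ (Hermitian, since $m_\alpha(t)\in\mathbb R$),
$$
\big|\omega\!\left(A^\dagger e^{t\mathcal{L}_N}[D_\alpha R_\ell X]B\right)\big|\le C\|X\|\,\sqrt{\omega\!\left(A^\dagger e^{t\mathcal{L}_N}[D_\alpha^2]A\right)}\,\sqrt{\omega(B^\dagger B)}.
$$

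To prove this I would write $e^{t\mathcal{L}_N}[\,\cdot\,]=\sum_\mu K_\mu^\dagger(\cdot)K_\mu$ and note $A^\dagger K_\mu^\dagger D_\alpha R_\ell X K_\mu B=(D_\alpha K_\mu A)^\dagger(R_\ell X K_\mu B)$, so that $\omega$ of this operator is an inner product in the GNS space of $\omega$; Cauchy--Schwarz there gives, termwise in $\mu$,
$$
\big|\omega(A^\dagger K_\mu^\dagger D_\alpha R_\ell X K_\mu B)\big|\le\sqrt{\omega(A^\dagger K_\mu^\dagger D_\alpha^2 K_\mu A)}\,\sqrt{\omega(B^\dagger K_\mu^\dagger X^\dagger R_\ell^\dagger R_\ell X K_\mu B)}.
$$
Summing over $\mu$ and applying Cauchy--Schwarz for the $\mu$‑sum turns the two Kraus sums back into $e^{t\mathcal{L}_N}$ acting on $D_\alpha^2$ and on $X^\dagger R_\ell^\dagger R_\ell X$ respectively; finally $\|X^\dagger R_\ell^\dagger R_\ell X\|\le\|R_\ell\|^2\|X\|^2\le C^2\|X\|^2$ together with positivity and unitality of $e^{t\mathcal{L}_N}$ gives $e^{t\mathcal{L}_N}[X^\dagger R_\ell^\dagger R_\ell X]\le C^2\|X\|^2\mathbf 1$, hence $\omega(B^\dagger e^{t\mathcal{L}_N}[X^\dagger R_\ell^\dagger R_\ell X]B)\le C^2\|X\|^2\,\omega(B^\dagger B)$. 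Reassembling and summing over $\alpha$ with the weights $|r_{\ell\alpha}|$ yields the statement with $C=2\gamma(\delta_\ell)\gamma'(\delta_\ell)$.

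The step I expect to be the main obstacle is precisely handling the fact that $e^{t\mathcal{L}_N}$ is not an automorphism: one cannot split $e^{t\mathcal{L}_N}[D_\alpha R_\ell X]$ into a product, so the missing multiplicativity must be supplied by the complete‑positivity (Kraus) structure, and the two Cauchy--Schwarz inequalities have to be arranged so that the ``$D_\alpha^2$'' factor is the one dressed by $A^\dagger e^{t\mathcal{L}_N}[\,\cdot\,]A$ — the quantity driven to zero in the thermodynamic limit via $\mathcal{E}_N(t)\to 0$ (Theorem~\ref{theorem}) — while $X$ is controlled only in norm and $B$ only through $\omega(B^\dagger B)$, exactly as required. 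The accompanying power‑series estimate $\|R_\ell\|\le 2\gamma(\delta_\ell)\gamma'(\delta_\ell)$ is routine and merely fixes the constant $C$.
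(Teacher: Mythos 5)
Your proof is correct and follows essentially the route the paper intends for this lemma: factor $\Gamma_\ell^2(\Delta_N^\ell)-\Gamma_\ell^2(\Delta_\ell(t))=(\Delta_N^\ell-\Delta_\ell(t))R_\ell$ by a telescoping of the power series (legitimate since $\Delta_\ell(t)$ is a scalar), bound $\|R_\ell\|\le 2\gamma(\delta_\ell)\gamma'(\delta_\ell)$ using $\|\Delta_N^\ell\|\le\delta_\ell$ and $|\Delta_\ell(t)|\le\delta_\ell$, expand $\Delta_N^\ell-\Delta_\ell(t)=\sum_\alpha r_{\ell\alpha}(m_\alpha^N-m_\alpha(t))$, and then apply the completely-positive Cauchy--Schwarz estimate together with unitality to dress $(m_\alpha^N-m_\alpha(t))^2$ with $A^\dagger e^{t\mathcal{L}_N}[\cdot]A$ while controlling $R_\ell X$ only in norm. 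The only cosmetic difference is that you re-derive the dilation inequality via a Kraus decomposition and a double Cauchy--Schwarz, whereas you could simply have invoked Lemma~\ref{Lemma-dilation}, which the paper states (with $\Lambda=e^{t\mathcal{L}_N}$, $C=m_\alpha^N-m_\alpha(t)$, $D=R_\ell X$) exactly for this purpose.
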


We will also make use of the following result, which proof can be found, e.g. in Ref.\cite{BenattiEtAl18}.

\begin{lemma}
\label{Lemma-dilation} Given any completely positive and unital map $\Lambda[\cdot]$ on the quasi-local algebra $\mathcal{A}$ and a state $\omega$, we have that 
$$
 | \omega\left(A^\dagger \Lambda[C^\dagger D]B\right) | \le \sqrt{\omega(A^\dagger \Lambda[C^\dagger C]A)}\sqrt{\omega\left(B^\dagger \Lambda[D^\dagger D]B\right)} \, .
$$
\end{lemma}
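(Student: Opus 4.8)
The plan is to deduce the inequality from the ordinary Cauchy--Schwarz inequality for the positive semidefinite sesquilinear form $(x,y)\mapsto\omega(x^{\dagger}y)$ attached to the state $\omega$, after exploiting complete positivity of $\Lambda$ through a $2\times 2$ matrix amplification (this is essentially the Kadison--Schwarz argument in ``off-diagonal'' form). First I would form the row $V=(C\ D)$ with entries in $\mathcal{A}$, so that $V^{\dagger}V$ is the $2\times 2$ matrix over $\mathcal{A}$ with entries $C^{\dagger}C$, $C^{\dagger}D$, $D^{\dagger}C$, $D^{\dagger}D$, and note $V^{\dagger}V\ge 0$ in the $C^{*}$-algebra $M_{2}(\mathcal{A})$. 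Applying the amplified map $\Lambda\otimes\mathrm{id}_{2}$, which is positive precisely because $\Lambda$ is completely (hence $2$-)positive, one gets $\mathcal{M}:=(\Lambda\otimes\mathrm{id}_{2})(V^{\dagger}V)\ge 0$ in $M_{2}(\mathcal{A})$, whose entries are $\mathcal{M}_{11}=\Lambda[C^{\dagger}C]$, $\mathcal{M}_{12}=\Lambda[C^{\dagger}D]$, $\mathcal{M}_{21}=\Lambda[D^{\dagger}C]$, $\mathcal{M}_{22}=\Lambda[D^{\dagger}D]$. (Unitality of $\Lambda$ is in fact not used.)

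Second, since $\mathcal{M}\ge 0$ in the $C^{*}$-algebra $M_{2}(\mathcal{A})$ it has a positive square root $\mathcal{M}^{1/2}\in M_{2}(\mathcal{A})$, with entries $P_{ij}\in\mathcal{A}$ satisfying $P_{ji}=P_{ij}^{\dagger}$; writing $\mathcal{M}=\mathcal{M}^{1/2}\mathcal{M}^{1/2}$ entrywise gives $\mathcal{M}_{ij}=\sum_{k}P_{ik}P_{jk}^{\dagger}$, with $k$ ranging over $\{1,2\}$. Setting $a_{k}:=P_{1k}^{\dagger}A$ and $b_{k}:=P_{2k}^{\dagger}B$ one reads off $\omega(A^{\dagger}\Lambda[C^{\dagger}D]B)=\sum_{k}\omega(a_{k}^{\dagger}b_{k})$, $\omega(A^{\dagger}\Lambda[C^{\dagger}C]A)=\sum_{k}\omega(a_{k}^{\dagger}a_{k})$, and $\omega(B^{\dagger}\Lambda[D^{\dagger}D]B)=\sum_{k}\omega(b_{k}^{\dagger}b_{k})$. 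The form $\langle(a_{1},a_{2}),(b_{1},b_{2})\rangle:=\sum_{k}\omega(a_{k}^{\dagger}b_{k})$ on $\mathcal{A}^{2}$ is positive semidefinite, since $\omega$ is a state and hence $\omega(c^{\dagger}c)\ge 0$ for all $c$; the Cauchy--Schwarz inequality for this form then yields $|\sum_{k}\omega(a_{k}^{\dagger}b_{k})|^{2}\le\big(\sum_{k}\omega(a_{k}^{\dagger}a_{k})\big)\big(\sum_{k}\omega(b_{k}^{\dagger}b_{k})\big)$, which is the square of the asserted bound; taking square roots finishes the proof.

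Since this is a standard $C^{*}$-algebraic fact there is no genuine obstacle; the only care needed is formal: that $M_{2}(\mathcal{A})$ is again a $C^{*}$-algebra so that positive square roots exist there, that it is complete positivity (not mere positivity) of $\Lambda$ that makes $\Lambda\otimes\mathrm{id}_{2}$ positive, and the bookkeeping in the entrywise product. An equivalent presentation, avoiding the explicit square root, is to introduce the $\mathcal{A}^{2}$-valued pairing $\langle u,v\rangle:=\omega(u^{\dagger}\mathcal{M}v)$, observe it is positive semidefinite because $\omega(u^{\dagger}\mathcal{M}u)=\omega\big((\mathcal{M}^{1/2}u)^{\dagger}(\mathcal{M}^{1/2}u)\big)\ge 0$, and apply Cauchy--Schwarz to the vectors $u=(A,0)^{\mathrm{T}}$ and $v=(0,B)^{\mathrm{T}}$, for which $\langle u,v\rangle=\omega(A^{\dagger}\Lambda[C^{\dagger}D]B)$, $\langle u,u\rangle=\omega(A^{\dagger}\Lambda[C^{\dagger}C]A)$, and $\langle v,v\rangle=\omega(B^{\dagger}\Lambda[D^{\dagger}D]B)$.
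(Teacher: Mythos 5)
Your argument is correct: the $2\times 2$ amplification $V^{\dagger}V\ge 0$ with $V=(C\ D)$, positivity of the entrywise-applied map (which needs only $2$-positivity), existence of the square root of $\mathcal{M}$ in the $C^{*}$-algebra $M_{2}(\mathcal{A})$, and the final Cauchy--Schwarz step for the positive form built from $\omega$ all go through exactly as you write them, and your bookkeeping $\mathcal{M}_{ij}=\sum_{k}P_{ik}P_{jk}^{\dagger}$ is right. Note, however, that the paper does not prove this lemma itself; it defers to Ref.~\cite{BenattiEtAl18}, and the label ``Lemma-dilation'' reflects that the cited proof runs through a dilation/representation argument: one writes the completely positive unital map via a Stinespring-type dilation, $\Lambda[X]=V^{\dagger}\pi(X)V$ with $V$ an isometry, passes to the GNS representation of $\omega$, and applies the ordinary Cauchy--Schwarz inequality to the vectors $\pi(C)V\pi_{\omega}(A)\Omega_{\omega}$ and $\pi(D)V\pi_{\omega}(B)\Omega_{\omega}$. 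Your route is genuinely different and in some respects leaner: it stays entirely inside the $C^{*}$-algebra, invokes no dilation or representation theorem beyond the existence of square roots of positive elements, and, as you correctly observe, uses only $2$-positivity rather than complete positivity and does not use unitality at all (unitality in the dilation proof is only what makes $V$ an isometry, which is likewise not needed for the inequality). The dilation proof buys a very short argument once Stinespring and GNS are taken as given; your amplification argument buys self-containedness and a slightly more general statement. Either is acceptable here, since the lemma is only quoted as an auxiliary tool.
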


\section{Dynamics at mesoscopic scale}
\label{dyn-mesoscopic}

In this section we discuss the main focus of this paper, that is the mesoscale dynamics arising from the dissipative generator previously introduced. In particular, we investigate the time evolution of the quantum fluctuation operators, which scale as square root of $N$, and the time evolution of the covariance matrix.
To do so, we introduce some definitions that are going to be useful in the following. In particular, as in the previous section, we consider as set of relevant single-particle observables the orthonormal, hermitian basis $\lbrace v_{\alpha} \rbrace_{\alpha=1}^{d^2}$, focusing here on the corresponding set of fluctuation operators.

As we deal with a time-dependent state, $\omega_{t} (\cdot) = \omega(e^{t \mathcal{L}_N}[\cdot])$, also the local quantum fluctuations depend on time as
\begin{eqnarray}
    F^N_{\alpha}(t) = \frac{1}{\sqrt{N}} \sum_{k=1}^{N} \left( v_{\alpha}^{(k)} - \omega_t(v_{\alpha}) \right).
\end{eqnarray}
This implies that their commutator is a time-independent average operator,
\begin{equation}
\eqalign{
    [F_{\alpha}^N(t), F_{\beta}^N(t)] & = \frac{1}{N}\sum_{k=1}^{N} [v_{\alpha}^{(k)}, v_{\beta}^{(k)}] = \sum_{\gamma} a_{\alpha \beta} ^{\gamma} \frac{1}{N}\sum_{k=1}^{N} v_{\gamma}^{(k)} \\  & = \sum_{\gamma} a_{\alpha \beta} ^{\gamma} m^{N}_{\gamma}  \equiv T_{\alpha \beta}^N \, ,}
\end{equation}
whereas the elements of the symplectic matrix are time-dependent, 
\begin{equation}
\eqalign{
    \sigma_{\alpha \beta }^{\omega}(t) & = -i \lim_{N \rightarrow \infty} \omega_t \left( [F_{\alpha}^N(t), F_{\beta}^N(t)]  \right) = -i \lim_{N \rightarrow \infty} \omega_t \left( T_{\alpha \beta}^N \right) \\ &= -i \lim_{N \rightarrow \infty} \sum_{\gamma} a_{\alpha \beta} ^{\gamma} m_{\gamma}(t) \, .
}    
\end{equation}
For future convenience, we introduce here also the expression of the time-dependent covariant matrix, that reads
\begin{equation}
\Sigma_{\mu \nu}^{\omega}(t) = \frac{1}{2} \lim_{N \rightarrow \infty}\omega_t\left(\lbrace F^{N}_{\mu}(t), F^{N}_{\nu}(t)\rbrace \right) \, .
\end{equation}
Finally, for the sake of simplicity, in the remaining part of this section we make use of the more compact notation
\begin{equation}
\eqalign{    
& \vec{\omega}^N_t=\left(\omega_{1}^N(t),...,\omega_{d^2}^N(t) \right)^{T} , \quad \mathrm{where} \quad \omega_{ \alpha}^N(t) \equiv \omega_t(m_{\alpha}^N) \, , \\
& \vec{\omega}_t= \left(\omega_{1}(t),...,\omega_{d^2}(t) \right)^{T} , \quad \mathrm{where} \quad \omega_{ \alpha}(t) \equiv  m_{\alpha}(t)\, .}
\end{equation}

\subsection{Dynamics of quantum fluctuations and time evolution of the covariance matrix}

We want now to give an account on the action of the family of time-dependent map $e^{t \linn}(\cdot)$ on Weyl-like operators, i.e. on objects of the form
\begin{equation}
    W_t^N(\vec{r}) \equiv e^{t \mathcal{L}_N} [W^{N}(\vec{r})] \, .
\end{equation}
To do so, we  analyze the mesoscopic limit (in the sense of Def.~\ref{def:mesoscopicLimit}) of the dynamical map evolving local exponentials, showing that $$\lim_{N \rightarrow + \infty } \omega_{\vec{r}_1 \vec{r}_2} \left( e^{t\linn}[W^{N}(\vec{r})]\right) = \Omega_{\vec{r}_1\vec{r}_2}\left( \Phi^{\vec{\omega}}_t[W(\vec{r})]
 \right)\, .$$ First of all, we derive a preliminary result, leaving its proof in the appendix.
\begin{lemma}\label{lemma_squarefluctu_convergencemain}
    Given the generator $\linn$ specified by Eqs.~\eref{Lindblad}-\eref{delta} , with operator-valued functions $\Gamma_\ell (\Delta^\ell_N)$ obeying Assumption~\eref{Gamma}, 
    $\forall \mu = 1,...,d^2$ \, and $\forall t < \infty$,
    $$ \lim_{N \rightarrow +\infty} \omega \left(  W^N(\vec{r}) e^{t \linn} \left[ \left( F_{\mu}^{N}(t)\right)^2 \right]  W^{N \dagger}(\vec{r})  \right) < +\infty \, ,$$ 
    if it is satisfied at $t=0$.
\end{lemma}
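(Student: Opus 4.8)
The plan is to control the time-evolved squared fluctuation $e^{t\linn}[(F_\mu^N(t))^2]$ by differentiating it in $t$ and setting up a Gr\"onwall-type estimate on the sandwiched expectation $g_\mu^N(t) := \omega\!\left( W^N(\vec r)\, e^{t\linn}[(F_\mu^N(t))^2]\, W^{N\dagger}(\vec r)\right)$. First I would note that $(F_\mu^N(t))^2$ depends on $t$ both through the Heisenberg evolution $e^{t\linn}$ and through the explicit subtraction of $\omega_t(v_\mu)=\omega(e^{t\linn}[m_\mu^N])$; writing $F_\mu^N(t) = \sqrt N\,(m_\mu^N - \omega_t(v_\mu))$ makes this bookkeeping transparent. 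The key identity is to expand $\linn[(m_\mu^N)^2]$ using the Leibniz rule for the GKS-Lindblad generator: $\linn[AB] = \linn[A]B + A\linn[B] + \sum_\ell \big([J_\ell^{k\dagger},A]B J_\ell^k + \dots\big)$-type cross terms, i.e. the dissipative part fails to be a derivation precisely by the "carr\'e du champ" terms $\sum_{\ell,k} [j_\ell^{(k)\dagger}\Gamma_\ell, m_\mu^N]\,[\cdot]\,[j_\ell^{(k)}\Gamma_\ell,\cdot]$. Crucially each commutator $[v_\mu^{(k)}, m_\mu^N]$ carries a $1/N$, so the carr\'e-du-champ contribution to $\linn[(m_\mu^N)^2]$ is $O(1/N)$ in operator norm after the $\sqrt N\times\sqrt N = N$ rescaling — this is what keeps the fluctuation dynamics closed at leading order. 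Collecting terms, I expect $\frac{d}{dt}(F_\mu^N(t))^2$ to equal a linear combination of operators of the form $F_\nu^N(t)\,\Gamma_\ell^2\,(\text{something})$, products $F_\nu^N F_\rho^N$ weighted by the mean-field coefficients $A,B,M$ of Lemma~\ref{lemma_gen_action}, plus a norm-$O(1/\sqrt N)$ remainder.

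The main obstacle is the appearance, in this derivative, of the operator-valued rates $\Gamma_\ell^2(\Delta_N^\ell)$ sitting next to the fluctuation operators, which are themselves unbounded in the mesoscopic limit. The strategy here is to replace $\Gamma_\ell^2(\Delta_N^\ell)$ by its mean-field scalar value $\Gamma_\ell^2(\Delta_\ell(t))$ at the cost of a controllable error: by Lemma~\ref{proof_Lemma-aux}, the sandwiched expectation of $(\Gamma_\ell^2(\Delta_N^\ell) - \Gamma_\ell^2(\Delta_\ell(t)))X$ is bounded by $\|X\|$ times $\sum_\alpha|r_{\ell\alpha}|\sqrt{\omega_t([m_\alpha^N - m_\alpha(t)]^2)}$, which vanishes as $N\to\infty$ by Theorem~\ref{theorem} (assuming the clustering hypothesis holds at $t=0$, which is exactly the hypothesis of the present Lemma propagated forward). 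One subtlety: in our setting $X$ is a fluctuation operator (scaling as $\sqrt N$), so I must track that the $\|X\|$ in that Lemma is really an operator-norm that grows; the fix is to first commute $\Gamma_\ell^2$ past the single-site $j_\ell^{(k)}$ — producing again $1/N$-small commutator corrections via Lemma~\ref{lemma_commutators} — so that only bounded local operators appear inside the rate-difference, and the $\sqrt N$ factor is factored out cleanly before applying the bound. Lemma~\ref{Lemma-dilation} (the Cauchy-Schwarz/Stinespring dilation inequality) will be the tool that lets me split the sandwiched dissipative cross-terms into products of square roots of nonnegative expectations.

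Concretely the steps are: (i) write $g_\mu^N(t)$ and differentiate, using $\frac{d}{dt}e^{t\linn}[\cdot] = e^{t\linn}[\linn[\cdot]]$ and the explicit $t$-dependence of $\omega_t(v_\mu)$; (ii) expand $\linn[(m_\mu^N)^2]$ via the Leibniz decomposition, isolating the derivation part $2\,\text{(sym)}\,\linn[m_\mu^N]\,m_\mu^N$, which by Lemma~\ref{lemma_gen_action} equals $f_\mu(\vec m^N)\,m_\mu^N$ up to $O(1/N)$, and the carr\'e-du-champ part, which is $O(1/N)$ after rescaling; (iii) re-express everything in fluctuation variables $F_\nu^N(t)$, so that $g_\mu^N(t)$'s derivative is bounded by $c_1\sum_\nu |h_{\mu\nu}^N(t)| + c_2\,g_\mu^N(t) + o(1)$ where $h_{\mu\nu}^N$ are analogous cross-expectations $\omega(W^N F_\mu^N F_\nu^N W^{N\dagger})$ — these are in turn controlled by $\sqrt{g_\mu^N g_\nu^N}$ via Cauchy-Schwarz (Lemma~\ref{Lemma-dilation} with $\Lambda=\text{id}$); (iv) replace the operator-valued rates by scalars using Lemma~\ref{proof_Lemma-aux} as above, absorbing the error into the $o(1)$; (v) conclude that $\sum_\mu g_\mu^N(t)$ satisfies a closed differential inequality $\frac{d}{dt}\big(\sum_\mu g_\mu^N\big) \le K(t)\sum_\mu g_\mu^N + \varepsilon_N(t)$ with $K(t)$ locally bounded (using $|m_\alpha(t)|\le 1$ from Lemma~\ref{lemma_bound_mf}) and $\varepsilon_N(t)\to 0$ uniformly on compacts; (vi) apply Gr\"onwall's lemma: if $\sum_\mu g_\mu^N(0)$ is bounded as $N\to\infty$ (the $t=0$ hypothesis), then $\sum_\mu g_\mu^N(t)$ stays bounded for every finite $t$, giving $\limsup_N g_\mu^N(t) < \infty$. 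I expect step (iii)–(iv) — cleanly tracking the $\sqrt N$ scalings through the rate-replacement so that no term secretly diverges — to be the delicate part; everything else is the standard Leibniz-plus-Gr\"onwall skeleton already used in Ref.~\cite{FiorelliEtALC_NJP_23} for the macroscopic scale.
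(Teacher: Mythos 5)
Your proposal follows essentially the same route as the paper's proof: a Gr\"onwall estimate on the sandwiched quantity $\omega(W^N e^{t\linn}[(F_\mu^N(t))^2]W^{N\dagger})$, obtained by the Leibniz decomposition of $\linn$ with the carr\'e-du-champ term bounded via Lemma~\ref{lemma_commutators}, the replacement $\linn[m_\mu^N]\approx f_\mu(\vec m^N)$ from Lemma~\ref{lemma_gen_action}, the rate substitution $\Gamma_\ell^2(\Delta_N^\ell)\to\Gamma_\ell^2(\Delta_\ell(t))$ via Lemma~\ref{proof_Lemma-aux}, Cauchy--Schwarz in the form of Lemma~\ref{Lemma-dilation}, and $|m_\alpha(t)|\le 1$ from Lemma~\ref{lemma_bound_mf}. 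The only minor imprecision is that after the $N$-rescaling the carr\'e-du-champ and $\linn[m_\mu^N]-f_\mu$ remainders contribute $N$-independent \emph{constants} (the paper's $C_2$) rather than vanishing $o(1)$ terms, but this does not affect the Gr\"onwall conclusion.
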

By employing the Cauchy-Schwartz inequality on the latter condition, $\lim_{N \rightarrow + \infty} \omega( W^N(\vec{r}) [ \left( F_{\mu}^{N}\right)^2 ]  W^{N \dagger}(\vec{r})) < +\infty $, one obtains
\begin{equation}
   \lim_{N \rightarrow +\infty }\omega([F_{\mu}^N]^2) = \lim_{N \rightarrow +\infty } \omega([m_{\mu}^N-m_{\mu}(0)]^2)N \leq +\infty \, ,
\end{equation}
which, by Theorem \ref{theorem}, is satisfied when considering initial clustering states. For large but finite system size, one can enforce the condition $\E(0) \sim 1/N$, which is met by product states, or states with short-range correlations. 

We can now establish how the generator $\linn$ acts, in the thermodynamic limit, on local exponentials.
\begin{proposition}\label{proposition_mesodynamics}
    Given the local exponential $W_t^N(\vec{r}) = e^{i \vec{r}\cdot \vec{F}^N_t}$, a dynamical generator as defined by Eqs. \eref{Lindblad}-\eref{delta}, with $\Gamma_\ell(\Delta_N^\ell)$ obeying Assumption~\eref{Gamma}, it is
    \begin{equation}
        \eqalign{
         \lim_{N\rightarrow\infty} \omega_{\vec{r}_1 \vec{r}_2} & \left(e^{ t\linn} \left[ \linn \left[ W_t^N(\vec{r}) \right] \right] \right) = \\
        & \lim_{N\rightarrow\infty} \omega_{\vec{r}_1 \vec{r}_2} \left( e^{ t \linn} \left[ \left( i \sqrt{N} \vec{r} (\tilde{B}_t^N + i A +\Gamma^{2}M)\vec{\omega}_t^N \right. \right. \right. \\
        & -i \vec{r} \cdot \left[ T^N(2ih^{\mathrm{R}})-(\tilde{B}_{t}^N+iA+\Gamma^2M) \right] \vec{F}^N_t \\  
        &  -\frac{1}{2}\vec{r} \left[ T^N (2ih)T^N -(\tilde{B}_t^N + i A+\Gamma^{2}M)T^N \right. \\
        & \left. \left.\left. \left.+ T^N S T^N + \Gamma^2 G^N \right] \vec{r}  \right) W_t^N(\vec{r}) \right] \right) \, ,}
    \end{equation}
    where $ A_{\alpha \beta} = \sum_{\mu} \epsilon_{\mu} a_{\alpha \beta}^{\mu}$, $\tilde{B}_{\beta \gamma}(t) = \sum_{\mu\nu}(2ih_{\mu\nu}^{\mathrm{R}})a_{\beta\gamma}^{\mu} \omega^{N}_{\nu}(t)$, with $h^{\mathrm{R}}_{\alpha \beta} = \mathrm{Re}( h_{\alpha \beta}) $,  and 
    $$S = \sum_\ell \vec{r}_\ell \left(\Gamma_\ell'(\Delta_N^{\ell}) \right)^2  \frac{1}{N} \sum_{k}  j_\ell^{\dagger (k)} j_\ell^{ (k)}  \vec{r}_\ell  + \sum_\ell \Gamma_\ell(\Delta_N^{\ell}) \Gamma_\ell'(\Delta_N^{\ell})  \vec{r}_\ell \vec{c}_{\ell} \, ,$$
    $$\Gamma_N^2 G^N = \sum_\ell \Gamma_\ell^2(\Delta_N^{\ell}) G^N_\ell \, , \qquad G^{N}_\ell =  \sum_{\alpha \beta} \sum_{\gamma \gamma'} a_{\mu \alpha}^{\gamma} c^{j,*}_{\alpha}d_{\gamma \gamma'}^{\eta} m_{\eta}^N c^{j}_{\beta}  a_{\beta \nu}^{\gamma'}$$
\end{proposition}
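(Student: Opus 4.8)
The plan is to compute $\linn[W_t^N(\vec r)]$ explicitly for finite $N$, organizing the result as a sum of terms proportional to $\sqrt N$, to $\vec F^N_t$, and to products $\vec F^N_t \vec F^N_t$ (i.e. second order in fluctuations), plus a remainder that vanishes in the mesoscopic limit; then apply $e^{t\linn}$ and the state $\omega_{\vec r_1\vec r_2}(\cdot)$ and take $N\to\infty$. The starting point is the decomposition $\linn = i[H,\cdot] + \sum_\ell \mathcal D_\ell$ together with the key structural fact $F^N_\alpha(t) = \sqrt N\,(m^N_\alpha - \omega^N_\alpha(t))$, so that $\vec r\cdot\vec F^N_t = \sqrt N\,\vec r\cdot(\vec m^N - \vec\omega^N_t)$ and $W^N_t(\vec r) = \exp(i\sqrt N\,\vec r\cdot(\vec m^N-\vec\omega^N_t))$. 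Because the $m^N_\alpha$ have mutually $O(1/N)$ commutators, exponential BCH-type manipulations are controlled: commuting a single-site operator past $W^N_t(\vec r)$ produces $W^N_t(\vec r)$ times a power series in the (small) commutators, and one keeps only the leading and subleading pieces.

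First I would handle the Hamiltonian part $i[H,W^N_t(\vec r)]$. Writing $H$ via \eqref{e0_totalHamiltonian} in terms of the $v_\alpha^{(k)}$ and using the structure constants $a^\gamma_{\alpha\beta}$, the single-particle term contributes the $A$-matrix pieces and the all-to-all term contributes the $\tilde B_t$ and $h$-dependent pieces; the factor $1/N$ in the two-body term combined with the double sum produces exactly an average-operator prefactor, which on the clustering time-evolved state converges to $\vec\omega_t$, giving $\tilde B_t$. Expanding $i[H,e^{i\sqrt N\vec r\cdot(\vec m^N-\vec\omega^N_t)}]$ to the needed order in $1/N$ (second order in the commutator for the $T^N(2ih)T^N$ and $T^N S T^N$-type quadratic terms, since each commutator carries a $1/N$ but is multiplied by $N$ from the $\sqrt N\cdot\sqrt N$ scaling), and collecting terms by their fluctuation degree, yields the $i\sqrt N\,\vec r(\tilde B_t^N+iA)\vec\omega_t^N$ drift, the $-i\vec r\cdot[T^N(2ih^{\mathrm R})-(\tilde B_t^N+iA)]\vec F^N_t$ linear-in-fluctuation term, and the quadratic $-\tfrac12\vec r[T^N(2ih)T^N-(\tilde B_t^N+iA)T^N+T^N S T^N]\vec r$ contribution, where $T^N_{\alpha\beta}=\sum_\gamma a^\gamma_{\alpha\beta}m^N_\gamma$ appears precisely from re-expressing double commutators of the $v_\alpha$'s.

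Next I would do the dissipative part $\sum_\ell \mathcal D_\ell[W^N_t(\vec r)]$. Using Lemma \ref{Cor_diss_dyn} I can replace $\mathcal D_\ell$ by $\Gamma_\ell^2(\Delta_N^\ell)\,\mathcal D_\ell^{\mathrm{Loc}}$ up to $O(1/N)$; then $\mathcal D_\ell^{\mathrm{Loc}}$ acts on the local exponential through the commutators $[j_\ell^{\dagger(k)},W^N_t(\vec r)]$ and $[W^N_t(\vec r),j_\ell^{(k)}]$, each of which, via the small-commutator expansion, produces $W^N_t(\vec r)$ times $i\sqrt N$ times derivative-type objects — this is where the $\Gamma_\ell'(\Delta_N^\ell)$ terms in $S$ and the $M$-matrix (through $\mathcal D_\ell^{\mathrm{Loc}}[v_\alpha^{(k)}]=\sum_\beta M^\beta_{\ell\alpha}v_\beta^{(k)}$, cf. Lemma \ref{lemma_gen_action}) enter, giving the $\Gamma^2 M$ contributions to the drift, linear, and quadratic terms, and the $G^N_\ell$ block from the "double" action producing $a^\gamma_{\mu\alpha}c^{j,*}_\alpha d^\eta_{\gamma\gamma'}m^N_\eta c^j_\beta a^{\gamma'}_{\beta\nu}$. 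The operator-valued prefactors $\Gamma_\ell^2(\Delta_N^\ell)$ and $\Gamma_\ell(\Delta_N^\ell)\Gamma_\ell'(\Delta_N^\ell)$ are kept as operators inside $e^{t\linn}[\,\cdot\,]$; Lemma \ref{lemma_commutators} guarantees they commute with everything else in sight up to $O(1/N)$, and Lemma \ref{proof_Lemma-aux} will be what lets one eventually replace them by their mean-field scalars when $\omega_{\vec r_1\vec r_2}$ and $N\to\infty$ are applied. Finally I would collect all pieces, check that every remainder is $O(1/\sqrt N)$ or lower after the overall normalization (so it dies under $\lim_N\omega_{\vec r_1\vec r_2}$), and read off the stated identity; the boundedness needed to justify exchanging limit and expectation is exactly Lemma \ref{lemma_squarefluctu_convergencemain}.

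The main obstacle I anticipate is bookkeeping of orders: every commutator of average/local operators is $O(1/N)$, but the $\sqrt N$ prefactors in $W^N_t$ and in $\vec F^N_t$ promote some of these to $O(1)$ and even force one to retain \emph{second-order} terms in the commutator expansion (the $T^N(\cdots)T^N$ and $G^N$ blocks), while third-order terms must be shown to be genuinely negligible — doing this uniformly in $t$ on a finite interval, and simultaneously tracking the noncommutativity between $W^N_t(\vec r)$ and the operator-valued rates $\Gamma_\ell(\Delta_N^\ell)$, is the delicate part. Controlling it cleanly is precisely where Lemmas \ref{lemma_commutators}, \ref{Cor_diss_dyn}, \ref{proof_Lemma-aux} and \ref{lemma_squarefluctu_convergencemain} do the heavy lifting.
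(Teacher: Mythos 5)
Your treatment of the Hamiltonian part is essentially the paper's (Lemmas \ref{lemma7} and \ref{lemma8st}: BCH-type expansion of $[H,W_t^N(\vec r)]$, retaining up to double commutators and bounding the $n\ge 3$ terms), but the dissipative part of your plan contains a genuine gap. You propose to invoke Lemma \ref{Cor_diss_dyn} to replace $\mathcal D_\ell$ by $\Gamma_\ell^2(\Delta_N^\ell)\mathcal D_\ell^{\rm Loc}$ acting on $W_t^N(\vec r)$ "up to $O(1/N)$". That lemma does not apply here: it is stated for strictly local operators (with error constant $C_O\propto N_O^2$, $N_O$ the support size) and for average operators, whereas $W_t^N(\vec r)$ is supported on all $N$ sites, so the bound degrades to $O(N)$ and gives no control. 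More importantly, the terms this replacement discards do \emph{not} vanish in the mesoscopic limit: the jump operators \eref{jumps} are $j_\ell^{(k)}\Gamma_\ell(\Delta_N^\ell)$, and the pieces of $\mathcal D_\ell[W_t^N(\vec r)]$ involving $[\Gamma_\ell(\Delta_N^\ell),W_t^N(\vec r)]$ are individually $O(1/\sqrt N)$ (cf. Eq.~\eref{comm_localexponential}) but enter quadratically and summed over $N$ sites, hence survive at $O(1)$; they are precisely the origin of the $T^N S T^N$ block of the Proposition, i.e.\ of the $(\Gamma'_\ell(\Delta_N^\ell))^2\,\frac1N\sum_k j_\ell^{\dagger(k)}j_\ell^{(k)}$ and $\Gamma_\ell\Gamma'_\ell\,\vec r_\ell\vec c_\ell$ contributions. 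Your statement that Lemma \ref{lemma_commutators} lets the operator-valued rates "commute with everything in sight up to $O(1/N)$" is exactly where this is lost.

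Relatedly, your claim that the $\Gamma'_\ell$ terms in $S$ arise from the expansion of $[j_\ell^{\dagger(k)},W_t^N(\vec r)]$ cannot be right: once $\mathcal D_\ell$ has been replaced by $\Gamma_\ell^2\mathcal D_\ell^{\rm Loc}$, no derivative of $\Gamma_\ell$ can ever appear, since those commutators involve only $j_\ell$ and $W_t^N$; within that scheme you would recover only the $\Gamma^2 M$ and $\Gamma^2 G^N$ terms, contradicting the statement you are trying to prove. The paper avoids this by \emph{not} using Lemma \ref{Cor_diss_dyn} here: it splits $\mathcal D_\ell[W_t^N(\vec r)]$ exactly, via $[AB,C]=A[B,C]+[A,C]B$, into the two pieces $D_I$ and $D_{II}$ of Eqs.~\eref{diss-first-term}--\eref{diss-second-term}, shows in Lemma \ref{lemma9} that $D_I$ produces the $S$-type quadratic terms at leading order (with explicit norm bounds on the remainders), and in Lemma \ref{lemma10} that $D_{II}$ reduces to $\Gamma_\ell^2\mathcal D_\ell^{\rm Loc}[W_t^N(\vec r)]$ (yielding $\Gamma^2 M$ and $\Gamma^2 G^N$) plus corrections that vanish under $\lim_N\omega_{\vec r_1\vec r_2}(e^{t\linn}[\cdot])$, using Lemma \ref{lemma_squarefluctu_convergence}. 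To repair your plan you would need to perform this exact splitting (or an equivalent one) and track the $[\Gamma_\ell,W_t^N]$ commutators to second order rather than discarding them.
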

The proof of the proposition is given by Lemma \ref{lemma7}-\ref{lemma10} in \ref{app:B}. \setcounter{lemma}{11} Finally, the following Lemma is reported for the sake of completeness (for its proof see, e.g., Lemma B3 of Ref. \cite{BenattiEtAl18}), being an auxiliary step for the following results.
\begin{lemma}\label{B_12:lemma_aux}
    Given a time-dependent, hermitean matrix $M_t$, and $N_t=e^{iM_t}$, then it is
    $$\dot{N}_t = \frac{d}{dt} N_t = O_t N_t, \qquad O_t \equiv \sum_{k=1}^{\infty} \frac{i^k}{k!} \mathbb{K}_{M_t}^{k-1}[\dot{M}_t] \, ,$$
    where $\mathbb{K}_{M_t}^{k}[\dot{M}_t] = \left[M_t , \mathbb{K}_{M_t}^{k-1}[\dot{M}_t]  \right]$, and $\mathbb{K}_{M_t}^{0}[\dot{M}_t]=\dot{M}_t$.
\end{lemma}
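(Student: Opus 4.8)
This is the standard formula for the time derivative of the exponential of a time-dependent, non-commuting matrix, so I would give a short self-contained argument built on a Duhamel-type integral representation followed by an expansion of the adjoint action $\mathbb{K}_{M_t}$. Throughout one uses that $M_t$ is a finite-dimensional matrix, continuously differentiable in $t$, so that $\|M_t\|$ and $\|\dot M_t\|$ are bounded on any compact interval and every series and integral below converges absolutely and uniformly there; this legitimizes all the interchanges of differentiation, integration over $s\in[0,1]$, and infinite summation that occur in the proof.

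\emph{Duhamel step.} Put $N(s,t)=e^{isM_t}$, which satisfies $\partial_s N=iM_tN=N\,iM_t$ and $N(0,t)=\mathbf{1}$. Writing $B(s):=\partial_t N(s,t)$ and differentiating the relation $\partial_sN=iM_tN$ with respect to $t$ (using $\partial_s\partial_t=\partial_t\partial_s$) yields the linear ODE in $s$
$$B'(s)=iM_t\,B(s)+i\dot M_t\,N(s,t),\qquad B(0)=\partial_t\mathbf{1}=0.$$
Variation of parameters gives $B(s)=\int_0^s e^{i(s-u)M_t}\,i\dot M_t\,e^{iuM_t}\,du$; evaluating at $s=1$ and substituting $u\mapsto 1-s$ produces
$$\frac{d}{dt}e^{iM_t}=\int_0^1 e^{isM_t}\,(i\dot M_t)\,e^{i(1-s)M_t}\,ds=\Bigl(\int_0^1 e^{isM_t}\,(i\dot M_t)\,e^{-isM_t}\,ds\Bigr)\,e^{iM_t},$$
using $e^{i(1-s)M_t}=e^{-isM_t}e^{iM_t}$ in the last equality.

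\emph{Expansion step.} Expand the conjugation as a series in the adjoint action, $e^{isM_t}\,Y\,e^{-isM_t}=\sum_{j\ge0}\frac{(is)^j}{j!}\,\mathbb{K}_{M_t}^{j}[Y]$ with $Y=i\dot M_t$, and integrate term by term using $\int_0^1 s^j\,ds=1/(j+1)$:
$$\int_0^1 e^{isM_t}\,(i\dot M_t)\,e^{-isM_t}\,ds=\sum_{j\ge0}\frac{i^{\,j+1}}{(j+1)!}\,\mathbb{K}_{M_t}^{j}[\dot M_t]=\sum_{k\ge1}\frac{i^{k}}{k!}\,\mathbb{K}_{M_t}^{k-1}[\dot M_t]=O_t,$$
after relabeling $k=j+1$. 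Combining with the Duhamel step gives $\dot N_t=O_t N_t$, the claim. An equivalent route is to differentiate $N_t=\sum_{n\ge0}(i^n/n!)M_t^n$ term by term, apply the Leibniz rule, move $\dot M_t$ to the right via the commutator identity $M_t^{\,j}\dot M_t=\sum_{l=0}^{j}\frac{j!}{l!\,(j-l)!}\,\mathbb{K}_{M_t}^{l}[\dot M_t]\,M_t^{\,j-l}$ (induction on $j$), then reorganize the double sum using the hockey-stick identity and reindex to factor out $N_t$; I would fall back on this only if an expansion-free derivation were preferred.

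\emph{Main point of care.} There is no conceptual obstacle: the only items requiring attention are the justification of the termwise differentiation of the exponential series and of exchanging the $s$-integral with the infinite sum, both immediate from the uniform-on-compacts absolute convergence noted above. The rest is the elementary bookkeeping of the two steps, and the statement is classical (cf.\ Lemma B3 of Ref.~\cite{BenattiEtAl18}).
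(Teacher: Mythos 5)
Your proposal is correct. Note that the paper does not actually prove this lemma: it states it ``for the sake of completeness'' and defers to Lemma B3 of Ref.~\cite{BenattiEtAl18}, so there is no in-text argument to compare against. Your self-contained derivation — the Duhamel/variation-of-parameters representation $\dot N_t=\int_0^1 e^{isM_t}(i\dot M_t)e^{i(1-s)M_t}\,ds$, followed by factoring out $e^{iM_t}$ and expanding the conjugation in powers of the adjoint action $\mathbb{K}_{M_t}$, with the reindexing $k=j+1$ — is exactly the standard route taken in the cited reference, and all the analytic interchanges you invoke are indeed immediate in finite dimension. One small observation: hermiticity of $M_t$ plays no role in your argument (nor is it needed for the identity), only differentiability of $t\mapsto M_t$; the hypothesis in the statement merely reflects the setting in which the lemma is later applied.
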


With the above results, we can state and prove the following:
\begin{theorem}\label{theorem-mesoscopicdynamics}
    Given the a dynamical generator $\linn$ as defined by Eqs.~\eref{Lindblad}-\eref{delta}, satisfying Assumption~\eref{Gamma}, the dynamics of the quantum fluctuation is given by the mesoscopic limit $\Phi_t^{\vec{\omega}} \equiv (m)- \lim_{N\rightarrow\infty } e^{{t \mathcal{L}_N}}$, with
    $$\Phi_t^{\vec{\omega}}[W(\vec{r})]= \exp{\left({-\frac{1}{2} \vec{r} \cdot \left( Y_t(\vec{\omega})\vec{r} \right) }\right)} W\left( X_t^T(\vec{\omega}) \vec{r}\right) \, ,$$
    where
    \begin{equation}
    \eqalign{
    & X_t(\vec{\omega}) \equiv \mathbb{T} e^{\int_{0}^{t} ds {Q}(\vec{\omega}_s)} \, ,\\
    & {Q}(\vec{\omega}_t) \equiv  \vec{f}(\vec{\omega}_t) - i\sigma(\vec{\omega}_t) 2ih^{\mathrm{R}} \\ \,
    & Y_t(\vec{\omega}) \equiv \int_{0}^{t} ds X_{t,s}(\vec{\omega}) \left(\sigma(\vec{\omega}_s)S^{\mathrm{sym}} \sigma^{T}(\vec{\omega}_s) + \Gamma^2(\omega_s) G^{\mathrm{sym}}(\omega_s) \right) X^T_{t,s}(\vec{\omega})  }
    \end{equation}
\end{theorem}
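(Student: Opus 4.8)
The plan is to identify the mesoscopic limit $\Phi_t^{\vec\omega}$ by turning the computation of Proposition~\ref{proposition_mesodynamics} into a closed, linear differential equation for the limiting characteristic functional $t\mapsto\lim_{N\to\infty}\omega_{\vec r_1\vec r_2}\big(e^{t\mathcal L_N}[W^N(\vec r)]\big)$, then inserting the Gaussian ansatz $\Phi_t^{\vec\omega}[W(\vec r)]=e^{-\frac12\vec r\cdot Y_t\vec r}\,W(X_t^T\vec r)$, matching it against that equation to obtain evolution equations for $X_t$ and $Y_t$, solving those, and finally invoking existence and uniqueness to conclude that the ansatz is \emph{the} mesoscopic limit. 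Existence of the limit (hence legitimacy of manipulating the ODE) is guaranteed by Lemma~\ref{lemma_squarefluctu_convergencemain}, which controls the tilted second moments uniformly in $N$ and therefore forces the characteristic functions of the evolved fluctuations to converge to a regular Gaussian state in the sense of Definitions~\ref{normal_fluctuations}--\ref{def:mesoscopicLimit}.

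Concretely, I would fix $\vec r,\vec r_1,\vec r_2$ and differentiate in $t$, using $\frac{d}{dt}e^{t\mathcal L_N}=e^{t\mathcal L_N}\mathcal L_N$ together with the explicit $t$-dependence of the centering in $W_t^N(\vec r)=e^{i\vec r\cdot\vec F^N_t}$. Proposition~\ref{proposition_mesodynamics} then expresses the derivative, after $N\to\infty$, as the expectation of a prefactor times $W_t^N(\vec r)$, the prefactor being built from $A$, $\tilde B_t^N$, $T^N$, $\Gamma^2 M$, $S$ and $G^N$. Two reductions make this manageable. First, by Theorem~\ref{theorem} the state $\omega_t$ stays clustering, so every average operator in the prefactor converges weakly to the c-number $m_\gamma(t)=\omega_\gamma(t)$ of Lemma~\ref{lemma_bound_mf}; in particular $T^N$ produces the matrix $\sum_\gamma a^\gamma_{\alpha\beta}m_\gamma(t)$, i.e.\ (up to a sign) $i\sigma(\vec\omega_t)$, and $S,\,G^N$ tend to their scalar mean-field values. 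Second, the replacement of the operator-valued rates $\Gamma_\ell^2(\Delta_N^\ell)$ by $\Gamma_\ell^2(\Delta_\ell(t))$ inside these Weyl-sandwiched expectations is justified, with error $\lesssim\sqrt{\mathcal E_N(t)}\to0$, by Lemma~\ref{proof_Lemma-aux}, with Lemma~\ref{Lemma-dilation} used to keep the Weyl factors under control. Expanding $m_\gamma^N=\omega_\gamma^N(t)+F^N_\gamma(t)/\sqrt N$ turns the $\mathcal O(1/\sqrt N)$ corrections, paired with the bare $\vec r\cdot\vec F^N_t$ factor, into the finite linear-in-$\vec F$ part of the limiting drift.

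The delicate point — and the main obstacle — is the term $i\sqrt N\,\vec r\cdot(\tilde B_t^N+iA+\Gamma^2 M)\vec\omega_t^N$, which is naively divergent. Here I would use that, by Lemma~\ref{lemma_gen_action} and Theorem~\ref{theorem}, $\vec\omega_t^N$ solves the mean-field equations up to $\mathcal O(1/N)$, so that $(\tilde B_t^N+iA+\Gamma^2 M)\vec\omega_t^N$ reproduces — together with the quadratic piece $T^N(2ih)T^N$ and modulo terms that are separately $\mathcal O(1/N)$ — the drift $\dot{\vec\omega}_t^N$ carried by the centering; the explicit $t$-derivative of the centering in $W_t^N(\vec r)$ contributes an equal and opposite $-i\sqrt N\,\vec r\cdot\dot{\vec\omega}_t^N$, and the two $\sqrt N$ contributions cancel. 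After this cancellation the limiting derivative is, in the Gaussian state,
$$\frac{d}{dt}\,\Omega_{\vec r_1\vec r_2}\big(\Phi_t^{\vec\omega}[W(\vec r)]\big)=\Omega_{\vec r_1\vec r_2}\!\left(\Big(-i\,\vec r\cdot P(\vec\omega_t)\vec F_t-\frac12\,\vec r\cdot D(\vec\omega_t)\vec r\Big)\,e^{i\vec r\cdot\vec F_t}\right),$$
where $P(\vec\omega_t)$ is the mean-field Jacobian $A,\tilde B_t,\Gamma^2 M$ corrected by the commutator term $T^N(2ih^{\mathrm R})$ and coincides, up to transposition, with $Q(\vec\omega_t)=\vec f(\vec\omega_t)-i\sigma(\vec\omega_t)2ih^{\mathrm R}$, and $D(\vec\omega_t)=\sigma(\vec\omega_t)S^{\mathrm{sym}}\sigma^T(\vec\omega_t)+\Gamma^2(\omega_t)G^{\mathrm{sym}}(\omega_t)$ is the diffusion matrix (only the symmetric parts survive the contraction with $\vec r\cdot(\cdot)\vec r$).

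Finally I would substitute the ansatz $\Phi_t^{\vec\omega}[W(\vec r)]=e^{-\frac12\vec r\cdot Y_t\vec r}\,e^{i(X_t^T\vec r)\cdot\vec F}$ into this equation. Differentiating the Weyl factor via Lemma~\ref{B_12:lemma_aux} (with $M_t=(X_t^T\vec r)\cdot\vec F$) and matching the coefficients of the independent monomials $\vec r\cdot\vec F$ and $\vec r\cdot(\cdot)\vec r$ yields, on the one hand, the linear matrix ODE $\dot X_t=Q(\vec\omega_t)X_t$, $X_0=\mathbb I$, whose solution (well defined since $\vec\omega_t$ is bounded for all $t$ by Lemma~\ref{lemma_bound_mf}) is the time-ordered exponential $X_t(\vec\omega)=\mathbb T e^{\int_0^t Q(\vec\omega_s)\,ds}$; and, on the other hand, the Lyapunov-type equation $\dot Y_t=Q(\vec\omega_t)Y_t+Y_t Q(\vec\omega_t)^T+X_t(\vec\omega)\,D(\vec\omega_t)\,X_t^T(\vec\omega)$, $Y_0=0$ (the source term being produced by the $-\frac12\vec r\cdot D\vec r$ contribution after conjugation by $W(X_t^T\vec r)$), whose solution is exactly $Y_t(\vec\omega)=\int_0^t X_{t,s}(\vec\omega)\big(\sigma(\vec\omega_s)S^{\mathrm{sym}}\sigma^T(\vec\omega_s)+\Gamma^2(\omega_s)G^{\mathrm{sym}}(\omega_s)\big)X_{t,s}^T(\vec\omega)\,ds$ with $X_{t,s}=X_t X_s^{-1}$. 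Uniqueness of the solution of this linear system with the stated initial data, together with the existence of the mesoscopic limit from Lemma~\ref{lemma_squarefluctu_convergencemain}, then identifies $\Phi_t^{\vec\omega}$ with the map in the statement. Beyond the $\sqrt N$ cancellation, the only remaining work is the (lengthy but routine) matrix bookkeeping of the matching step, most of which is already prepared by the auxiliary Lemmas of Appendix~\ref{app:B}.
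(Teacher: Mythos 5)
Your computational core is the paper's: you invoke Proposition~\ref{proposition_mesodynamics} for the action of $\linn$ on local exponentials, cancel the $\sqrt{N}$ drift against the time derivative of the centering in $W^N_t(\vec r)$, and dispose of the remainders with Theorem~\ref{theorem}, Lemma~\ref{proof_Lemma-aux}, Lemma~\ref{Lemma-dilation} and Lemma~\ref{lemma_squarefluctu_convergencemain}, exactly as the paper does. The genuine gap lies in how you close the argument. You propose to pass to the limit first, obtain a differential equation for the limiting characteristic functional $t\mapsto\lim_{N\to\infty}\omega_{\vec r_1\vec r_2}\bigl(e^{t\linn}[W^N(\vec r)]\bigr)$, and then identify the map by inserting the Gaussian ansatz and invoking uniqueness. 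This presupposes (a) that the limit exists for all $t$, and (b) that it is differentiable in $t$ with the derivative obtained by interchanging $d/dt$ and $N\to\infty$. Neither is established: Lemma~\ref{lemma_squarefluctu_convergencemain} only gives a uniform bound on the tilted second moments of the fluctuations; it does not yield convergence of the characteristic functional, nor its Gaussianity — that is precisely part of what Theorem~\ref{theorem-mesoscopicdynamics} asserts, so assuming it at the outset is circular unless you add a compactness/tightness argument (convergence along subsequences plus uniqueness of the limiting flow to identify all subsequential limits, with uniform-in-$N$ Lipschitz control to justify the interchange of derivative and limit). Your ``matching of independent monomials'' step also needs care: in the state $\Omega_{\vec r_1\vec r_2}$ the expectations of $F\,W(\vec s)$ are not independent of those of $W(\vec s)$ (they are $\vec s$-derivatives of the latter), so the identification of $\dot X_t=Q(\vec\omega_t)X_t$ and the Lyapunov equation for $Y_t$ must be phrased as an identity of functionals valid for all $\vec r,\vec r_1,\vec r_2$, not as coefficient comparison.

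The paper sidesteps all of this by never presupposing the limit: it forms, at finite $N$, the interpolating quantity $e^{s\linn}\bigl[W^N_s(X^T_{t,s}(\vec\omega)\vec r)\bigr]e^{-\frac12\vec r\cdot(Y_{t,s}(\vec\omega)\vec r)}$, writes the difference $I_W(t)$ between $e^{t\linn}[W^N_t(\vec r)]$ and the candidate map as $\int_0^t ds$ of its $s$-derivative (a Duhamel-type telescoping), computes that derivative with Lemma~\ref{B_12:lemma_aux} and Proposition~\ref{proposition_mesodynamics}, and shows the integrand vanishes as $N\to\infty$ using exactly the estimates you list. Existence of the mesoscopic limit and its Gaussian form then come out of the comparison rather than going in. If you want to keep your ``derive the limiting ODE and solve it'' structure, you must either supply the missing compactness-and-uniqueness argument for the family of states restricted to the Weyl algebra, or reorganize the proof around the finite-$N$ interpolation as the paper does.
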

In the last expression, $X_{t,s}(\vec{\omega}) = X_t(\vec{\omega})X_s^{-1}(\vec{\omega})$, $S^{\mathrm{sym}}$ and $G^{\mathrm{sym}}$ are the symmetric components of the matrix $S$ and $G$, respectively, introduced by Proposition \ref{proposition_mesodynamics}, and $\vec{f}(\cdot)$ identifies the components of the mean-field equations \eref{eq:mean-field}. We show now the demonstration of the above Theorem, the structure being based on a similar demonstration in Ref. \cite{BenattiEtAl18}. 

\begin{proof} We need to show that the following quantity
\begin{equation}
    I_W(t) \equiv \omega_{ \vec{r}_1 \vec{r}_2} \left(e^{t\linn}\left[ W_t^N(\vec{r}) \right]  - \Phi^{\vec{\omega}}_t \left[ W^N(\vec{r}) \right]  \right) \, ,
\end{equation}
is vanishing, in the thermodynamic limit. This can be written as
\begin{equation}
\eqalign{
    I_W(t) & = \omega_{\vec{r}_1 \vec{r}_2}\left( \int_{0}^t ds \frac{d}{ds} \left[ e^{s \linn} \left[ W_s^N(X^T_{t,s}(\vec{\omega})\vec{r}) \right] e^{- \frac{1}{2} \vec{r} \cdot (Y_{t,s}(\vec{\omega})\vec{r})}\right] \right)\\
   } \, ,
\end{equation}
where we defined the quantity
\begin{equation}
    Y_{t,s} = \int_{s}^{t}dt'X_{t,t'}(\vec{\omega}) \left(\sigma(\vec{\omega}_{t'})S^{\mathrm{sym}} \sigma^{T}(\vec{\omega}_{t'}) +\Gamma^2(\vec{\omega}_{t'}) G^{\mathrm{sym}}(\vec{\omega}_{t'}) \right) X^T_{t,t'}(\vec{\omega}) \, .
\end{equation}
By performing the derivative it reads
\begin{equation}
     I_W(t) =  \omega_{\vec{r}_1 \vec{r}_2}\left( \int_{0}^{t} ds \, e^{ s\linn} \left[ \zeta_{t,s}^N \right] e^{- \frac{1}{2} \vec{r} \cdot (Y_{t,s}(\vec{\omega})\vec{r})} \right) \, ,
\end{equation}
where
\begin{equation}\label{difference_th4}
   \eqalign{
   \zeta_{t,s}^N  & = \linn \left[ W_s^N(X^T_{t,s}(\vec{\omega})\vec{r}) \right] + \frac{d}{ds} W_s^N(X^T_{t,s}(\vec{\omega})\vec{r}) \\ 
   & -\frac{1}{2}\frac{d}{ds} \left( \vec{r}\cdot (Y_{t,s}(\vec{\omega})\vec{r} \right) W_s^N(X^T_{t,s}(\vec{\omega})\vec{r}) \, .
   }
\end{equation}
Let us focus on the second and third terms on the right-hand side of the above expression. Reminding that $W_s^N(X^T_{t,s}\vec{r})$ reads $e^{i X^T_{t,s}(\vec{\omega})\vec{r} \cdot \vec{F}_{s}^N}$, one can employ Lemma \eref{B_12:lemma_aux}, and write
\begin{equation}
\eqalign{
    \frac{d}{ds} W_s^N(X^T_{t,s}(\vec{\omega})\vec{r}) &  = \sum_{k=1}^{\infty} \frac{i^k}{k!} \mathbb{K}_{X^T_{t,s}(\vec{\omega})\vec{r} \cdot \vec{F}_{s}^N}^{k-1} \left[\frac{d}{ds}X^T_{t,s}(\vec{\omega})\vec{r} \cdot \vec{F}_{s}^N \right] W_s^N(X^T_{t,s}(\vec{\omega})\vec{r}) \, .
    }
\end{equation}
By substituting the following time-derivative,
$$\frac{d}{ds}X^T_{t,s}(\vec{\omega})\vec{r} \cdot \vec{F}_{s}^N = - \left( X^T_{t,s}(\vec{\omega})\vec{r}  \right)\cdot \left( Q(\omega_s)\vec{F}_s^N\right) - \sqrt{N} \left( X^T_{t,s}(\vec{\omega})\vec{r}  \right)\cdot \dot{\vec{\omega}}_s^{N} \, ,$$
and noticing that all the contributions for $k \geq 3$, that we call $Z_t^N$, vanish in norm, in the thermodynamic limit, it is
$$\eqalign{
    \frac{d}{ds} W_s^N(X^T_{t,s}(\vec{\omega})\vec{r}) & \approx \left( - i X^T_{t,s}(\vec{\omega})\vec{r} \cdot \left( Q(\vec{\omega_s}) \vec{F}_{s}^N \right)- i \sqrt{N}  ( X^T_{t,s}(\vec{\omega})\vec{r} ) \dot{\vec{\omega}}_s^{N} \right. \\
    & \left. +\frac{1}{2 }\left[ X^T_{t,s}(\vec{\omega})\vec{r} \cdot \vec{F}_{s}^N, \left(  X^T_{t,s}(\vec{\omega})\vec{r} \right) \cdot \left( Q(\vec{\omega_s}) \vec{F}_{s}^N   \right)  \right] \right)W_s^N(X^T_{t,s}(\vec{\omega})\vec{r}) \, ,
    }
    $$
where the notation $\approx$ highlights the fact that, in the thermodynamic limit, the contributions collected by $Z_t^N$ vanish in norm. One can further evaluate the commutator in the above expression as 
$$\left[ X^T_{t,s}(\vec{\omega})\vec{r} \cdot \vec{F}_{s}^N, \left(  X^T_{t,s}(\vec{\omega})\vec{r} \right) \cdot \left( Q(\vec{\omega_s}) \vec{F}_{s}^N   \right)  \right] = -\left( X^T_{t,s}(\vec{\omega})\vec{r} \right) \cdot \left( Q(\vec{\omega_s})T^{N} X^T_{t,s}(\vec{\omega})\vec{r} \right)  \, .$$
As for the last term on the right-hand side of Eq. \eref{difference_th4}, it is
\begin{equation}
\eqalign{
    \frac{d}{ds}\left( \vec{r}\cdot (Y_{t,s})\vec{r} \right) = 
    X_{t,s}(\vec{\omega}) \left(\sigma(\vec{\omega}_{s})S^{\mathrm{sym}} \sigma(\vec{\omega}_{s}) - \Gamma^2(\omega_s) G(\omega_s) \right) X^T_{t,s}(\vec{\omega}) \, ,  
    }
\end{equation}
so that we can write the difference $\zeta_{t,s}^N$ of Eq.~\eref{difference_th4} as
\begin{equation}
    \eqalign{
    \zeta^N_{t,s} & \approx \linn[W_s^N(\vec{\rho})]-\left( i\sqrt{N}\vec{\rho} \cdot \dot{\vec{\omega}}_S^N + i  \vec{\rho} \left( Q(\vec{\omega}_s) \vec{F}_s^{N} \right)\right) W_{s}^N(\vec{\rho})   \\
    & -\frac{1}{2} \left( \vec{\rho} \cdot \left( Q(\vec{\omega}_s)T^{N}+ \sigma(\vec{\omega}_{s})S^{\mathrm{sym}} \sigma(\vec{\omega}_{s}) - \Gamma^2(\vec{\omega}_s) G(\vec{\omega}_s)  \right)\vec{\rho}\right)W_s^{N}(\vec{\rho}) \, ,
    }
\end{equation}
having defined, for the sake of a lighter notation, $\vec{\rho} = X^{T}_{t,s}(\vec{\omega}) \vec{r}.$ In order to go ahead, Proposition \eref{proposition_mesodynamics} can be employed, as it provides the behavior, in the mesoscopic limit, of the action of the generator $\linn$ on local exponential. As such, in the thermodynamic limit the quantity $e^{s \linn }[\zeta^N_{t,s}]$, inside the integral $I_W^N(t)$ can be written as
\begin{eqnarray}\label{pr4_clE}
    & i e^{s \linn}\left[  \sqrt{N}\vec{\rho} \cdot \left( \left( \tilde{B}^N_s + i A + \Gamma^2 M\right)\vec{\omega}^N_s - \dot{\vec{\omega}}_s^N \right) W_s^N(\vec{\rho}) \right]\\ \label{pr4_F}
    & -i  e^{s \linn} \left[ \left( \vec{\rho} \cdot \left( T^N(2ih^R)-(\tilde{B}^N_s+iA+\Gamma^2 M) \right.\right. \right. \\ \nonumber
    & \left.\left.\left.+Q(\vec{\omega}_s)\right) \vec{F}^N_s \right) W_s^N(\vec{\rho}) \right] \\  \label{pr4_tst}
    &-\frac{1}{2} e^{s \linn} \left[ \left( \vec{\rho} \cdot \left( T^N S T^N +  \sigma(\vec{\omega}_{s})S^{\mathrm{sym}} \sigma(\vec{\omega}_{s})  \right) \vec{\rho} \right) W_s^N(\vec{\rho})\right] \\ \label{pr4_gamma2G}
    & -\frac{1}{2} e^{s \linn} \left[ \left( \vec{\rho} \cdot \left( \Gamma^2 G^N  - \Gamma^2(\vec{\omega}_s) G^{\mathrm{sym}}(\vec{\omega}_s)  \right) \vec{\rho} \right) W_s^N(\vec{\rho})\right] \\ \label{pr4_last}
    & -\frac{1}{2}e^{s \linn} \left[ \left( \vec{\rho} \cdot \left( T^N (2ih) T^N - (\tilde{B}^N_t + iA + \Gamma^2 M)T^N \right. \right. \right. \\ \nonumber
    & \left. \left. \left. + Q(\vec{\omega}_s)T^{N} \right)\vec{\rho} \right) W_s^N(\vec{\rho})  \right] \, .
\end{eqnarray}
We will now show that the mesoscopic limit of the terms in Eqs.~\eref{pr4_clE} - \eref{pr4_last} vanishes. We have to deal with terms proportional to
$$ i) \quad \omega_{\vec{r}_1\vec{r}_2} \left(  e^{t \linn} [\left(m_{\alpha}^{N}-m_{\alpha}(t)\right) X W^{N}_s(\vec{\rho})]\right) \,  ,$$
with $ X=\mathbb{I}$, $X=m_{\beta}^N-m_{\beta}(t)$, $ X = \Gamma^2_{\ell}(\Delta_N^\ell) - \Gamma^2_\ell(\Delta_\ell(t))$, $ X= (\Gamma'_{\ell}(\Delta_N^\ell) - \Gamma'_\ell(\Delta_\ell(t)))( \Gamma_{\ell}(\Delta_N^\ell) - \Gamma_\ell(\Delta_\ell(t)))$, or their products, as in \eref{pr4_clE}, \eref{pr4_tst} - \eref{pr4_last};
And terms of the type 
$$ii)   \quad \omega_{\vec{r_1}\vec{r_2}} \left(  e^{t \linn} [  Y F_s^N(v_{\beta}) W^{N}_s(\vec{\rho})]\right)\, ,$$ 
with
$Y= m_{\alpha}^{N}-m_{\alpha}(t) $, and $Y= \Gamma^2_{\ell}(\Delta_N^\ell) - \Gamma^2_\ell(\Delta_\ell(t)) $, as in \eref{pr4_F}.
Let us first consider terms of the type $i)$, that we can write as
$$
\eqalign{
& | \omega_{\vec{r}_1\vec{r}_2} \left(  e^{t \linn} [\left( m_{\alpha}^N - m_{\alpha}(t) \right)
  X W^{N}_s(\vec{\rho})]\right)| \leq  \| X \| \sqrt{\omega\left(  W^N(\vec{r}_1) e^{t\linn} \left[ \left( m_{\alpha}^N - m_{\alpha}(t) \right)^2 \right]W^{\dagger \, N}(\vec{r}_1) \right)}
  } \, 
$$ 
with $\| X\|$ being finite and independent of $N$. By employing Theorem \ref{theorem} and Lemma \ref{Lemma-dilation}, as well as that $\lim_{N \rightarrow +\infty }\|[ m_{\alpha}^N,W^N(\vec{r})] \| = 0 $, it is 
\begin{equation}\label{pr4_limit_average}
\eqalign{
\lim_{N \rightarrow \infty} & \omega\left(  W^N(\vec{r}_1) e^{t\linn} \left[ \left( m_{\alpha}^N - m_{\alpha}(t) \right)^2 \right]W^{\dagger \, N}(\vec{r}_1) \right) \\
& = \lim_{N \rightarrow \infty} \omega \left( e^{t \linn}[ \left( m_{\alpha}^N - m_{\alpha}(t) \right)^2] \right) = 0 \, .
}
\end{equation} 
We can thus conclude that Eq.~\eref{pr4_clE} vanishes. As for the mesoscopic limits of Eqs.~\eref{pr4_tst}-\eref{pr4_last}, one can see that part of their contributions vanish because of antisymmetry. For instance, terms such as $\vec{\rho} \cdot (\sigma(\vec{\omega}_s))S^{\mathrm{asy}}(\sigma(\vec{\omega}_s)) \vec{\rho}$, being $S^{\mathrm{asy}}$ the antisymmetric component of the matrix $S$ introduced by Proposition 1.
Concerning terms of the type $ii)$, we exploit that both average operators and operator valued-function $\Gamma_{\ell}(\Delta_N^\ell)$ commute, in the thermodynamic limit, with fluctuations and local exponential (see, e.g., Eq. \eref{comm_localexponential} in \ref{app:B}). Thus it is
$$\lim_{N \rightarrow +\infty } \omega_{\vec{r_1}\vec{r_2}} \left(  e^{t \linn} [  Y F_{\beta}^N(s) W^{N}_s(\vec{\rho})]\right) = \lim_{N \rightarrow +\infty } \omega_{\vec{r_1}\vec{r_2}} \left(  e^{t \linn} [  F_{\beta}^N(s) W^{N}_s(\vec{\rho})Y]\right) \, .$$
Furthermore, we can employ 
$$
\eqalign{
& |\omega_{\vec{r}_1\vec{r}_2} \left(  e^{t \linn} [F_{\beta}^N(s)
W^{N}_s(\vec{\rho})Y]\right)| \\
& \leq \sqrt{\omega\left(  W^N(\vec{r}_1) e^{t\linn} \left[ \left( F_{\beta}^N(s) \right)^2 \right]W^{\dagger \, N}(\vec{r}_1) \right)}\sqrt{\omega\left(  W^N(\vec{r}_1) e^{t\linn} \left[ \left(Y \right)^2 \right]W^{\dagger \, N}(\vec{r}_1) \right)} \, ,
  }
$$ 
where the first square root converges to finite values, as given by Lemma \ref{lemma_squarefluctu_convergence}. The second term under square root vanishes in the thermodynamic limit. Indeed, for 
$Y= m_{\alpha}^{N}-m_{\alpha}(t) $ the limit in Eq.~\eref{pr4_limit_average} directly applies. In the case $Y= \Gamma^2_{\ell}(\Delta_N^\ell) - \Gamma^2_\ell(\Delta_\ell(t)) $, one needs to employ Lemma \ref{proof_Lemma-aux}, (see, e.g., as it is formulated by Eq.~\eref{lemma-aux-adt}) and then the limit in Eq.~\eref{pr4_limit_average} holds.
\end{proof}

We stress here that, even though we can structure the above proof similarly to Ref.~\cite{BenattiEtAl18}, we are dealing with a different type of GKS-Lindblad generator. In the cited reference, the Hamiltonian term is indeed perturbed by means of a dissipation that scale as $1/N$. The latter is compatible with generators obtained from microscopic models describing, e.g., light-matter interaction via Dicke models. In our case, while displaying similar Hamiltonian part, we deal with a distinct dissipator, as thoroughly discussed in Section \ref{sec_generators}. Note that, due to the analogy in the Hamiltonian part, the proofs of Lemmas \ref{lemma7}, \ref{lemma8st} in \ref{app:B} retrace their equivalent counterparts shown in Ref.~\cite{BenattiEtAl18}. The reason why we reported them here entirely is twofold: on the one hand, this permits us to introduce useful tools to subsequently prove the other Lemmas; on the other hand, some of the steps are in fact dependent on Lemma \ref{lemma_squarefluctu_convergence}, which involves the whole dynamical generator. Regarding the dissipative part of the latter, its main contribution to the map $\Phi_t^{\vec{\omega}}$ appear in the exponent $Y_t(\vec{\omega})$, via two type of terms, one dependent on $G^{\mathrm{sym}}$ and the other on $S^{\mathrm{sym}}$. The former arises as if the collective-operator valued function $\Gamma_\ell(\Delta^\ell_N)$ behaves as a mean-field rate with respect to a local dissipator [defined in Eq.~\eref{D_loc}], and it is also positive definite, as can be seen from the structure of the operator it originates from, $ [\vec{r}\cdot \vec{F}_t^N,  j_\ell^{\dagger\, (k)}][j_\ell^{(k)}, \vec{r}\cdot \vec{F}_t^N]$ (see \ref{lemma10}). The second one keeps instead track of the contribution of the quantity $\Gamma_\ell(\Delta^\ell_N)$ in the jump operators \eref{jumps} at the operatorial level, and its positive definiteness follows the one of the operator $j_\ell^{\dagger (k)}j_\ell^{(k)}$.

Furthermore, due to the similarity of the form of the map $\Phi_t^{\vec{\omega}}$ with the results in Ref.~\cite{BenattiEtAl18}, some additional considerations can be extended to our case. In particular, it is worth noticing that, because of the dependency of the map $\Phi_t^{\vec{\omega}}$ on the time-dependent variables $\vec{\omega}_t$, the map itself does not evolve Weyl operators onto themselves. Indeed, being the symplectic matrix time dependent via $\vec{\omega}_t$, one obtains time-evolving canonical commutation relations. To deal with quantum fluctuations generating an algebra dependent on time-dependent average operators, the quantum fluctuation algebra must be extended to a more general one. The resulting generator of the dynamical map is rather complicated, and we refer to Ref. \cite{BenattiEtAl18} for a detailed discussion and an example of such a dynamical generator. As a result, one is led to introduce a
quantum-classical hybrid dynamical system, made of both quantum fluctuations and classical degrees of freedom. 

In the following, in the same spirit of Ref. \cite{BonebergLC22, CarolloL_PRA_22, MattesLC_arxiv_23}, we consider a simpler case. First of all, we focus $(i)$ on states $\omega(\cdot)$ that are invariant for average operators. That is, the dynamics of quantum fluctuations we considers occurs with respect to states that are stationary with respect to the average operators. Under such a restriction, the map $\Phi_t^{\vec{\omega}}$ transforms Gaussian state into Gaussian state. Indeed, the evolved state $\Omega \circ \Phi^{\vec{\omega}}_t$ of a Gaussian state $\Omega$, with covariance matrix $\Sigma$, on the Weyl algebra, remains Gaussian, with the time evolved covariance matrix reads $\Sigma^{\omega}(t) = X_t(\vec{\omega}) \Sigma X_t^{T}(\vec{\omega}) + Y_t(\vec{\omega})$, and $Y_t(\vec{\omega}) \geq 0$. In the following we will mainly use the equation for the time derivative of the covariance matrix, that reads
\begin{equation}\label{eq_derivative_covaraince}
    \dot{\Sigma}^{\omega}_t = Q(\vec{\omega}) \Sigma^{\omega}_t + \Sigma^{\omega}_t Q^T(\vec{\omega}) + \sigma(\vec{\omega}) S^{\mathrm{sym}} \sigma^T(\vec{\omega}) + \Gamma^2(\vec{\omega}) G^{\mathrm{sym}} (\vec{\omega}) \, .
\end{equation}
As a second simplification, $(ii)$ we further focus  on a relevant set of quantum fluctuations (as exemplified in the next Section). This allows us to identify, within the quantum fluctuation operators, a set of commuting operators and a set of bosonic modes, enabling us to directly quantify classical and quantum correlations from the covariance matrix.

\section{Application to quantum Hopfield-like neural networks}
\label{sec5}

\subsection{Quantum Hopfield-like neural networks}
In this section, we will exploit the previous results to analyze some features of open quantum generalizations of Hopfield-like models. With Hopfield-like models we refer to certain type of  classical neural networks (NNs), behaving as associative memories \cite{Amit_book}. That is, systems that are capable of retrieving complete information from corrupted input data, following a given ``learning rule". More specifically, Hopfield neural networks (HNN) \cite{Hopfield:1982} are systems made of $N$ classical spins which feature all-to-all interactions \cite{Amit_book,AmitGS:1985a}, and they are described by the energy function $E=-\frac{1}{2}\sum_{i\neq j=1}^{N} w_{i j} \, z^{(i)} z^{(j)}$, where $z^{(i)}$ are the classical Ising spins. The connections, or interaction couplings, $w_{ij}$ embody the learning rule. Indeed, they are chosen in such a way that a set of $p$ spin configurations, $\lbrace \xi_{i}^{\mu} \rbrace_{i = 1,...,N}$ for $\mu=1,2,\dots p$, can be stored and retrieved by the network, in term of system configurations, via a gradient descent dynamics with respect to the energy function $E$. Among the different learning rules, widely known is the \emph{Hebb's prescription}, that sets $w_{i j} = \frac{1}{N}\sum_{\mu=1}^{p}\xi_{i}^{\mu} \xi_{j}^{\mu}$, and makes patterns minima of the energy function. In practice, patterns are chosen as independent identically distributed (i.i.d.) random variables that can assume the values $\xi_i^{\mu} = \pm 1$. In the regime $p/N \ll 1$, the spin configurations which have minimal energy are those in which all spins are aligned with the patterns. In presence of noise, which is introduced in the form of an inverse of a temperature parameter, $\beta$, the retrieval mechanism emerges when endowing the HNN with a Glauber thermal single spin-flip dynamics \cite{glauber1963}.

Quantum generalizations of HNNs have been introduced in Refs.~\cite{Rotondo:JPA:2018,FiorelliLM22}, that embed the classical systems into the more general framework of open quantum Markovian evolution defined by Eq.~\eref{Lindblad}-\eref{jumps}. In this way one could investigate the impact on quantum effects on the retrieval dynamics. In analogy with the mentioned works, we consider here $N$ spin-$1/2$ particles, undergoing a Markovian evolution with jump operators
\begin{equation}\label{hopfield rates}
J_{\pm}^{(k)} = \hat{\sigma}_{\pm}^{(k)}\Gamma_{\pm}^{\mathrm{HN}}(\Delta E), \qquad \Gamma_{\pm}^{\mathrm{HN}  } (\Delta E)= \frac{ e^{\pm \frac{\beta}{2}\Delta E }}{ \sqrt{2}} \, ,
\end{equation}
where $$ \Delta E= \frac{1}{N}\sum_{\mu=1}^{p}\xi^{\mu}_{k}\sum_{j} \xi_{j}^{\mu} \hat{\sigma}^{(j)}_{z} \, .$$ 
Here, the operators $\hat{\sigma}_{\alpha}^{(i)}$ identify Pauli operators, and we introduced the notation $\hat{\cdot}$ to avoid potential confusion with the symplectic matrix previously defined. It is worth noticing that $\Delta E$, which represents the energy difference associated with the configurations before and after the transition, quantifies the energy change associated with a spin-flip at site $j$, as well as the self-energy. Moreover, it is not a simple multiple of the identity but a many-body operator. 
The Hamiltonian term is selected to be the simplest operator that can compete with the dissipation, $$H=\Omega \sum_{i=1}^{N}\hat{\sigma}_{x}^{(i)} \, ,$$ i.e. it is a homogeneous transverse field.

There are two differences between the collective operator-valued functions defined above, and the ones employed in Refs.~\cite{Rotondo:JPA:2018,FiorelliLM22}. First of all, the latter do not account for the self-energy contribution in $\Delta E$, using instead
$\Delta E_k= \frac{1}{N}\sum_{\mu=1}^{p}\xi^{\mu}_{k}\sum_{j \neq k} \xi_{j}^{\mu} \hat{\sigma}^{(j)}_{z}$. Even though Ref.~\cite{FiorelliEtALC_NJP_23} shows the equivalence of the model (with $\Delta E_k$ and $\Delta E$), in the thermodynamic limit, the representation of the system dynamics by means of $\Delta E$ is more related to the results presented in the previous sections. As a second difference, Refs. ~\cite{Rotondo:JPA:2018,FiorelliLM22} take into account collective operator-valued rates such that $\Gamma^{2}_{+}(\Delta E_k)+ \Gamma^{2}_{-}(\Delta E_k) = 1$, considering $\Gamma_{\pm}^{\mathrm{HN}  } (\Delta E)/\sqrt{2\cosh{(\beta \Delta E)}}$. Instead, we adopt rates of the form \eref{hopfield rates}, which in turns represents operator-valued functions that are entire, consistently with our Assumption~\eref{Gamma}. This difference is taken into account in the following. For the sake of a lighter notation, in the rest of the Section we will use $\Gamma_{\pm}(\Delta E) \equiv \Gamma_{\pm}^{\mathrm{HN}}(\Delta E)$, omitting the label $(\cdot)^{\mathrm{HN}}$ of the operator-valued functions.

\subsection{Exactness of the mean-field approach and phase diagram}
\label{S:large-spin_Mapping}
To smoothly apply our results, we first explicitly express the operator-valued functions defined in Eq.~\eref{hopfield rates} as a linear combination of convenient average operators. To this end, we perform a mapping  \cite{KochP_JSP_89, GrensingK86, Gayrard92, CarolloL:PRL:21}, on the following quantity 
\begin{equation}
E=-\frac{1}{2}\sum_{i,j} w_{ij} \hat{\sigma}_z^{(i)} \hat{\sigma}_z^{(j)} = -\frac{1}{2 N}\sum_{\mu=1}^{p}  \left( \sum_{i=1}^{N}\xi_i^{\mu} \hat{\sigma}_z^{(i)} \right)^2,
\end{equation}
where in the last step we explicitly write $w_{ij}$ in terms of the patterns $\xi_i^{\mu}$. We anticipate that, through the mapping, the above energy can be expressed as a quadratic form of magnetization of certain block-spins. As the latter as are of macroscopic size for large $N$, we refer to them as large-spins. The mapping can be understood as a reordering of each $\mu$-th row given by the patter $(\xi_1^{\mu}, ...,\xi_{N}^{\mu}) $, which in turn corresponds to a $ \lbrace \sigma_z^{(i)} \rbrace_{i=1,..., N}$ spin configuration. As we consider $\{ \xi^{\mu}_{i} \}$ i.i.d. random variables, the first pattern, $\xi_i^{1}$ takes the values $\pm 1$ at random positions. We proceed with identifying a partition in  the spins systems, that is, there will be a set of spins with $\xi_{j}^1 = +1$, and a set of spins with $\xi_{j'}^1 = -1$. The first [second] ones are taken to the left [to the right]. Thus, we have identified a label $\tilde{h}$ such that $\xi_h^1=1$ for $h\leq \tilde{h}$, and $\xi_h^1=-1$ otherwise. We repeat the above operation in each subset corresponding to $\xi_h^1=\pm 1$, performing the partition  with respect to the second pattern, $\xi_{i}^{2}$. Therefore, in the subset corresponding to $\xi_h^1=1$ we relabel the spins such that $\xi_i^{2}=1$ [$\xi_i^{2}=-1$] are moved to the left [right], and the same is done for the subset with $\xi_i^{2}=-1$. This procedure is then repeated up to the last pattern. As a result, for large $N$, the mapping yields $2^p$ subset of spins, say $\Lambda_{k}$ for $k=1,...,2^p$. Being $\xi_i^{\mu}$ i.i.d.~random variables, and so long as $N \gg 1 $, each pattern $(\xi_1^{\mu},...,\xi_{N}^{\mu})$  contains, at leading order, an equal number of $+1$ and $-1$. Thus each one of the $2^p$ subsets has at leading order the same number of spins, $N_{\mathrm{s}} \sim N / 2^p$ (assuming that $N/2^p$ is an integer number). Remarkably, under this mapping, the energy function reads, at the leading order
\begin{equation}
E \sim -\frac{1}{2}\sum_{h,k=1}^{2^p N_{\mathrm{s}}}\tilde{w}_{hk} S^{(h)}_{z}S^{(k)}_z \sim -\frac{1}{ 2^{p+1}N_{\mathrm{s}}}\sum_{\mu=1}^{p}\left( \sum_{h=1}^{2^{p}}e_{h}^{\mu} S^{(h)}_z \right)^2 \, ,
\end{equation}
It describes the interaction between large-spin operators $S^{(h)}_z$, where $S^{(h)}_{\alpha}= \sum_{i \in \Lambda_{h}} \hat{\sigma}^{(i)}_{\alpha}$ is given by the sum of spin-$1/2$ operators belonging to the $h$-th subset $\Lambda_h$. Most notably, the large-spin interaction couplings are defined as $\tilde{w}_{hk}=\frac{1}{ 2^{p}} \sum_{\mu=1}^{p} e_{h}^{\mu}e_{k}^{\mu}$, 
where  the coefficients $e_{h}^{\mu}$, which can assume the values $\pm 1$, represent the pattern values for spins in the subset $\Lambda_h$.  Furthermore, for each collection of spin $1/2$ belonging to the set $\Lambda_k$, the operator $\Delta E$  reads
\begin{equation}
\Delta E_{ \Lambda_k}^{N_{s}} = \frac{1}{ N_{\mathrm{s}}} \sum_{h=1}^{2^p}\tilde{w}_{hk} S^{(h)}_{z} =  \sum_{h=1}^{2^p} \tilde{w}_{hk} \, m_{z,h}^{N_{\mathrm{s}}}
\end{equation}
where%
\begin{equation}\label{macro_spin_average}
m_{\alpha,k}^{N_{\mathrm{s}}} \equiv \frac{S^{(k)}_{\alpha}}{N_{\mathrm{s}}} \, ,
\end{equation}
for $\alpha=x,y,z,$ and $k=1,...,2^p$,  are the average magnetization operators. Through this mapping, it becomes clear how to derive the average operator description that we have employed in the previous sections. Moreover, it is now evident that the collective operator-valued rates depend on linear combinations of the average magnetization operators, and satisfy Assumption~\eref{Gamma}. As such,  Theorems \ref{theorem} can be applied to the quantum generalization of the HNN dynamics, which guarantees the validity if the mean-field approximation when choosing initial states with short correlations.

By exploiting the latter results, in the thermodynamic limit and for $p/N \ll 1$, the quantum generalized HNN can be analyzed via the dynamical evolution of mean-field variables associated to the average magnetization operators.
Given the slightly modified choice of the operator valued-functions with respect to Ref. \cite{Rotondo:JPA:2018}, for the sake of completeness, we performed the analysis of the phase diagram, leaving the details of the derivation in Appendix. The retrieval properties of quantum HNNs in the parameter regimes $(\Omega, \beta)$, are summarized in the following. In fact, to highlight regimes where any of the patterns can be stored at stationarity,  the proper mean-field variables to take into account are referred to as \textit{overlap} variables, reading
$$o_{\alpha}^{\mu}(t) = \lim_{N\rightarrow + \infty} \frac{1}{N}  \sum_{i=1}^{N} 
     \xi_{i}^{\mu} \omega_t(\hat{\sigma}_{\alpha}^{(i)})  = \lim_{N\rightarrow + \infty}  \frac{1}{2^p}\sum_{k=1}^{2^p} e_k^{\mu}\omega_t(m_{\alpha,k}^{N_s}) \, .$$ 
The overlaps can be understood as a total magnetization where each large-spin (related to the set $\Lambda_k$) is weighted with respect to the the pattern components ${e_{k}^{\mu}}$. Indeed, the perfect retrieval of the $\mu$-th pattern [anti-pattern] in the $\alpha$ direction is achieved when $o_{\alpha}^{\mu} = 1 [-1]$, whereas $o_{\alpha}^{\mu} = 0$ signals a random configuration with respect to the $\mu$-th pattern. 

The phase diagram of the model is shown in Fig.~\ref{fig:phaseD}, which displays different parametric regimes. These correspond to different stable stationary solutions of the mean-field equations for the overlap variables. The details of the derivation are left in \ref{C:HNN_mean Field}. For large temperatures there is a unique stable stationary solution, featuring $o_{\alpha}^{\mu}=0$, $\forall \alpha, \mu$. The quantum HNN is said to display here a paramagnetic or disordered phase, for which pattern retrieval is not possible. As the temperature is decreased, and for sufficiently small values of the transverse field strength $\Omega$, the system shows instead a ferromagnetic, or retrieval, phase characterized by $o_{\alpha}^{\bar{\mu}} \neq 0$, for the $\bar{\mu}$-th pattern. The system can therefore operate here as an associative memory. For large values of $\Omega$ and a given range of temperature, a region characterized by a limit-cycle phase can be observed. In particular, such a region takes place for $\Omega > 1/4$, $\beta>3/2$, $\beta>(1+8\tilde{\Omega}^2)$, with $\tilde{\Omega}$ defined in Eq.~\eref{c:statRet}. Here, the state of the system features a nonzero overlap with one of the pattern, and has been dubbed in Ref.\cite{Rotondo:JPA:2018} as a type of retrieval phase. 

\begin{figure}[h]
    \centering
    \includegraphics[width=0.6\textwidth]{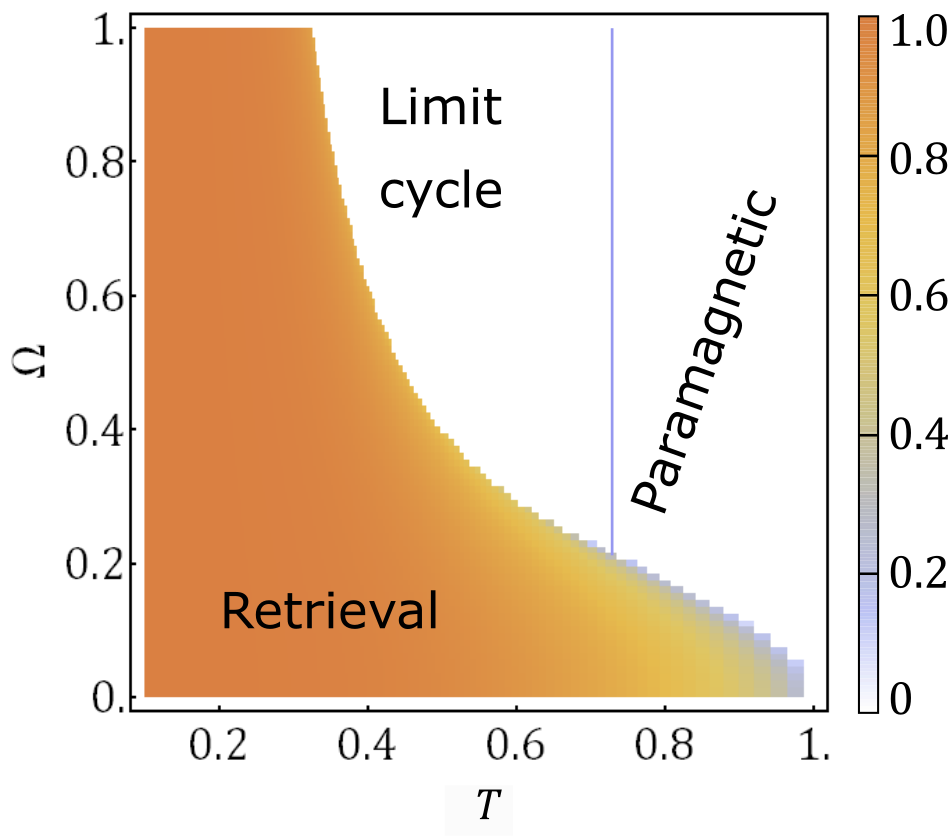}
    \caption{Phase diagram of the quantum Hopfield-like model.  At high temperature the mean-field equations for the overlaps admits a unique, vanishing solution, this corresponding to the paramagnetic phase. At low temperature the stationary solutions are characterized by finite overlap values (colorbar). This phase is referred to as retrieval phase. The (white) region between the paramagnetic and the retrieval one is referred to as a limit-cycle phase, and it displays self-sustained oscillations at stationarity.}
    \label{fig:phaseD}
\end{figure}

\subsection{Quantum fluctuations and covariance matrix}

The goal of this section is understanding and quantifying the extent of quantum correlations that can emerge at the mesoscopic level in the quantum generalization of the HNN introduced in the previous section. We will do so by means of the quantum fluctuations operators
\begin{eqnarray}
   F_{\alpha, k}^{N_s} = \sqrt{N_s}( m_{\alpha, k}^{N_s}  - m_{\alpha, k}(t)) \, .
\end{eqnarray}
and of the time-dependent covariance matrix
\begin{equation}
   \Sigma_{(\mu, h), (\nu,k)}^{\omega}(t) = \lim_{N\rightarrow + \infty} \omega_t \left( \left\lbrace F_{\mu,h}^{N_s}(t), F_{\nu,k}^{N_s}(t) \right\rbrace \right) \, .
\end{equation}
focusing on the behaviour of the correlations between large-spins of different sets $\Lambda_k$, each one with average magnetization operator $m_{\alpha,k}^{N_s}$, and mean-field variables $m_{\alpha,k}(t)$.

As already anticipated at the end of Section \ref{sec4}, we consider the dynamics of quantum fluctuations with respect to an initial state that is stationary with respect to the mean-field variables $m_{\alpha,k}(t)$. In this way we can analyze the faith of quantum correlations in the parametric regime where retrieval and the paramagnetic phases take place.

By exploiting Theorem \eref{theorem-mesoscopicdynamics}, we can thus write the time evolution of the covariance matrix, which is given as
\begin{equation}
\dot{\Sigma}^{\omega} = Q\Sigma^{\omega}+\Sigma^{\omega}Q^T+\sigma S^{\mathrm{sym}}\sigma^T + \Gamma^{2}G^{\mathrm{sym}} \, .
\end{equation}
The step by step derivation, whose details are left in \ref{A:covarianceM_HNN}, allows us to obtain the explicit form of the matrices $Q, S^{\mathrm{sym}}, \Gamma^2 G^{\mathrm{sym}}$. First of all, by means of the equation of motions [see Eq.~\eref{eq:mf-largespinHNN}] for the mean-field variables $m_{\alpha,k}(t)$, the matrix $Q$ reads
\begin{eqnarray}
    [Q]_{(\alpha \, h), (\beta \, k)} & = -2\Omega \mathcal{E}_{x\alpha \beta} \delta_{hk}  \\ \nonumber 
    & + \delta_{hk} \cosh(\beta \Delta E_{\Lambda_h})  \delta_{\alpha \beta} \left(-\frac{1}{2}\delta_{x\alpha}-\frac{1}{2}\delta_{y\alpha} -\delta_{\alpha z}\right) \, , \\ \nonumber
    [Q]^{T}_{(\alpha \, h), (\beta \, k)} & = +2\Omega \mathcal{E}_{x\alpha \beta} \delta_{h k}  \\
    & + \delta_{hk} \cosh(\beta \Delta E_{\Lambda_h}) \delta_{\alpha \beta} \left(-\frac{1}{2}\delta_{x\alpha}-\frac{1}{2}\delta_{y\alpha} -\delta_{\alpha z}\right) \, .\\ \nonumber
\end{eqnarray}
By exploiting the expression of the constant of the algebra $a_{\alpha \beta}^{\gamma} = 2i\mathcal{E}_{\alpha \beta \gamma} $, and having defined the local jump operators as $j_{\pm}^{(k)} = \hat{\sigma}_{\pm}^{(k)}$, one obtains
\begin{equation}
\eqalign{
[\Gamma^2 G^{\mathrm{sym}} ]_{(\alpha h)(\beta k)}& = \delta_{hk}\cosh(\beta \Delta E_{\Lambda_h}) \left\lbrace \delta_{\alpha x}\delta_{x\beta} +\delta_{\alpha y}\delta_{y\beta} \right. \\
& \left. + 2\delta_{\alpha z}\delta_{z\beta}\right[ 1-\tanh(\beta \Delta E_{\Lambda_h}) m_{z,h}\left]  \right. \\
& - \delta_{\alpha x}\delta_{z \beta}\tanh(\beta \Delta E_{\Lambda_h}) m_{x,h} - \delta_{\alpha y}\delta_{z \beta} \tanh(\beta \Delta E_{\Lambda_h}) m_{y,h}\\
& \left. - \delta_{\alpha z}\delta_{x \beta} \tanh(\beta \Delta E_{\Lambda_h})  m_{x,h} - \delta_{\alpha z}\delta_{y \beta} \tanh(\beta \Delta E_{\Lambda_h})  m_{y,h} \right\rbrace \, ,
}
\end{equation}
We write the term $\sigma S^{\mathrm{sym}} \sigma^{T}= \sigma  (  S^{\mathrm{sym}}_1 +  S^{\mathrm{sym}}_2) \sigma^T $, with
\begin{equation}
\eqalign{
 [\sigma S^{\mathrm{sym}}_1 \sigma^T]_{(\alpha h), (\beta k)} & =  \frac{\beta^2}{8}\sigma_{\alpha z (h)} \\ & \times \sum_{k'} [ \cosh{(\beta \Delta E_{\Lambda_{k'}})} - m_{z k'}(t)  \sinh{(\beta \Delta E_{\Lambda_{k'}})}]\tilde{w}^{hk'}\tilde{w}^{k' k}\sigma_{\beta z (k)} \, .
}
\end{equation}
\begin{equation}
    \eqalign{
    [\sigma S^{\mathrm{sym}}_2 \sigma^T]_{(\alpha h), (\beta k)} 
    & = -\frac{\beta}{8} \sigma_{\alpha z (h)} \tilde{w}_{hk} [\cosh(\beta \Delta E_{\Lambda_h })+\cosh(\beta \Delta E_{\Lambda_k }) ] \sigma_{\beta z (k)} 
    }
\end{equation}
Collecting all the above terms together, the EoM for the covariance matrix reads
\begin{equation}\label{EoM_covariancematrix}
    \eqalign{
    \dot{\Sigma}_{(\alpha, h), (\beta,k)}^{\omega} & = \sum_{\mu, k'} \left\lbrace -2\Omega \mathcal{E}_{x\alpha \mu} \delta_{hk'} \right.  \\  
    & \left. + \delta_{hk'} \cosh(\beta \Delta E_{\Lambda_h}) \delta_{\alpha \mu} \left(-\frac{1}{2}\delta_{x\mu}-\frac{1}{2}\delta_{y\mu} -\delta_{\mu z}\right) \right\rbrace\Sigma^{\omega}_{(\mu k') (\beta k)} \, + \\    
    &+\sum_{\mu k'} \Sigma^{\omega}_{(\alpha h) (\mu k')} \left\lbrace2\Omega \mathcal{E}_{x\mu \beta} \delta_{k' k}  \right. \\
    & + \left. \delta_{k'k} \cosh(\beta \Delta E_{\Lambda_k}) \delta_{\mu \beta} \left(-\frac{1}{2}\delta_{x\mu}-\frac{1}{2}\delta_{y\mu} -\delta_{\mu}\right) \right\rbrace+ \\
    & + \delta_{hk} \cosh(\beta \Delta E_{\Lambda_h}) \left\lbrace\delta_{\alpha x}\delta_{x\beta} +\delta_{\alpha y}\delta_{y\beta} \right. \\
    & \left. + 2\delta_{\alpha z}\delta_{z\beta}\right[ 1-\tanh(\beta \Delta E_{\Lambda_h}) m_{z,h}\left]  \right\rbrace \\
    & - \delta_{\alpha x}\delta_{z \beta} \sinh(\beta \Delta E_{\Lambda_h}) m_{x,h} - \delta_{\alpha y} \delta_{z \beta} \sinh(\beta \Delta E_{\Lambda_h}) m_{y,h}\\
    &- \delta_{\alpha z}\delta_{x \beta} \sinh(\beta \Delta E_{\Lambda_h}) m_{x,h} - \delta_{\alpha z}\delta_{y \beta} \sinh(\beta \Delta E_{\Lambda_h}) m_{y,h}\\
    & + \frac{\beta^2}{8} \sigma_{\alpha z (h)}\sum_{k'} [ \cosh(\beta \Delta E_{\Lambda_{k'}}) - m_{z, k'}  \sinh(\beta \Delta E_{\Lambda_{k'}})]\tilde{w}^{hk'}\tilde{w}^{k' k}\sigma_{\beta z (k)} \\ 
    &-\frac{\beta}{8} \sigma_{\alpha z (h)} \tilde{w}_{hk} [\cosh(\beta \Delta E_{\Lambda_h})+\cosh(\beta \Delta E_{\Lambda_k}) ] \sigma_{\beta z (k)} \, .
    }
\end{equation}
We can now comment upon the structure of the above equation, and eventually obtain the general form of the asymptotic covariance matrix. First of all, we can see that the matrix $Q$ cannot couple different blocks $\Sigma_{(\alpha,\bar{h})(\beta,\bar{k})}^{\omega}$ for fixed $\bar{h} \neq \bar{k}$. Thus, the competition between the Hamiltonian -- consisting of a transverse field with coherent strength $\Omega$ -- and the dissipation, -- with collective operator-valued rates, depending only on $m_{z, k}^{N_{s}}$--, cannot contribute to coupling quantum correlations among different large-spins. Similarly, the constant term, proportional to $\Gamma^2 G^{\mathrm{sym} }$ remains diagonal with respect to large-spin components. Finally, the remaining terms originating from $\sigma S^{\mathrm{sym} } \sigma$ do generate off-diagonal contributions, i.e. matrices $\Sigma^{\omega}_{(\alpha,\bar{h})(\beta,\bar{k})}$, for fixed $\bar{k} \neq \bar{h}$, each featuring only one non-zero element, corresponding to $\sigma_{\alpha z (h)}\sigma_{\beta z(k)} \approx m_{y,k}m_{y,h} $ with $\alpha=\beta=x$. Thus, they are characterized by $\det [\Sigma_{(\alpha,\bar{h})(\beta,\bar{k})}] = 0$, which we will use in the next section. By collecting the above observations, the asymptotic covariance matrix has the following structure 
\begin{equation}
    \left( 
    \matrix{
    \Sigma_{11}^{\mathrm{As}} && \matrix{ C_{12} && 0 && 0  \cr
    0 && 0 && 0  \cr
     0 && 0 && 0 } && \matrix{\cdots} && \matrix{C_{ 1 n } && 0 && 0  \cr
    0 && 0 && 0  \cr
     0 && 0 && 0 }\cr
    \matrix{C_{12} && 0 && 0  \cr
    0 &&  0 && 0 \cr
    0 && 0 && 0}  && \Sigma_{22}^{\mathrm{As}}  && \matrix{\cdots} && \matrix{\vdots} \cr
     \matrix{\vdots} && \matrix{\vdots} && \matrix{\ddots} \cr
     \matrix{C_{ 1 n } && 0 && 0  \cr
    0 && 0 && 0  \cr
     0 && 0 && 0 } && \matrix{\cdots} && \matrix{\cdots} &&  \Sigma_{nn}^{\mathrm{As}}
    }
    \right) \,
\end{equation}
where $n=2^p$.
%
%
%
One can compare the above structure with the classical case, obtained by setting $\Omega = 0$, and featuring $m_{y,k} = m_{x,k} = 0$ $\forall k$. In this case, the covariance matrix is diagonal, this implying vanishing two-point correlations among the large-spins structures $\Lambda_k$. 

In the next section, we focus on the case $p=1$ in order to quantify, with an example, the above considerations. The one-memory case correspond to partitioning the system of $N$ spin-$1/2$ particles into two large-spins of size $N/2$. Before going ahead, we point out that the results that we discuss in the next section can be derived also for the two-memory case, $p=2$, upon changing the large-spin mapping described in Section \ref{S:large-spin_Mapping}. Indeed, in the first step of the large-spin mapping, one can perform a gauge transformation $\sigma_z^{(i)} \rightarrow \xi_{i}^{1}\sigma_z^{(i)}$, $\xi_{i}^{\mu} \rightarrow \xi_{i}^{1} \xi_{i}^{\mu}$, which has the effect to align all the first pattern components, such that $e_{k}^{1} = 1$ $\forall k$. Then, the mapping proceeds as described in the Section \ref{S:large-spin_Mapping}, and is equivalent to reordering the spin $1/2$-particles. As a result, there will be $2^{p-1}$ large-spins subsets $\Lambda_k$ for $p$ patterns. That is, the $p=2$ case coincides with two large-spins of size $N_s=N/2$.

\subsubsection{Classical and quantum correlations for one pattern case.}
In this subsection, we now explore the special case of two large-spins, $h=1,2$, in order to get quantitative results upon the behavior of the classical and quantum correlations. In doing so, we remind that, $i)$ we focus on the dynamics of quantum fluctuations arising, in the thermodynamic limit, from an initial state that is \textit{stationary} with respect to the mean-field operators. As such, $ii) $ we notice that the mean-field equations for the model, (featuring $m_{x,k} = 0$, as the the equation of motion of $m_{x,k}$ decouples from the rest), satisfy the symmetry $m_{\alpha, 2} = -m_{\alpha, 1}$ for $\alpha = y,z$. As a result, at stationarity, the mean-field magnetizations of the two large-spin are anti-aligned. 

The equation of motion for ${\Sigma}^{\omega}$, upon the conditions $i)$, $ii)$, $\dot{{\Sigma}}^{\omega} = Q {\Sigma}^{\omega} + {\Sigma}^{\omega}Q^{T} + \Gamma^{2}G^{\mathrm{sym}} + \sigma S^{\mathrm{sym}} \sigma^T $, features
\begin{equation}
\small{
Q =
\left(  
\begin{array}{r@{\quad}cr} 
1 & 0 & \cr
0 & 1 &  \cr
\end{array} 
\right)
\otimes 
\left(  
\begin{array}{r@{\quad}cr} 
-\frac{1}{2}\cosh(\Delta) & 0 & 0  \cr
0 & -\frac{1}{2}\cosh(\Delta) & -2\Omega   \cr
0 & 2\Omega & -\cosh(\Delta) \cr 
\end{array} 
\right)
}
\end{equation}

\begin{equation}
\small{
\Gamma^2G^{\mathrm{sym}} = 
\left(  
\begin{array}{r@{\quad}cr} 
1 & 0 & \cr
0 & 1 &  \cr
\end{array} 
\right)
\otimes
\left(  
\begin{array}{r@{\quad}cr} 
1 & 0 & 0  \cr
0 & 1 & -\tanh(\Delta)m_{y,1}(t) \cr
 0 & -\tanh(\Delta)m_{y,1} & 2 \left[ 1-\tanh(\Delta) m_{z,1}\right]  \cr \cr
\end{array} 
\right)
}
\end{equation}
\begin{equation}
\small{
\sigma S^{\mathrm{sym}} \sigma^{T} = \beta \left[\frac{\beta}{2}m_{z,1}\sinh(\Delta) + (\frac{\beta}{2}-1)(\cosh(\Delta)) \right]
\left(  
\begin{array}{r@{\quad}cr} 
1 & 1 & \cr
1 & 1 &  \cr
\end{array} 
\right)
\otimes
\left(  
\begin{array}{r@{\quad}cr r@{\quad}cr} 
m_{y 1}^2 & 0 & 0  \cr
0 & 0 & 0 &  \cr
0 & 0 & 0 \cr
\end{array} 
\right)
}
\end{equation}
where we employed $\tilde{w}_{hh} =\frac{1}{2} $,  $\tilde{w}_{h \neq k} = - \frac{1}{2} $, $\tilde{w}^{hk'}\tilde{w}^{k'k} = \frac{1}{2}\tilde{w}^{hk}$, and $\Delta E_1 =- \Delta E_2 \equiv \Delta =\frac{ m_{z1} - m_{z2}}{2} = m_{z1}$. We have furthermore used $\sigma_{\alpha \beta (h)}^{\omega} = 2 \mathcal{E}_{\alpha \beta \gamma} m_{\gamma, h}$. As already point out in the more general case, when setting $\Omega = 0$, and noticing that $m_{x,k} = m_{y, k} = 0$, the evolution of the covariance matrix is ruled by diagonal matrices.

We now show that for each of the two large-spins $\Lambda_k$, it is possible to reduce the set of quantum fluctuations operators $F_{\alpha, k}^{N_s}$ for $\alpha = x,y,z$ to a pair of emergent bosonic modes. This follows from the fact that, at stationarity, the mean-field magnetization of the two large-spins are anti-aligned: we can rotate the reference frame by aligning the $z$ direction with respect to the mean-field variables direction. As a result, in the rotating frame $\tilde{m}_{z, k} \neq 0$, and $\tilde{m}_{x/y, k} = 0$. The transformation is thus performed by rotating the reference frame by an angle $\theta$ with respect to the $x$ axis, with $\theta = \mathrm{arcos}(\frac{m_{z,1}}{|\vec{m}|})$ and $|\vec{m}| = \sqrt{m_{z,1}^2 + m_{y,1}^2}= \sqrt{m_{z,2}^2 + m_{y,2}^2}$. The transformation reads
\begin{equation}
U=\left( \matrix{
1 & 0 & 0  \cr
0 & \frac{m_{z,1}}{|\vec{m}|}  & -\frac{m_{y,1}}{|\vec{m}|} \cr
0 & \frac{m_{y,1}}{|\vec{m}|}  & \frac{m_{z,1}}{|\vec{m}|} \cr 
}\right) \, .
\end{equation}
Applying the latter on the quantum fluctuation operators, we obtain $\tilde{F}^{N_s}_{\alpha, k} \equiv U_{\alpha \beta} F^{N_s}_{\beta, k}$, the only non-zero commutator of the symplectic matrix remaining
$$\tilde{\sigma}_{x y (k)} = 2 \tilde{m}_{z,k} \, .$$
Thus, this matrix reproduces the commutators of the Bose operators $\tilde{F}_{\alpha,k}$ obtained as mesoscopic limit of the fluctuation operator $\tilde{F}_{\alpha,k}^{N_s}$. In order to get canonical commutation relation, $iii)$ we perform a re-scaling via the transformation
\begin{equation}
    R = \left( \matrix{ 
    \frac{1}{\sqrt{2 |\vec{m}|}} & 0 & 0  \cr
    0 & \frac{1}{\sqrt{2 |\vec{m}|}} & 0  \cr
    0 & 0 & 1  \cr
    }\right) \, ,
\end{equation}
such that $R_{\alpha \beta} \tilde{F}_{\beta, k} = r_{\alpha k}$, for $k=1,2$, satisfying $[r_{x,k}, r_{y,k}] = i$. We will thus focus on the behavior of the quantum fluctuations $\mathcal{R} \equiv (r_{x,1}, r_{y,1}, \tilde{F}_{z,1}, r_{x,2}, r_{y,2}, \tilde{F}_{z,2})^{T} $, in particularly referring to the covariance matrix $\tilde{\Sigma}_{(\mu h) (\nu k)}^{\omega} = \lbrace \mathcal{R}_{\mu}, \mathcal{R}_{\nu} \rbrace / 2$. It is worth noticing that, starting from the vector of quantum fluctuation $ \vec{F} = ( F_{x,1}, F_{y,1} , F_{z,1}, F_{x,2}, F_{y,2} , F_{z,2}  )^T$ we obtain $\mathcal{R}$ as
\begin{equation}
    \mathcal{R} = ( \mathbb{I}_2  \otimes R) (\mathbb{I}_2 \otimes U  ) \vec{F} \, .
\end{equation}
Given the condition $i)$, the equation of motion for the covariance matrix $\tilde{\Sigma}^{\omega}$ can be obtained from the one derived in the original frame, $\Sigma^{\omega}$ [and given by Eq. \eref{EoM_covariancematrix}] as
\begin{eqnarray}
    \dot{\tilde{\Sigma}}^{\omega}  & = ( \mathbb{I}_2  \otimes R) (\mathbb{I}_2 \otimes U  )  \dot{{\Sigma}}^{\omega}  (\mathbb{I}_2 \otimes U^T  )( \mathbb{I}_2  \otimes R)   \, \\ \nonumber 
    & = \tilde{Q} \tilde{\Sigma}^{\omega} + {\tilde{\Sigma}}^{\omega}\tilde{Q}^{T} + \tilde{\Gamma}^{2} \tilde{G}^{\mathrm{sym}} + \tilde{\sigma} \tilde{S}^{\mathrm{sym}} \tilde{\sigma}^T \, .
\end{eqnarray}
By solving the equation of motion at stationarity we obtain the asymptotic covariance matrix $\tilde{\Sigma}_{\infty}^{\omega}$, from which we can discard the information on the irrelevant quantum fluctuations $\tilde{F}_{z,k}$, $k=1,2$.  Neglecting the third and sixth rows, as well as the third and sixth columns, we get the corresponding two-mode covariance matrix $\Sigma^{\omega}_2$, having the form
\begin{equation}
    2\Sigma^{\omega}_2 =  \left(\matrix{
    a_1 & c_{12} \cr
    c_{12}^T & a_2
    }
    \right) \, .
\end{equation}
We are now in the position of getting information upon the Gaussian quantum discord and a related measure of classical correlations. To do so, we follow Ref.\cite{AdessoDPRL10}. By defining 
\begin{eqnarray}
    A = \mathrm{det}(a_1) \, ,\quad B= \mathrm{det}(a_2) \, , \quad C = \mathrm{det}(c_{12}) \, , \quad D= \mathrm{det}(2\Sigma^{\omega}_2) \, , 
\end{eqnarray}
for a two mode covariance matrix the one-way classical correlations and one-way quantum discord \footnote{The $()^{{1 \leftarrow 2}}$ one-way correlations is obtained performing a measurement on the system labeled by $2$. } are defined, respectively, as
\begin{eqnarray}
    \mathcal{J}^{1 \leftarrow 2} = g(\sqrt{A}) - g(\sqrt{E_{\min}}) \, ,\\
    \mathcal{D}^{1 \leftarrow 2} = 
    g(\sqrt{B}) - g(\nu_{-}) - g(\nu_{+}) + g(\sqrt{E_{\min}}) \, ,
\end{eqnarray}
where 

\scalebox{0.8}
{$
    {E_{\min} = \left\lbrace 
    \eqalign{ & \frac{2C^2 + (B-1)(D-A) + 2|C| \sqrt{C^{2} + (B-1)(D-A)}}{(B-1)^2} \quad \mathrm{if} \, (D-AB)^2 \leq (1+B)C^2(A+D) \\
    & \frac{AB-C^2+D-\sqrt{C^4+(D-AB)^2-2C^2(AB+D)}}{2B} \quad \mathrm{otherwise}
    }
    \right.} \, , \\
$
} 
and the function $g(\cdot)$ is defined as
$$g(x) = \left(\frac{x+1}{2} \right)\ln{\left(\frac{x+1}{2} \right)} - \left(\frac{x-1}{2} \right) \ln{\left(\frac{x-1}{2} \right)} \, .$$
Finally, $\nu_{\pm}$ are the symplectic eigenvalues
of the matrix $2\Sigma^{\omega}_2$, found as
the positive eigenvalues of the matrix $2i\sigma \Sigma^{\omega}_2$, 
$$\nu_{\pm}^2 = \frac{1}{2} (A+B+2C) \pm  \frac{1}{2}\sqrt{(A+B+2C)^2 - 4D} \, ,$$
with $\nu_{\pm} \geq 1$. Collective entanglement between the two large-spins can be classified via logarithmic negativity, $ \mathcal{N} = \max{(0, -\ln(\tilde{\nu}_{-}))} $, where $\tilde{\nu}_{-}$ is the smallest symplectic eigenvalue of the partially transposed covariance matrix $ 2\Sigma^{\omega}_2 $. Equivalently, it can be obtained from $\tilde{\nu}_{-}$ by replacing $C$ with $-C$,  which corresponds to perform a time-reversal operation. In general, a state with covariance matrix $2\Sigma^{\omega}_2 $ will be entangled iff $\tilde{\nu}_{-}<1$. However, in this case it is $C=0$, as we could expect from the consideration performed on the asymptotic form of the covariance matrix. Thus $\tilde{\nu}_{-}\geq 1$ in all the parameter regimes $\Omega, \beta$, yielding a zero-valued logarithmic negativity $\mathcal{N}=0$.

Nonetheless, we can quantify correlations among the two large-spins, as given in terms of quantum discord and classical correlations, which are displayed in Fig.~\ref{fig:QCcor}, (a) and (b), respectively. The parameter regime $(T,\Omega)$ considered, corresponds, with respect to the mean-field magnetization, to the regimes where paramagnetic phase (high-temperature) and the retrieval phase (low-temperature) take place. We can appreciate a weak presence of both type of correlations, both increasing as the transition between paramagnetic phase, retrieval phase, and limit-cycle one is approached. Notice however that the presence of quantum discord is significantly smaller than classical correlations. We can conclude that, in the bulk of retrieval phase, as well as in the paramagnetic one, the system shows very weak correlations, and thus almost no footprint of quantumness is left, asymptotically and at the mesoscopic scale.

The situation could change, as outlined in \ref{A:covarianceM_HNN}, by adding a direct,  all to all, interaction to the model. In this case, some symmetries at the mean-field equation level are lost, this preventing us, presently, to gain quantitative results using the techniques adopted.

\begin{figure}
    \centering    \includegraphics[width=0.9\linewidth]{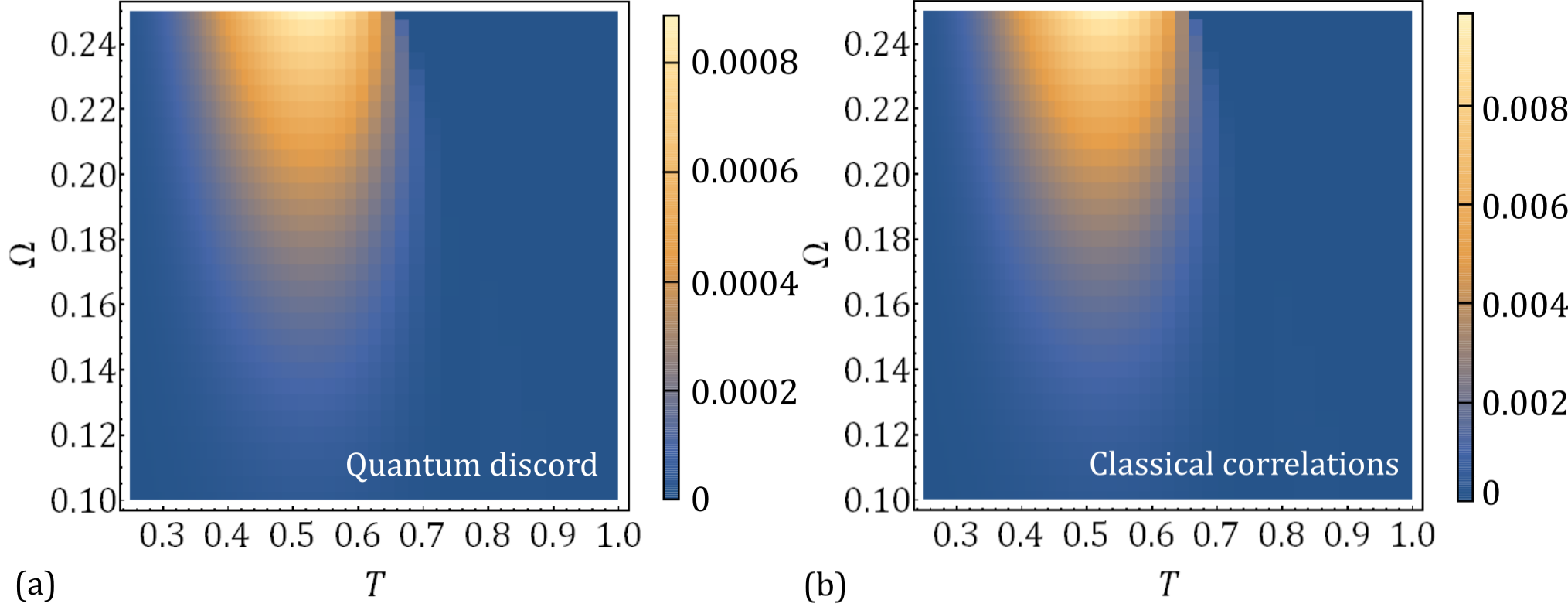}
    \caption{\textbf{Quantum and classical correlations.} (a) One way quantum discord, and (b) one-way classical correlations between the two large-spin operators of the one-memory case, $p=1$. In the displayed parameter regime, the system-average operators identify a retrieval region (low temperature) and a paramagnetic one (high temperature).}
    \label{fig:QCcor}
\end{figure}

\section{Conclusions}
In this work we have considered Markovian open quantum systems, describing their dynamics by means of a GKS-Lindblad master equation. As a most important feature, our model is characterized by a dissipation that involves collective operator-valued rates. To investigate dynamical and stationary properties of these models, we employed tools from algebras of operators, that can be found in the first part of the manuscript. Regarding the model, while focusing on average operators, we first reviewed the results on the validity of the mean-field approximation, in the thermodynamic limit. Building on these outcomes, we focused on the mesoscopic scale, deriving the dynamical maps evolving quantum fluctuation operators. The latter has been found to be a dissipative map that in general would mix, in a complicated manner, quantum and classical degrees of freedom. To simplify the analysis, we considered states that are invariant with respect to average operators, and subsequently obtained the equation ruling the evolution of the covariance matrix. We then applied these general results to the concrete example of an open quantum generalization of HNNs, employing a model similar to Ref.\cite{Rotondo:JPA:2018, FiorelliEtAl20}. Here, considering an infinite system size, and in the limit of vanishing storage capacity, we analyzed the asymptotic behavior of the average operators as described by the mean-field equations. In mapping the out-of-equilibrium phase diagram, one can appreciate phases that are the result of the competition between the coherent term and the dissipative one of the GKS-Lindblad generator. Furthermore, we analyzed the behavior of the asymptotic covariance matrix. This allows us to tackle the yet unexplored question of characterizing the presence of (quantum) correlations in this type of quantum generalized HNNs. In  fact, by means of the asymptotic covariance matrix,  we show that, beyond a small amount of classical correlations, only an even weaker form of quantum discord is present. We could thus conclude that, for this type of quantum generalization of HNNs, the main emerging quantum effects occur at the level of average operators only. 

As a possible outlook, related to the quantum generalized HNN, it would be interesting to understand how a pairwise interaction Hamiltonian \cite{MarshEtAl:PhysRevX:21} could change the amount of asymptotic quantum correlations. To this end, a step forward in this direction could be achieved by relating the asymptotic state with spin coherent states \cite{ZhangFG90, MichoelN04central}. 
Finally, beyond this example, our result can be of interest for analyzing models that feature collective operator-valued rates, which can be derived, e.g., from (mean-field) kinetically-constrained models, or applied, more in general, to those cases where the quantum master equation is compatible with a  collective description.

\section{Acknowledgments}
The author acknowledges L. M. Vasiloiu and S. Marcantoni for valuable discussions and suggestions. E. F. acknowledges funding by the European Union's Horizon Europe programme through Grant No. 101105267.

\appendix
\setcounter{lemma}{6}
\setcounter{theorem}{1}

\section{Lemmata mesoscopic dynamics}
\label{app:B}

\subsection{Lemma 7}
\begin{lemma}\label{lemma_squarefluctu_convergence}
    Given the generator $\linn$ specified by Eqs.~\eref{Lindblad}-\eref{delta} , with operator-valued functions $\Gamma_\ell (\Delta^\ell_N)$ satisfying Assumption~\eref{Gamma}, 
    $\forall \mu = 1,...,d^2$ \, and $\forall t < \infty$,
    $$ \lim_{N \rightarrow +\infty} \omega \left(  W^N(\vec{r}) e^{t \linn} \left[ \left( F_{\mu}^{N}(t)\right)^2 \right]  W^{N \dagger}(\vec{r})  \right) < +\infty \, ,$$ 
    if it is satisfied at $t=0$.
\end{lemma}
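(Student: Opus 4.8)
The strategy is to obtain a Grönwall-type bound for the aggregate quantity $G_N(t):=\sum_{\mu=1}^{d^2}\omega\!\big(W^N(\vec r)\,e^{t\mathcal{L}_N}[(F_\mu^N(t))^2]\,W^{N\dagger}(\vec r)\big)$; since every summand is non-negative, a uniform-in-$N$ bound on $G_N(t)$ gives the claim for each $\mu$. First I would note that $e^{t\mathcal{L}_N}$ is completely positive and unital and $W^N(\vec r)$ is unitary, so $\tilde\omega_t^{\vec r}(\cdot):=\omega\big(W^N(\vec r)\,e^{t\mathcal{L}_N}[\cdot]\,W^{N\dagger}(\vec r)\big)$ is again a state on $\mathcal{A}$; hence the Cauchy--Schwarz inequality holds for $\tilde\omega_t^{\vec r}$, giving for Hermitian $B,C$ the estimate $|\tilde\omega_t^{\vec r}(A\,B\,C)|\le\|B\|\,\tilde\omega_t^{\vec r}(A^2)^{1/2}\tilde\omega_t^{\vec r}(C^2)^{1/2}$ and, more generally, $|\tilde\omega_t^{\vec r}(F_\mu^N F_{\beta_1}^N\cdots F_{\beta_j}^N)|\le(2\sqrt N)^{j-1}\,(g_N^{(\mu)})^{1/2}(g_N^{(\beta_j)})^{1/2}$, where $g_N^{(\mu)}:=\tilde\omega_t^{\vec r}((F_\mu^N(t))^2)$ and I used the elementary a priori bound $\|F_\alpha^N(t)\|\le2\|v_\alpha\|\sqrt N\le2\sqrt N$. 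That bound is the only place where the $1/\sqrt N$ scaling of fluctuations enters, and it is exactly what converts products of fluctuation operators into quantities dominated by $G_N$.

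Next I would differentiate. Writing $(F_\mu^N(t))^2=N(m_\mu^N-\omega_\mu^N(t))^2$, using $\dot\omega_\mu^N(t)=\omega_t(\mathcal{L}_N[m_\mu^N])$ and the Leibniz defect of a GKS--Lindblad generator, one finds that $\frac{d}{dt}e^{t\mathcal{L}_N}[(F_\mu^N(t))^2]$ equals $e^{t\mathcal{L}_N}$ applied to $\sqrt N\,(F_\mu^N(t)\,\mathcal{G}_\mu^N(t)+\mathcal{G}_\mu^N(t)\,F_\mu^N(t))+\Phi_\mu^N(t)$, where $\mathcal{G}_\mu^N(t):=\mathcal{L}_N[m_\mu^N]-\omega_t(\mathcal{L}_N[m_\mu^N])$ (so that $\omega_t(\mathcal{G}_\mu^N(t))=0$) and $\Phi_\mu^N(t):=\sum_{\ell,k}[J_\ell^{k\dagger},F_\mu^N(t)]\,[F_\mu^N(t),J_\ell^k]$ is the dissipative (carr\'e-du-champ) term. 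Using Lemma~\ref{lemma_commutators}(i)--(ii) one checks that $\|[J_\ell^{k},F_\mu^N(t)]\|=O(N^{-1/2})$ uniformly in $k$, whence $\|\Phi_\mu^N(t)\|=O(1)$ uniformly in $N$, so $\Phi_\mu^N$ merely contributes a bounded constant to $\dot G_N$. Everything therefore hinges on taming the $\sqrt N$-enhanced drift.

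For this I would invoke Lemma~\ref{lemma_gen_action}, $\mathcal{L}_N[m_\mu^N]=f_\mu(\vec m^N)+R_N^\mu$ with $\|R_N^\mu\|\le C_L/N$, and Taylor-expand $f_\mu$ about $\vec\omega^N(t)$ via $m_\alpha^N=\omega_\alpha^N(t)+F_\alpha^N(t)/\sqrt N$. Because the expansion point is scalar, the first-order term is $\frac{1}{\sqrt N}\sum_\beta a_{\mu\beta}^N(t)F_\beta^N(t)$ with $|a_{\mu\beta}^N(t)|$ bounded uniformly in $N$ (the relevant derivatives of $f_\mu$ are controlled by the model constants together with Assumption~\ref{Gamma} and $|\omega_\alpha^N(t)|\le1$), while the remainder $\mathcal{Q}_\mu^N(t)$ is an absolutely convergent series of terms of the form $N^{-j/2}\times$(ordered product of $j\ge2$ fluctuation operators)$\times$(scalar), the convergence --- including of all the derivative series of $\Gamma_\ell^2$ --- being exactly what Assumption~\ref{Gamma} guarantees. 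The decisive point is that, since $\omega_t(F_\beta^N(t))=0$, the centering in $\mathcal{G}_\mu^N(t)$ cancels the scalar value $f_\mu(\vec\omega^N(t))$ and leaves $\mathcal{G}_\mu^N(t)=\frac{1}{\sqrt N}\sum_\beta a_{\mu\beta}^N(t)F_\beta^N(t)+\big(\mathcal{Q}_\mu^N(t)-\omega_t(\mathcal{Q}_\mu^N(t))\big)+\big(R_N^\mu-\omega_t(R_N^\mu)\big)$. Multiplying by $\sqrt N F_\mu^N(t)$: the first piece becomes a bilinear fluctuation $\sum_\beta a_{\mu\beta}^N(t)F_\mu^N(t)F_\beta^N(t)$ whose $\tilde\omega_t^{\vec r}$-expectation is $\le\mathrm{const}\cdot G_N(t)$ by Cauchy--Schwarz; the operator part of $\mathcal{Q}$ produces, term by term, $(j{+}1)$-fold products $N^{-(j-1)/2}F_\mu^N(t)F_{\beta_1}^N(t)\cdots F_{\beta_j}^N(t)$ whose expectation is bounded by $2^{j-1}(g_N^{(\mu)})^{1/2}(g_N^{(\beta_j)})^{1/2}$, by applying the $\|F_\beta^N\|\le2\sqrt N$ bound on the inner $j-1$ factors, and summing over $j$ with the $\Gamma_\ell^2$-coefficient bounds of Assumption~\ref{Gamma} again yields $\mathrm{const}\cdot G_N(t)$; the $R_N^\mu$-piece contributes $\le4C_L=O(1)$. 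The one genuinely delicate contribution is the scalar $-\,\omega_t(\mathcal{Q}_\mu^N(t))\,\sqrt N\,F_\mu^N(t)$: for $\vec r=\vec 0$ it vanishes identically, because $\omega_t(F_\mu^N(t))=0$, and for general $\vec r$ one first runs the whole argument at $\vec r=\vec 0$, obtains the uniform bound $\sum_\mu\omega_t((F_\mu^N(t))^2)\le\bar\Psi(t)$, which forces $|\omega_t(\mathcal{Q}_\mu^N(t))|=O(1/N)$ and hence makes $|\omega_t(\mathcal{Q}_\mu^N(t))|\,\sqrt N\,|\tilde\omega_t^{\vec r}(F_\mu^N(t))|=O(N^{-1/2})(g_N^{(\mu)})^{1/2}$ harmless. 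Collecting all contributions one obtains $\dot G_N(t)\le a(t)\,G_N(t)+b(t)\,\sqrt{G_N(t)}+c(t)$ with $a,b,c$ bounded on any $[0,T]$ and independent of $N$; comparing with the solution $\Psi$ of $\dot\Psi=a\Psi+b\sqrt\Psi+c$, with $\Psi(0)$ a finite upper bound for $G_N(0)$ valid for all large $N$ (which exists by hypothesis), yields $G_N(t)\le\Psi(t)$ for all $N$ and all $t<\infty$, i.e.\ the assertion.

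The main obstacle, as I see it, is organizational rather than conceptual: one must write $\mathcal{G}_\mu^N(t)$ so that the $O(\sqrt N)$ prefactor manifestly multiplies only centered fluctuation combinations, and keep the bookkeeping of the infinite Taylor expansion of the $\Gamma_\ell^2$'s under control so that the Cauchy--Schwarz estimates summed over the expansion order converge --- both of which are precisely what Assumption~\ref{Gamma} is tailored to deliver --- together with the $\vec r=\vec 0$ bootstrap that disposes of the single scalar remainder term.
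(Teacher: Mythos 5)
Your proposal is correct, and it follows the same overall scheme as the paper's proof: a Gr\"onwall-type differential inequality for the same aggregate quantity $\sum_{\mu}\omega\big(W^N(\vec r)\,e^{t\linn}[(F^N_\mu(t))^2]\,W^{N\dagger}(\vec r)\big)$, obtained by differentiating in $t$, splitting off the Lindblad Leibniz defect (the carr\'e-du-champ term, which Lemma~\ref{lemma_commutators} shows contributes only an $N$-independent constant), invoking Lemma~\ref{lemma_gen_action} to replace $\linn[m^N_\mu]$ by $f_\mu(\vec m^N)$ up to $O(1/N)$, and closing the estimate with the Cauchy--Schwarz inequality in the state $\omega(W^N e^{t\linn}[\cdot]W^{N\dagger})$ (the paper uses Lemma~\ref{Lemma-dilation} for this). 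Where you genuinely diverge is in the organization of the $\sqrt N$-enhanced drift. The paper centers the fluctuation on the mean-field trajectory $m_\mu(t)$, so the drift is exactly $N\big(f_\mu(\vec m^N)-f_\mu(\vec m(t))\big)(m^N_\mu-m_\mu(t))$ plus its adjoint; this difference is decomposed term by term so that every piece carries a factor $m^N_\beta-m_\beta(t)$ or $\Gamma^2_\ell(\Delta^\ell_N)-\Gamma^2_\ell(\Delta_\ell(t))$, the latter being handled directly by Lemma~\ref{proof_Lemma-aux} (with Lemma~\ref{lemma_bound_mf} supplying $|m_\mu(t)|\le 1$), so no Taylor expansion in $1/\sqrt N$ and no uncompensated scalar term ever arises, and the resulting inequality is purely linear in $\mathcal{E}^W_N$. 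You instead center on $\omega_t(m^N_\mu)$ (which is literally faithful to the definition of $F^N_\mu(t)$), Taylor-expand $f_\mu$ in powers of fluctuations, and must then neutralize the scalar piece $\omega_t(\mathcal{Q}^N_\mu)\sqrt N\,F^N_\mu$ via the $\vec r=\vec 0$ bootstrap; this is sound, but note that the bootstrap requires the $\vec r=\vec 0$ bound at $t=0$, which you should explicitly extract from the stated hypothesis (as the paper does right after the lemma, by moving the Weyl factors past $(F^N_\mu)^2$ at the cost of corrections controlled by Cauchy--Schwarz) rather than assume. In short, the paper's centering buys a shorter, purely linear Gr\"onwall argument relying on Lemma~\ref{proof_Lemma-aux}; your centering buys fidelity to the definition of $F^N_\mu(t)$ at the price of the bootstrap and a $\sqrt{G_N}$ term, which you correctly absorb.
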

\begin{proof}
    We start by defining the quantity 
    \begin{equation}
        \mathcal{E}^W_N(t) \equiv \sum_{\mu=1}^{d^2} \omega \left(  W^N(\vec{r}) e^{t \linn} \left[ \left( F_{\mu}^{N}(t)\right)^2 \right]  W^{N \dagger}(\vec{r})  \right) ,
     \end{equation}
such that, as  $\left( F_{\mu}^{N}(t) \right)^2$ is a positive quantity, $\omega \left(  W^N(\vec{r}) e^{t \linn} \left[ \left( F_{\mu}^{N}(t)\right)^2 \right]  W^{N \dagger}(\vec{r})  \right) \leq \mathcal{E}^W_N(t)$. The strategy of the proof is based on the demonstration of Theorem 1 in Ref.\cite{FiorelliEtALC_NJP_23} (see also \cite{CarolloL:PRL:21, Pickl11}. We will show that there exist $N$-independent constant $C_1, C_2$ such that $\dot{\mathcal{E}}^W_N(t) \leq C_1 \mathcal{E}^W_N+ C_2 $, and thus \cite{Gronwall19}
$$ \mathcal{E}^W_N(t) \leq e^{C_1 t} \mathcal{E}^W_N(0) + C_2(e^{C_1 t}-1)/C_1\, ,$$ 
which concludes the proof, provided that the same condition holds for the initial state, i.e. $\lim_{N \rightarrow + \infty} \mathcal{E}^W_N(0) < +\infty$. By appropriately employing the Cauchy-Schwartz inequality, it can be shown that this is satisfied if 
\begin{equation}
   \lim_{N \rightarrow +\infty }\omega([F_{\mu}^N]^2) = \lim_{N \rightarrow +\infty } \omega([m_{\mu}^N-m_{\mu}(0)]^2)N \leq +\infty \, .
\end{equation}
As commented in the main text, the latter identifies an initial, clustering state. 

We focus on the time derivative $\frac{d}{dt}\mathcal{E}^W_N(t)$, which can be written as
\begin{equation}
\eqalign{
    \frac{d}{dt}\mathcal{E}^W_N(t) & = \sum_{\mu} \omega\left(W^N(\vec{r}) e^{t\linn}\left[ \linn \left( \left( m_{\mu}^{N}-m_{\mu}(t) \right)^2 \right) \right]W^{N \dagger}(\vec{r})\right) N\\
    & -2 \sum_{\mu} \dot{m_{\mu}}(t) \omega \left( W^N(\vec{r}) e^{t\linn} \right[  m_{\mu}^{N}-m_{\mu}(t)  \left]W^{N \dagger}(\vec{r}) \right) N \, ,
    }
\end{equation}
where we used $\left( F_{\mu}^{N}(t) \right)^2 = N \left( m_{\mu}^{N}-m_{\mu}(t) \right)^2 $. Regarding the first term on the right-hand side, we exploit that $\linn[AB] = \linn[A]B+A\linn[B] + \sum_{\ell k}[J_\ell^{k \, \dagger},A][B, J_\ell^{k}]$, where $J_\ell^{k} $ are the jump operators defined by Eq.~\eref{jumps}. Here, taking $A=B=m_{\mu}^N - m_{\mu}(t)$, the operator $P = \sum_{\ell k}[J_\ell^{k \, \dagger},A][B, J_\ell^{k}]$ is norm-bounded as
\begin{equation}
    \| P\| \leq C_P/N \, , \qquad C_P= \sum_{\ell=1}^q \| j_\ell \|^2 [2 \delta \gamma'(\delta_\ell)+d^2a_{\max}\gamma(\delta_\ell)]^2 \, ,
\end{equation}
as shown in Ref.~\cite{FiorelliEtALC_NJP_23} as a consequence of Lemma \eref{lemma_commutators}. Therefore,
\begin{equation}\label{pw}
 P^W_t \equiv \sum_{\mu} \omega\left(W^N(\vec{r}) e^{t\linn}\left[  P  \right]W^{N \dagger}(\vec{r})\right) N\leq d^2 C_P \, ,    
\end{equation}
that is going to contribute to the constant $C_2$. For the remaining terms, we exploit that $i) \, {m}_{\mu}(t)$ is a scalar function, and $ii)$ as per Lemma \eref{lemma_gen_action}, the action of the generator on average operators is bounded as $ \| \mathcal{L}_N[m^{N}_{\alpha}] -  f_\alpha(\vec{m}^N)\| \leq \frac{C_{L}}{N} $. By introducing the quantity
\begin{equation}
\eqalign{
    D^{\mu, I}_{W}(t) & \equiv \omega\left(W^N(\vec{r}) e^{t\linn}\left[  \left(f_{\mu}(\vec{m}^N)- f_{\mu}(\vec{m}(t)) \right) \right.\right. \\
    & \left. \left. \times \left( m_{\mu}^{N}-m_{\mu}(t) \right) \right]W^{N \dagger}(\vec{r})\right) N \, ,
    }
\end{equation}
we can thus write
\begin{eqnarray}
    \eqalign{
    \frac{d}{dt}\mathcal{E}^W_N(t)  & = \sum_{\mu} \left(  D^{\mu, I}_{W}(t)  + D^{\mu, I \, \dagger}_{W}(t) \right)  + P_{t}^W + \\
    & + \sum_{\mu} \omega \left( W^N(\vec{r}) e^{t\linn} \right[L \left( m_{\mu}^{N}-m_{\mu}(t) \right) \left]W^{N \dagger}(\vec{r}) \right)N  + \\
    & + \sum_{\mu} \omega \left( W^N(\vec{r}) e^{t\linn} \right[ \left( m_{\mu}^{N}-m_{\mu}(t) \right) L \left]W^{N \dagger}(\vec{r}) \right)N \, ,
    }
\end{eqnarray}
where for any finite $N$,  $L \equiv \mathcal{L}_N[m^{N}_{\alpha}] -  f_\alpha(\vec{m}^N) $ .
Regarding the last two terms on the right-hand side. To this end, notice that by employing Lemma \eref{Lemma-dilation}, it is 
\begin{equation}
\eqalign{
    & | \omega \left( W^N(\vec{r}) e^{t\linn} \right[L \left( m_{\mu}^{N}-m_{\mu}(t) \right) \left]W^{N \dagger}(\vec{r}) \right) |N  \leq 2C_L\, ,
    }
\end{equation}
having further exploited $\| m_{\mu}^N \| \leq 1$, and $|m_{\mu}(t)| \leq 1$ by Lemma \eref{lemma_bound_mf}. As a consequence, we have the following bound
\begin{equation}\label{contrL}
    | \sum_{\mu} \omega \left( W^N(\vec{r}) e^{t\linn} \right[L \left( m_{\mu}^{N}-m_{\mu}(t) \right) \left]W^{N \dagger}(\vec{r}) \right)N | \leq 2 d^2 C_L \, , 
\end{equation}
and the same for the complex conjugate.
As a next step, let us focus on the term $ D^{\mu, I}_{W}(t)$, which, taking the expression of the generator in Lemma \eref{lemma_gen_action}, and after some algebraic manipulations, reads
\begin{equation}
\eqalign{
D_{W}^{\mu,I}(t) &=i\sum_{\beta=1}^{d^2} A_{\mu \beta} \, \omega \left( W^N e^{t \linn} \left([m_{\beta}^{N} -m_{\beta}(t)][m_{\mu}^{N} -m_{\mu}(t)] \right) W^{N \, \dagger}\right) N\\
& + i\sum_{\gamma, \beta =1}^{d^2} B_{\mu \beta \gamma}  \left\lbrace \omega \left( W^N  e^{t \linn} \left( [m_{\beta}^{N} -m_{\beta}(t)]m_{\gamma}^{N}[m_{\mu}^{N} -m_{\mu}(t)] \right) W^{N \, \dagger} \right) \right. \\
&  \left. + \omega \left( W^N  e^{t \linn} \left( m_{\beta}(t)[m_{\gamma}^{N} -m_{\gamma}(t)][m_{\mu}^{N} -m_{\mu}(t)] \right) W^{N \, \dagger}  \right) \right\rbrace N \\
& + \sum_{\ell=1}^{q}\sum_{ \beta=1}^{d^2} M_{\ell \mu}^{\beta} \left\lbrace \omega \left( W^N  e^{t \linn} \left( \Gamma_{\ell}^{2}(\Delta_{\ell}(t))[m_{\beta}^{N} -m_{\beta}(t)][m_{\mu}^{N} -m_{\mu}(t)]  \right) W^{N \, \dagger} \right)  \right. \\ 
& + \left. \omega \left( W^N  e^{t \linn} \left( \left[ \Gamma^{2}_{\ell}(\Delta^{\ell}_{N})-\Gamma^{2}_{\ell}(\Delta_{\ell}(t)) \right] m_{\beta}^{N} [m_{\mu}^{N} -m_{\mu}(t)]\right) W^{N \, \dagger} \right) \right\rbrace N.   
}
\end{equation}
From the above expression, it follows that we have to deal with terms of the type
\begin{itemize}
    \item[(I)] $\omega \left( W^N e^{t \linn} \left([m_{\beta}^{N} -m_{\beta}(t)]X[m_{\mu}^{N} -m_{\mu}(t)] \right) W^{N \, \dagger}\right) $,
    \item[(II)] $ \omega \left( W^N e^{t \linn} \left( \left[ \Gamma^2_\ell(\Delta^{\ell}_N)-\Gamma^2_\ell(\Delta_{\ell}(t))\right] X[m_{\alpha}^{N} - m_{\alpha}(t)] \right) W^{N \, \dagger} \right) $,
\end{itemize} with $X$ some operators. By exploiting Lemma \eref{Lemma-dilation} (see also Ref.~\cite{CarolloL:PRL:21}) terms such as (I) can be bounded as
\begin{equation}
\eqalign{
& |\omega \left( W^N e^{t \linn} \left([m_{\beta}^{N} -m_{\beta}(t)]X[m_{\mu}^{N} -m_{\mu}(t)] \right) W_N^{\dagger}\right)| N \\
& \le \left\| X \right\| \sqrt{\omega \left( W^N e^{t \linn} \left([m_{\beta}^{N} -m_{\beta}(t)]^2 \right) W_N^{\dagger}\right)} \\
& \times \sqrt{\omega \left( W^N e^{t \linn} \left([m_{\mu}^{N} -m_{\mu}(t)]^2 \right) W_N^{\dagger}\right)}N \\
& \leq  \left\| X \right\| \E^{W}(t) \, ,
}
\end{equation}
and thus we have
\begin{eqnarray}
& | \omega \left( W^N e^{t \linn} \left([m_{\beta}^{N} -m_{\beta}(t)][m_{\mu}^{N} -m_{\mu}(t)] \right) W_N^{\dagger}\right) | N \le  \E^W(t), \\
& |\omega \left( W^N e^{t \linn} \left([m_{\beta}^{N} -m_{\beta}(t)]m_{\gamma}^N[m_{\mu}^{N} -m_{\mu}(t)] \right) W_N^{\dagger}\right)| N \le \E^W(t), \\
& |\omega \left( W^N e^{t \linn} \left([m_{\beta}^{N} -m_{\beta}(t)]m_{\gamma}(t)[m_{\mu}^{N} -m_{\mu}(t)] \right) W_N^{\dagger}\right)| N \le \E^W(t),\\
& | \omega \left( W^N e^{t \linn} \left(\Gamma_{\ell}^{2}(\Delta_\ell(t))[m_{\beta}^{N} -m_{\beta}(t)][m_{\mu}^{N} -m_{\mu}(t)] \right) W_N^{\dagger}\right) | N \\ \nonumber
& \le \gamma^2(\delta_\ell)\E^W(t) \, ,
\end{eqnarray}
with $X=\mathbb{I}$, $X=m_{\gamma}^{N}$, $X=m_{\gamma}(t)$, and $X = \Gamma_{\ell}^{2}(\Delta_{\ell}(t)) $, respectively. Note that we have further exploited that $\|m_{\mu}^{N}\| \leq 1$, and, by Lemma \eref{lemma_bound_mf}, that $|m_{\mu}(t)| \leq 1 $ and $|\Delta_\ell (t)| \leq \delta_{\ell}$.
For terms such as (II), we exploit that, by Lemma \eref{proof_Lemma-aux}
\begin{equation}\label{lemma-aux-adt}
\eqalign{
& |\omega\left(W^N e^{t\mathcal{L}_N}\left[(\Gamma_\ell^2(\Delta_N^\ell)-\Gamma_\ell^2(\Delta_\ell(t)))Y \right] W^{N \, \dagger}\right)| \\
& \le  C \| Y \| \sum_{\alpha=1}^{d^2} |r_{\ell \alpha}| \sqrt{\omega(W^N e^{t \linn}[(m_{\alpha}^N-m_{\alpha}(t))^2]W^{N \, \dagger})} \, ,
}
\end{equation}
where $C=2\gamma(\delta_{\ell})\gamma'(\delta_\ell)$, hence
\begin{equation}
\eqalign{
& |\omega\left(W^N e^{t\mathcal{L}_N}\left[ \Gamma^2_\ell(\Delta^{\ell}_N)-\Gamma^2_\ell(\Delta_{\ell}(t))\right]X[m_{\mu}^{N} - m_{\mu}(t)])W^{N \, \dagger}\right)|N \le \\
&  2 \gamma(\delta_\ell)\gamma'(\delta_{\ell})\| X \| 2 \sum_{\beta} |r_{\ell \beta}| \sqrt{\omega_t([m_{\beta}^N-m_{\beta}(t)]^2)} N \le \\
& 4 \delta_\ell \gamma(\delta_\ell)\gamma'(\delta_{\ell}) \| X \| \E^W(t) \, ,
}
\end{equation}
with $\| X \|=\| m_{\gamma}^{N}\| \le 1$. Collecting the above results, the following bound holds
\begin{equation}
    |D_{W}^{\mu,I}(t)| \leq \frac{C_0}{2} \E^W(t) \, ,
\end{equation}
where 
$$  
C_0=2\left(d^{2} A + d^42 B+ qd^2 M [\gamma^2(\delta)+4\delta \gamma(\delta)\gamma'(\delta)]\right) \, .
$$
Retracing the same steps for the complex conjugate $ [D_{W}^{\mu,I}(t)]^\dagger$, and including the contributions from the bounds \eref{pw}, \eref{contrL}, we find
\begin{equation}
    \frac{d}{dt}\mathcal{E}^W_N(t) \leq |\frac{d}{dt}\mathcal{E}^W_N(t)| \leq (d^2 C_0) \E^W(t) + d^2(4C_L+ C_P) \, .
\end{equation}
We can then identify $C_1 \equiv d^2 C_0 $, and $C_2 \equiv d^2 (4C_L+ C_P)$. 

\end{proof}

\subsection{Lemma 8}
\begin{lemma}\label{lemma7}
    Given the local exponential $W_t^N(\vec{r}) = e^{i \vec{r}\cdot \vec{F}^N_t}$, the action of the single body Hamiltonian term contained in Eq.~\eref{e0_totalHamiltonian} is such that
    \begin{equation*}
    \eqalign{
        \lim_{N \rightarrow \infty} \omega_{\vec{r}_1 \vec{r}_2 } & \left( e^{t\linn} \left( \mathcal{H}_{\mathrm{sp}}^N \left[ W_t^{N}(\vec{r}) \right] \right) \right) \\
        & =        \lim_{N \rightarrow \infty} \omega_{\vec{r}_1 \vec{r}_2 } \left( e^{t\linn} \left[ \left( -\vec{r} \cdot ( A\vec{F}^N_t + \sqrt{N}A \vec{\omega}_t^{N} )  \right. \right. \right. \\
        & \left. \left.\left. + \frac{i}{2}\vec{r} \cdot (A T^N \vec{r})\right)W_t^{N}(\vec{r})  \right] \right) \, ,}
    \end{equation*}
    where $\mathcal{H}_{\mathrm{sp}}[\cdot] = i \sum_{\mu=1}^{d^2}\sum_{k=1}^N \epsilon_{\mu} [v_{\mu}^{(k)}, \cdot]$ is the action of the single-particle Hamiltonian, and $A$ is the matrix with elements $A_{\alpha \beta} = \sum_{\gamma}\epsilon_{\gamma}a^{\gamma}_{\alpha \beta} = \sum_{\gamma} \epsilon_{\gamma} a_{\gamma \alpha}^{\beta}$.
\end{lemma}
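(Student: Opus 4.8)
The plan is to expand the single-particle contribution $\mathcal{H}_{\mathrm{sp}}^N[W_t^N(\vec r)] = i\sum_{\mu=1}^{d^2}\sum_{k=1}^N \epsilon_\mu\,[v_\mu^{(k)},W_t^N(\vec r)]$ by commuting each $v_\mu^{(k)}$ through the exponential $W_t^N(\vec r)=e^{B}$, $B=i\vec r\cdot\vec F_t^N$, via the identity $[X,e^{B}]=-\bigl(\sum_{n\ge 1}\tfrac{1}{n!}\mathrm{ad}_B^{n}(X)\bigr)e^{B}$ with $\mathrm{ad}_B(X)=[B,X]$. The structural fact to exploit first is that $v_\mu^{(k)}$ commutes with every $v_\alpha^{(j)}$, $j\neq k$, and with the scalar shifts $\omega_\alpha^N(t)$, so that $\mathrm{ad}_B(v_\mu^{(k)})=\tfrac{i}{\sqrt N}\sum_\alpha r_\alpha[v_\alpha^{(k)},v_\mu^{(k)}]=\tfrac{i}{\sqrt N}\sum_{\alpha,\gamma}r_\alpha a_{\alpha\mu}^{\gamma}v_\gamma^{(k)}$ remains supported on site $k$ alone; iterating, $\|\mathrm{ad}_B^{n}(v_\mu^{(k)})\|\le (2\rho/\sqrt N)^{n}$ with $\rho=\sum_\alpha|r_\alpha|$.

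Next I would isolate the $n=1$ and $n=2$ terms exactly. For $n=1$, summing over $k$ with $\sum_k v_\gamma^{(k)}=N\,m_\gamma^N$ and writing $m_\gamma^N=\omega_\gamma^N(t)\mathbf 1+\tfrac{1}{\sqrt N}F_\gamma^N(t)$ gives a contribution $\sum_{\mu}\epsilon_\mu\bigl(\sqrt N\sum_{\alpha\gamma}r_\alpha a_{\alpha\mu}^{\gamma}\omega_\gamma^N(t)+\sum_{\alpha\gamma}r_\alpha a_{\alpha\mu}^{\gamma}F_\gamma^N(t)\bigr)W_t^N(\vec r)$; using $\sum_\mu\epsilon_\mu a_{\alpha\mu}^{\gamma}=-A_{\alpha\gamma}$ (a consequence of the cyclicity $a_{\alpha\beta}^{\gamma}=a_{\beta\gamma}^{\alpha}$ and the lower-index antisymmetry $a_{\alpha\beta}^{\gamma}=-a_{\beta\alpha}^{\gamma}$ of the structure constants of an orthonormal basis, which also shows $A^{T}=-A$) this equals $-\vec r\cdot\bigl(A\vec F_t^N+\sqrt N\,A\vec\omega_t^N\bigr)W_t^N(\vec r)$. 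For $n=2$, a second application of $\mathrm{ad}_B$ and $\sum_k v_{\gamma'}^{(k)}=Nm_{\gamma'}^N$ produce $\tfrac{i}{2}\sum_{\alpha\alpha'}r_\alpha r_{\alpha'}\sum_{\gamma}(-A_{\alpha\gamma})\,T_{\alpha'\gamma}^N\,W_t^N(\vec r)$ with $T_{\alpha'\gamma}^N=\sum_{\gamma'}a_{\alpha'\gamma}^{\gamma'}m_{\gamma'}^N$; the antisymmetry $T_{\alpha'\gamma}^N=-T_{\gamma\alpha'}^N$ turns $\sum_\gamma A_{\alpha\gamma}T_{\alpha'\gamma}^N$ into $-(AT^N)_{\alpha\alpha'}$, so this term collapses to $\tfrac{i}{2}\,\vec r\cdot(AT^N\vec r)\,W_t^N(\vec r)$, reproducing exactly the two terms in the claim.

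It remains to discard the tail $n\ge 3$. Summing $\|\mathrm{ad}_B^{n}(v_\mu^{(k)})\|\le(2\rho/\sqrt N)^{n}$ over the $N$ sites yields $\|\sum_k\mathrm{ad}_B^{n}(v_\mu^{(k)})\|\le N(2\rho/\sqrt N)^{n}=(2\rho)^{n}N^{1-n/2}$, so the full $n\ge 3$ remainder of $\mathcal H_{\mathrm{sp}}^N[W_t^N(\vec r)]$ is bounded in operator norm by $N^{-1/2}\bigl(\sum_\mu|\epsilon_\mu|\bigr)e^{2\rho}\to 0$. Denoting by $R_t^N$ the difference between $\mathcal H_{\mathrm{sp}}^N[W_t^N(\vec r)]$ and the operator $\bigl(-\vec r\cdot(A\vec F_t^N+\sqrt N A\vec\omega_t^N)+\tfrac i2\vec r\cdot(AT^N\vec r)\bigr)W_t^N(\vec r)$, the previous step gives $\|R_t^N\|\to 0$. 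Since $e^{t\linn}$ is unital and completely positive, hence a norm contraction on the quasi-local algebra, and $W^N(\vec r_1),W^N(\vec r_2)$ are unitary, one concludes $|\omega_{\vec r_1\vec r_2}(e^{t\linn}[R_t^N])|\le\|e^{t\linn}[R_t^N]\|\le\|R_t^N\|\to 0$, which is the asserted equality of limits.

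The only genuinely delicate point is the index bookkeeping in the $n=2$ step: one must combine the cyclicity identity for $a_{\alpha\beta}^{\gamma}$ with the antisymmetry of $T^N$ to reassemble $\sum_\gamma A_{\alpha\gamma}T_{\alpha'\gamma}^N$ into $(AT^N)_{\alpha\alpha'}$ with the correct sign, and one must observe that, although $R_t^N$ is formally the difference of two quantities each of order $\sqrt N$ in norm, those $O(\sqrt N)$ pieces cancel identically, leaving an $o(1)$ remainder. Everything else is routine, and the skeleton of the argument mirrors the corresponding step of Ref.~\cite{BenattiEtAl18}.
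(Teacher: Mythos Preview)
Your proof is correct and follows essentially the same route as the paper's: both expand $[v_\mu^{(k)},W_t^N(\vec r)]$ via the adjoint series, compute the $n=1$ and $n=2$ contributions explicitly using the structure-constant identities, and discard the $n\ge 3$ tail by a uniform $O(N^{-1/2})$ operator-norm bound. Your explicit invocation of the contractivity of $e^{t\linn}$ to pass the norm estimate through $\omega_{\vec r_1\vec r_2}(e^{t\linn}[\cdot])$ is a welcome clarification but not a substantive departure.
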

\begin{proof}
In the following, we will make use of the following relation,
\begin{equation}
\eqalign{
    e^xye^x  
    & = y + [x,y] + \frac{1}{2!}[x, [x,y ]] + \frac{1}{3!} [x,[x,[x,y]]] + \sum_{n\geq 4 } \frac{1}{n!} \mathbb{K}_{x}^{n}[y] \\
    & = \sum_{n\geq 0 } \frac{1}{n!} \mathbb{K}_{x}^{n}[y] \, ,
    }
\end{equation}
where $\mathbb{K}^n_x[y] = [x,\mathbb{K}_x^{n-1}[y]]$, and $\mathbb{K}_x^0[y]=y$. The action of the single particle Hamiltonian on the local exponential operators reads
    \begin{equation}
    \eqalign{
        & \mathcal{H}_{\mathrm{sp}}^N \left[ W_t^{N}(\vec{r}) \right] =  i \sum_{\mu=1}^{d^2}\sum_{k=1}^N \epsilon_{\mu} [v_{\mu}^{(k)},  W_t^{N}(\vec{r}) ] \\
        & =    i \sum_{\mu=1}^{d^2}\sum_{k=1}^N \epsilon_{\mu} \left( v_{\mu}^{(k)} -W_t^{N}(\vec{r})v_{\mu}^{(k)} W_t^{N \dagger}(\vec{r})  \right)W_t^{N}(\vec{r})  \\
        & = i \sum_{\mu=1}^{d^2}\sum_{k=1}^N \epsilon_{\mu} \left( v_{\mu}^{(k)} - \sum_{n \geq 0} \frac{(i)^n}{n!}\mathbb{K}^n_{\vec{r} \cdot \vec{F}^{N}_t}[v_{\mu}^{(k)}]  \right)W_t^{N}(\vec{r}) \\
        & = -i \sum_{\mu=1}^{d^2}\sum_{k=1}^N \epsilon_{\mu} \left( \sum_{n\geq 1} \frac{(i)^n}{n!}\mathbb{K}^n_{\vec{r} \cdot \vec{F}^{N}_t}[v_{\mu}^{(k)}]  \right)W_t^{N}(\vec{r}) \\
        & = -i \sum_{\mu=1}^{d^2}\sum_{k=1}^N \epsilon_{\mu} \left( i \left[ \vec{r} \cdot \vec{F}^{N}_t, v_{\mu}^{(k)} \right] - \frac{1}{2}\left[ \vec{r} \cdot \vec{F}^{N}_t,\left[ \vec{r} \cdot \vec{F}^{N}_t, v_{\mu}^{(k)} \right] \right] \right. \\
        & \left. +  \sum_{n \geq 3} \frac{i^n}{n!}\mathbb{K}^n_{\vec{r} \cdot \vec{F}^{N}_t}[v_{\mu}^{(k)}]  \right)W_t^{N}(\vec{r}) \\
        & =  \sum_{\mu=1}^{d^2}\sum_{k=1}^N \epsilon_{\mu} \left( \left[ \vec{r} \cdot \vec{F}^{N}_t, v_{\mu}^{(k)} \right] + \frac{i}{2}\left[ \vec{r} \cdot \vec{F}^{N}_t,\left[ \vec{r} \cdot \vec{F}^{N}_t, v_{\mu}^{(k)} \right] \right] \right. \\
        & \left. -i  \sum_{n \geq 3} \frac{(i)^n}{n!}\mathbb{K}^n_{\vec{r} \cdot \vec{F}^{N}_t}[v_{\mu}^{(k)}]  \right)W_t^{N}(\vec{r}) \, .
        }
    \end{equation}
Let us consider the contributions on the last two lines, resulting from  $\mathbb{K}^n_{\vec{r}\cdot \vec{F}^N_t}[v_{\mu}^{(k)}]$ with $n=1$, $n=2$, $n \geq 3$,  respectively. The first term reads
\begin{equation}
\eqalign{
    & \sum_{\mu=1}^{d^2}\sum_{k=1}^N \epsilon_{\mu} \left[ \vec{r} \cdot \vec{F}^{N}_t, v_{\mu}^{(k)} \right] = \sum_{\mu, \nu}\sum_{k} \epsilon_{\mu} r_{\nu} [F_{\nu}^{N}, v_{\mu}^{(k)}] \\ 
    &= \sum_{\mu, \nu}\sum_{k} \epsilon_{\mu} r_{\nu} \frac{1}{\sqrt{N}} \sum_{\gamma}a_{\nu \mu}^{\gamma} v_{\gamma}^{(k)} \\
    & = \sum_{\mu, \nu}\epsilon_{\mu} r_{\nu}  \sum_{\gamma}a_{\nu \mu}^{\gamma}\frac{1}{\sqrt{N}}\sum_{k} \left( v_{\gamma}^{(k)} - \omega_t(v_{\gamma}^{(k)}) + \omega_t(v_{\gamma}^{(k)})\right) =  \\
    & - \sum_{\nu} r_{\nu} \sum_{\mu} \sum_{\gamma}\epsilon_{\mu}a_{\mu \nu}^{\gamma}\left( F_{\gamma}^{N}(t) + \sqrt{N}\omega_t(m_{\gamma}^{N})\right) = - \vec{r}\cdot \left( A \vec{F}_t^N +\sqrt{N} A \vec{\omega}_t^{N} \right)
    }
\end{equation}
 The second contribution reads
 \begin{equation}
     \eqalign{ &  \frac{i}{2}\left[ \vec{r} \cdot \vec{F}^{N}_t,\left[ \vec{r} \cdot \vec{F}^{N}_t, v_{\mu}^{(k)} \right] \right] = \frac{i}{2N}\sum_{\mu, \nu, \gamma} \sum_{k} \epsilon_{\mu} r_{\nu}r_{\gamma} [v_{\nu}^{(k)},[v_{\gamma}^{(k)},v_{\mu}^{(k)}]] \\
     & =  \frac{i}{2N}\sum_{\mu, \nu, \gamma} \sum_{k} \epsilon_{\mu} r_{\nu}r_{\gamma} \sum_{\eta} a_{\gamma \mu}^{\eta} [v_{\nu}^{(k)},v_{\eta}^{(k)}] =-\frac{i}{2}\sum_{\nu, \gamma} r_{\nu}r_{\gamma} \sum_{\mu }\epsilon_{\mu} a_{\gamma \eta}^{\mu} T_{\nu \eta}^{N} \\ & = \frac{i}{2}\sum_{\nu, \gamma} r_{\gamma} A_{\gamma \eta} T_{ \eta \nu }^{N} r_{\nu} = \frac{i}{2} \vec{r} \cdot \left( A T^{N} \vec{r} \right) \, . }
 \end{equation}
Finally, we can show that the third contribution is bounded and negligible in the large $N$ limit. Indeed, it is
\begin{equation}
\eqalign{
   &  \left\| \left( -i \sum_{\mu=1}^{d^2}\sum_{k=1}^N \epsilon_{\mu} \sum_{n\geq 3} \frac{(i)^n}{n!}\mathbb{K}^n_{\vec{r} \cdot \vec{F}^{N}_t}[v_{\mu}^{(k)}]  \right)W_t^{N}(\vec{r}) \right\| \\
   & \leq  \left\|  \sum_{\mu=1}^{d^2}\sum_{k=1}^N \epsilon_{\mu}  \sum_{n \geq 3} \frac{(i)^n}{n!}\mathbb{K}^n_{\vec{r} \cdot \vec{F}^{N}_t}[v_{\mu}^{(k)}]   \right\| \\
   & \leq  \sum_{\mu=1}^{d^2}\sum_{k=1}^N |\epsilon_{\mu}| \sum_{n \geq 3}^{n} \frac{1}{n!}  \left\|  
 \mathbb{K}^n_{\vec{r} \cdot \vec{F}^{N}_t}[v_{\mu}^{(k)}]   \right\| \leq \frac{d^2 v |\epsilon| }{\sqrt{N}} e^{2vd^2r_{\max}},
    }
\end{equation}
where, $r_{\max}= \max_{\nu_i} r_{\nu_i}$, $\| v_{\mu}\| \equiv v$, $|\epsilon|=\max_{\mu}| \epsilon_{\mu}|$, and we used $ \mathbb{K}^n_{\vec{r} \cdot \vec{F}^{N}_t}[v_{\mu}^{(k)}] = \sum_{\nu_1,..., \nu_n} r_{\nu_{1}}\cdots r_{\nu_{n}}[F_{\nu_{1}}^{N},[...,[F_{\nu_{n}}^{N},v_{\mu}^{(k)}]]...]$. The latter expression, by exploiting $\| [X,Y] \| \leq 2\|X\| \, \| Y\|$, can be indeed bounded as
\begin{equation}
\eqalign{
    & \left\|  
 \mathbb{K}^n_{\vec{r} \cdot \vec{F}^{N}_t}[v_{\mu}^{(k)}]   \right\| \leq \sum_{\nu_1,..., \nu_n}\frac{  r_{\nu_{1}}\cdots r_{\nu_{n}}}{(N)^{n/2}}\| [v_{\nu_{1}}^{(k)},[...,[v_{\nu_{n}}^{(k)},v_{\mu}^{(k)}]]...] \| \\
 & \leq (d^2 r_{\max} 2 v)^n \frac{v}{N^{n/2}} \, .
 }
\end{equation}
\end{proof}
\subsection{Lemma 9}
\begin{lemma}\label{lemma8st}
    Given the local exponential $W_t^N(\vec{r}) = e^{i \vec{r}\cdot \vec{F}^N_t}$, the action of the two-body Hamiltonian term in Eq.~\eref{e0_totalHamiltonian} is such that
    \begin{equation*}
    \eqalign{
        \lim_{N \rightarrow \infty} \omega_{\vec{r}_1 \vec{r}_2 } & \left( e^{t\linn} \left( \mathcal{H}_{\mathrm{tp}}^N \left[ W_t^{N}(\vec{r}) \right] \right) \right) \\
        & =   \lim_{N \rightarrow \infty} \omega_{\vec{r}_1 \vec{r}_2 } \left( e^{t\linn} \left[ \left( -\frac{1}{2}\vec{r} \cdot ( T^N(- 2h^{\mathrm{I}} + 2i h^{\mathrm{R}})T^N \, \vec{r})\right. \right. \right. \\
        & - i \vec{r}\cdot (T^N (2ih^{\mathrm{R}})\vec{F}^N_t)  \\
        & \left. \left. \left. + \frac{1}{2}\vec{r} \cdot (\tilde{B}_t T^N \vec{r}) + i\vec{r}\cdot (\tilde{B}_t\vec{F}^N_t)+ \sqrt{N} i \vec{r}\cdot \tilde{B}_t^N \vec{\omega}_t^{N} \right)W_t^{N}(\vec{r})  \right] \right) \, ,}
    \end{equation*} 
    where $\mathcal{H}_{\mathrm{tp}}[\cdot] = i\frac{1}{N}\sum_{j,k}\sum_{\alpha,\beta}h_{\alpha \beta} [v_{\alpha}^{(k)} v_{\beta}^{(j)}, \cdot]$, and $\tilde{B}_t$ is a matrix with elements $\tilde{B}_{\beta \gamma}(t) = \sum_{\mu\nu}(2ih_{\mu\nu}^{\mathrm{R}})a_{\beta\gamma}^{\mu} \omega^{N}_{\nu}(t)$, with $h^{\mathrm{R}}_{\alpha \beta} = \mathrm{Re}( h_{\alpha \beta})$, and $h^{\mathrm{I}}_{\alpha \beta} = \mathrm{Im}( h_{\alpha \beta})$.
\end{lemma}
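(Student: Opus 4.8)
The plan is to mirror the proof of Lemma~\ref{lemma7}, replacing the single-site operators $v_\mu^{(k)}$ by the two-site operators $v_\alpha^{(k)}v_\beta^{(j)}$. Using the conjugation identity $W_t^N(\vec r)\,O\,W_t^{N\dagger}(\vec r)=\sum_{n\ge0}\frac{i^n}{n!}\mathbb{K}_{\vec r\cdot\vec F_t^N}^n[O]$ employed there, one writes
$$
\mathcal{H}_{\mathrm{tp}}^N\left[W_t^N(\vec r)\right]=-\frac{i}{N}\sum_{j,k=1}^N\sum_{\alpha,\beta}h_{\alpha\beta}\left(\sum_{n\ge1}\frac{i^n}{n!}\mathbb{K}_{\vec r\cdot\vec F_t^N}^n[v_\alpha^{(k)}v_\beta^{(j)}]\right)W_t^N(\vec r)\,,
$$
and separates the contributions $n=1$, $n=2$ and $n\ge3$. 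Each bracket is computed with the Leibniz rule and the single-site identity $[F_\nu^N(t),v_\alpha^{(k)}]=\frac{1}{\sqrt N}\sum_\gamma a_{\nu\alpha}^\gamma v_\gamma^{(k)}$, so that every application of $[\vec r\cdot\vec F_t^N,\cdot]$ carries a factor $1/\sqrt N$ and returns again two-site operators $v_\gamma^{(k)}v_{\gamma'}^{(j)}$, or a single-site operator (weighted by $1/N$ after the site sum) when a bracket acts twice on the same site.

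The key step is then the reorganisation $\frac{1}{N}\sum_{j,k}v_\gamma^{(k)}v_{\gamma'}^{(j)}=N\,m_\gamma^N m_{\gamma'}^N-\frac{1}{N}\sum_k v_\gamma^{(k)}v_{\gamma'}^{(k)}$, in which the diagonal term has norm $O(1)$ and is negligible against the $1/\sqrt N$ or $1/N$ prefactors produced by the brackets; inserting $m_\gamma^N=\omega_\gamma^N(t)+\frac{1}{\sqrt N}F_\gamma^N(t)$ then sorts every term by scale. At $n=1$ one gets a c-number drift of order $\sqrt N$, to be matched with $\sqrt N\,i\,\vec r\cdot\tilde B_t^N\vec\omega_t^N$; a term linear in $\vec F_t^N$, to be matched with $i\vec r\cdot(\tilde B_t\vec F_t^N)-i\vec r\cdot(T^N(2ih^{\mathrm{R}})\vec F_t^N)$; and an $O(1/\sqrt N)$ remainder quadratic in the fluctuations. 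The appearance of $\tilde B_t$ (which carries $\omega_t$, hence the time dependence) and of $h^{\mathrm{R}}$ follows from the antisymmetry $a_{\alpha\beta}^\gamma=-a_{\beta\alpha}^\gamma$ together with $h_{\alpha\beta}=h_{\beta\alpha}^*$ and the $j\leftrightarrow k$ symmetry of the double site sum. At $n=2$ the double bracket produces, after the same reorganisation, the quadratic forms $-\frac{1}{2}\vec r\cdot(T^N(-2h^{\mathrm{I}}+2ih^{\mathrm{R}})T^N\vec r)$ and $\frac{1}{2}\vec r\cdot(\tilde B_t T^N\vec r)$, with $T^N_{\alpha\beta}=\sum_\gamma a_{\alpha\beta}^\gamma m_\gamma^N$; the combination $-2h^{\mathrm{I}}+2ih^{\mathrm{R}}$ comes from contracting two structure constants against $h_{\alpha\beta}=h^{\mathrm{R}}_{\alpha\beta}+ih^{\mathrm{I}}_{\alpha\beta}$, and only the c-number part of $T^N$ survives in the limit. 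Finally, for $n\ge3$ the bound $\|\mathbb{K}_{\vec r\cdot\vec F_t^N}^n[v_\alpha^{(k)}v_\beta^{(j)}]\|\le(C r_{\max})^n N^{-n/2}$, obtained as in Lemma~\ref{lemma7} from $\|[X,Y]\|\le2\|X\|\,\|Y\|$ and $\|v_\alpha\|\le1$, gives $\frac{1}{N}\sum_{j,k}\sum_{n\ge3}\frac{1}{n!}\|\cdots\|\le N^{-1/2}e^{C r_{\max}}\to0$ in norm.

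It then remains to take $\omega_{\vec r_1\vec r_2}(e^{t\linn}[\cdot])$ of the three contributions: the drift and the $\vec F$-linear term reproduce the stated expression directly, whereas the $O(1/\sqrt N)$ products of fluctuations vanish in the mesoscopic limit since $\omega_t$ is clustering (Theorem~\ref{theorem}) and Lemma~\ref{lemma_squarefluctu_convergence} keeps the $(F^N)^2$-type expectations finite. As in the proof of Theorem~\ref{theorem-mesoscopicdynamics}, average operators and the functions $\Gamma_\ell(\Delta_N^\ell)$ commute with $W^N$ and with the fluctuations in the thermodynamic limit, so the operator ordering inside $\omega_{\vec r_1\vec r_2}$ is immaterial for the limit. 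I expect the main obstacle to be exactly this bookkeeping of scales: one must verify that the diagonal $j=k$ corrections, the $n\ge3$ tail and the $O(1/\sqrt N)$ fluctuation products generated at $n=1,2$ are all genuinely subleading, and that symmetrising the site double sum together with $h_{\alpha\beta}=h_{\beta\alpha}^*$ correctly yields $h^{\mathrm{R}}$ in the drift and linear terms while the full $h$ organises the quadratic form. Modulo this accounting the argument is a lengthy but routine extension of Lemma~\ref{lemma7}, paralleling the corresponding step in Ref.~\cite{BenattiEtAl18}.
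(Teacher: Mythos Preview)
Your strategy is sound and leads to the same conclusion, but it is organised differently from the paper. The paper does \emph{not} expand $\mathbb{K}^n$ on the two-site operators $v_\alpha^{(k)}v_\beta^{(j)}$ directly; instead it first passes to the collective operators $V_\alpha^N=\frac{1}{\sqrt N}\sum_k v_\alpha^{(k)}$ and splits $h_{\alpha\beta}=h^{\mathrm R}_{\alpha\beta}+ih^{\mathrm I}_{\alpha\beta}$, so that $\mathcal{H}_{\mathrm{tp}}$ decomposes into an imaginary piece $H_I=-\sum h^{\mathrm I}_{\alpha\beta}[[V_\alpha^N,\cdot],V_\beta^N]$ and a real piece $H_R=i\sum h^{\mathrm R}_{\alpha\beta}\{[V_\alpha^N,\cdot],V_\beta^N\}$. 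The double-commutator $H_I$ then produces only bounded objects (the $T^N h^{\mathrm I}T^N$ quadratic form) with no fluctuation control needed, while $H_R$ is further split into $H_R^{II}$ (the part proportional to $\omega_\nu^N$, giving the $\sqrt N$ drift and the $\tilde B_t$ terms) and $H_R^{I}$ (the anticommutator with $F_\nu^N$, giving the $T^N h^{\mathrm R}$ terms). Your route---expand in $n$, apply Leibniz, then insert $m_\gamma^N=\omega_\gamma^N+F_\gamma^N/\sqrt N$---reaches the same five pieces but distributes the bookkeeping differently: the $h^{\mathrm R}/h^{\mathrm I}$ separation emerges for you only \emph{after} symmetrising in $(j,k)$ and using $h_{\alpha\beta}=h_{\beta\alpha}^*$, whereas the paper builds it in from the start. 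The paper's decomposition isolates more cleanly the single place where Lemma~\ref{lemma_squarefluctu_convergence} is needed (the $S_\mu^N F_\nu^N$ term inside $H_R^I$), while in your organisation the analogous control is required for the $\frac{1}{\sqrt N}F_\gamma^N F_\beta^N W_t^N$ leftovers at $n=1$; note that $\frac{1}{\sqrt N}F_\gamma^N=m_\gamma^N-\omega_\gamma^N$ is bounded, so the same Cauchy--Schwarz/Lemma~\ref{lemma_squarefluctu_convergence} argument the paper uses for $S_\mu^N F_\nu^N$ applies verbatim.

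One slip to fix: the identity you wrote, $\frac{1}{N}\sum_{j,k}v_\gamma^{(k)}v_{\gamma'}^{(j)}=N\,m_\gamma^N m_{\gamma'}^N-\frac{1}{N}\sum_k v_\gamma^{(k)}v_{\gamma'}^{(k)}$, is incorrect; the left-hand side equals $N\,m_\gamma^N m_{\gamma'}^N$ exactly, with no diagonal subtraction. This does not affect your argument (the diagonal piece you would have discarded is genuinely $O(1)$ and subleading where it would appear), but the formula as stated should be corrected.
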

\begin{proof}
    Let us write the action of the two-body Hamiltonian term in a more convenient form. By introducing the operators $V_{\alpha} ^N= \frac{1}{\sqrt{N}}\sum_{k}v_{\alpha}^{(k)}$, and writing $h_{\alpha \beta} = h_{\alpha \beta}^{\mathrm{R}} + i \ h_{\alpha \beta}^{\mathrm{I}}$, it is 
\begin{equation}\label{action_intarHamiltonian}
\eqalign{
& \mathcal{H}_{\mathrm{tp}}[\cdot]  = i\sum_{\alpha,\beta}h_{\alpha \beta} [V_{\alpha}^{N} V_{\beta}^{N}, \cdot]  
\\
& = i\sum_{\alpha \beta } h_{\alpha \beta}^{\mathrm{R}} \left( V_{\alpha}^{N}[V_{\beta}^{N}, \cdot] + [V_{\alpha}^{N}, \cdot]V_{\beta}^{N} \right) - \sum_{\alpha \beta} h_{\alpha \beta}^{\mathrm{I}} \left( V_{\alpha}^{N}[V_{\beta}^{N}, \cdot] - [V_{\alpha}^{N}, \cdot]V_{\beta}^{N} \right) \\
& =  i\sum_{\alpha \beta } h_{\alpha \beta}^{\mathrm{R}} \left\lbrace  [V_{\alpha}^{N}, \cdot]V_{\beta}^{N} \right\rbrace - \sum_{\alpha \beta} h_{\alpha \beta}^{\mathrm{I}} \left[ \left[V_{\alpha}^{N}, \cdot\right],V_{\beta}^{N} \right] \, ,
}
\end{equation}
where we used that $ h_{\alpha \beta} = -h_{\beta \alpha}^{*} $, and thus $h_{\alpha \beta}^{\mathrm{R}} = h_{ \beta \alpha}^{\mathrm{R}} $, and $h_{\alpha \beta}^{\mathrm{I}} = - h_{ \beta \alpha}^{\mathrm{I}} $.

We then consider the action of the second term on exponential operators, $ H_I \equiv - \sum_{\alpha \beta} h_{\alpha \beta}^{\mathrm{I}} \left[ \left[V_{\alpha}^{N}, W_t^{N}(\vec{r}) \right],V_{\beta}^{N} \right] $. We first note that
\begin{equation}
    \| \left[V_{\mu}^{N}, W_t^{N}(\vec{r}) \right]+ i [\vec{r} \cdot \vec{F}^N_t, V_{\mu}^{N}]\| \leq \frac{v}{\sqrt{N}}e^{2d^2 r_{\max}} \, ,
\end{equation}
which can be shown by considering the commutator
\begin{equation}
\eqalign{
    \left[V_{\mu}^{N}, W_t^{N}(\vec{r}) \right] & = \left( V_{\mu}^N- W_t^N(\vec{r}) V_{\mu}^N W_t^{N, \dagger}(\vec{r})   \right) W_t^N(\vec{r}) \\
    & = \left( V_{\mu}^N - \sum_{n \geq 0 } \mathbb{K}^{n}_{i\vec{r} \cdot \vec{F}^N_t} [V_{\mu}^N]\right)W_t^N(\vec{r})\\
    & = -\left( i [\vec{r} \cdot \vec{F}^N_t, V_{\mu}^{N}] + \sum_{n \geq 2} \frac{i^n}{n!}  \mathbb{K}^{n}_{\vec{r} \cdot \vec{F}^N_t} [V_{\mu}^N] \right) \, ,
    }
\end{equation}
and by exploiting $\| \mathbb{K}^{n}_{\vec{r} \cdot \vec{F}^N_t} [V_{\mu}^N]\| \leq (d^2 r_{\max} 2 v)^n v/N^{n+1}$. Analogously, it can be proven that 
\begin{equation}
\left\| \left[ \sum_{n \geq 2} \frac{i^n}{n!}  \mathbb{K}^{n}_{\vec{r} \cdot \vec{F}^N_t} [V_{\mu}^N] , V_{\nu}^N  \right] \right\|  \leq \frac{v^2}{N}e^{2d^2r_{\max}} \, .
\end{equation}
For the sake of a lighter notation we define $S_{\mu}^N \equiv \sum_{n \geq 2} \frac{i^n}{n!}  \mathbb{K}^{n}_{\vec{r} \cdot \vec{F}^N_t} [V_{\mu}^N] $, and consider the following commutator at leading order
\begin{equation}
\eqalign{
    &\left[ \left[V_{\mu}^{N}, W_t^{N}(\vec{r}) \right],V_{\nu}^{N} \right]  = \left( V_{\mu}^{N} W_t^{N}(\vec{r})V_{\nu}^{N} - W_t^{N}(\vec{r})V_{\mu}^{N} V_{\nu}^{N} \right. \\
    & \left. - V_{\nu}^{N}  V_{\mu}^{N} W_t^{N}(\vec{r}) + V_{\nu}^{N} W_t^{N}(\vec{r})V_{\mu}^{N}\right) \\
    & = \left[ \left( V_{\mu}^{N} - W_t^{N}(\vec{r})V_{\mu}^{N} W_t^{N \dagger}(\vec{r})\right) W_t^{N}(\vec{r})V_{\nu}^{N} W_t^{N \dagger}(\vec{r}) \right. \\ 
    &-  \left. V_{\nu}^{N} \left( V_{\mu}^{N} - W_t^{N}(\vec{r})V_{\mu}^{N} W_t^{N \dagger}(\vec{r}) \right) \right]W_t^N(\vec{r}) \\
    & = \left[ \left( -i [\vec{r} \cdot \vec{F}^N_t, V_{\mu}^{N}] -S_{\mu}^N \right) \left(  V_{\nu }^N + i [\vec{r} \cdot \vec{F}_t^N , V_{\nu}^N] +S_{\nu}^N \right) \right. \\
    & \left. - V_{\nu}^N\left( -i [\vec{r} \cdot \vec{F}^N_t, V_{\mu}^{N}] -S_{\mu}^N \right) \right] W_t^N(\vec{r}) \\
    & = \left( -i\left[[\vec{r} \cdot \vec{F}_{t}^{N}, V_{\mu}^N],V_{\nu}^N \right] - [S_{\mu}^N, V_{\nu}^N] - i S_{\mu}^N [\vec{r} \cdot \vec{F}_{t}^{N}, V_{\nu}^N] \right. \\
    & \left. - i[\vec{r} \cdot \vec{F}_{t}^{N}, V_{\mu}^N] S_{\nu}^N -S_{\mu}^N S_{\nu}^N + \left[ \vec{r} \cdot \vec{F}_{t}^{N}, V_{\mu}^N \right] \left[ \vec{r} \cdot \vec{F}_{t}^{N}, V_{\nu}^N \right] \right) W_t^N(\vec{r}) \\
    & \sim \left( \left[ \vec{r} \cdot \vec{F}_{t}^{N}, V_{\mu}^N \right] \left[ \vec{r} \cdot \vec{F}_{t}^{N}, V_{\nu}^N \right] \right) W_t^N(\vec{r}) \,.
    }
\end{equation}
Thus, 
\begin{equation}\label{l8_Himaginary}
\eqalign{
   & \lim_{N \rightarrow \infty} \omega_{\vec{r}_1 \vec{r}_2} \left( e^{t \linn} [H_I]   \right) = \lim_{N \rightarrow \infty} \omega_{\vec{r}_1 \vec{r}_2} \left( e^{t \linn} \left[- \sum_{\alpha \beta} h_{\alpha \beta}^{\mathrm{I}} \left[ \left[V_{\alpha}^{N}, W_t^{N}(\vec{r}) \right],V_{\beta}^{N} \right] \right]  \right) \\
    & = \lim_{N \rightarrow \infty} \omega_{\vec{r}_1 \vec{r}_2}  
    \left( e^{t \linn} \left[- \sum_{\alpha \beta} h_{\alpha \beta}^{\mathrm{I}}    
      \left( \left[ \vec{r} \cdot \vec{F}_{t}^{N}, V_{\alpha}^N \right] \left[ \vec{r} \cdot \vec{F}_{t}^{N}, V_{\beta}^N \right] \right) W_t^N(\vec{r})\right] \right) \\
      & =\lim_{N \rightarrow \infty} \omega_{\vec{r}_1 \vec{r}_2}  
    \left( e^{t \linn} \left[ \left(
      \sum_{\mu \nu}  \sum_{\alpha \beta} r_{\mu} T_{\mu \alpha}^N h_{\alpha \beta}^{\mathrm{I}} T_{\beta \nu}^N r_{\nu}  \right) W_t^N(\vec{r}) \right] \right)  \\
      & =\lim_{N \rightarrow \infty} \omega_{\vec{r}_1 \vec{r}_2}  
    \left( e^{t \linn} \left[ \left(
      \vec{r} \cdot \left(T^N h^{\mathrm{I}} T^N \vec{r}\right)  \right) W_t^N(\vec{r})
      \right]\right) \,. 
    }
\end{equation}

Let us now consider the first action of the first term in Eq.~\eref{action_intarHamiltonian} on local exponential, i.e. $H_{R} \equiv i\sum_{\alpha \beta } h_{\alpha \beta}^{\mathrm{R}} \left\lbrace  [V_{\alpha}^{N}, W_t^N(\vec{r})],V_{\beta}^{N} \right\rbrace $, that can be written as 
\begin{equation}
    \eqalign{
     H_{R} & = i \sum_{\mu \nu } h_{\mu \nu}^{\mathrm{R}} \left\lbrace  \left[F_{\mu}^N(t) + \sqrt{N} \omega_t(m_{\mu}^N ) , W_t^N{(\vec{r})} \right] F_{\nu}^N(t) + \sqrt{N} \omega_t(m_{\nu}^N)\right\rbrace \\
    &  = i \sum_{\mu \nu } h_{\mu \nu}^{\mathrm{R}} \left\lbrace  \left[F_{\mu}^N(t) , W_t^N{(\vec{r})} \right], F_{\nu}^N(t) \right\rbrace + 2  i \sum_{\mu \nu } h_{\mu \nu}^{\mathrm{R}}  \omega_t(m_{\nu}^N) \sum_{k} \left[v_{\mu}^{(k)}  , W_t^N{(\vec{r})} \right] \\
    & \equiv H_R^{I}+H_R^{II} \, .
    }
\end{equation}
We will first focus on $ H_R^{I} $, starting from the anticommutator 
\begin{equation}
\eqalign{
   & \left\lbrace  \left[F_{\mu}^N(t) , W_t^N{(\vec{r})} \right] ,F_{\nu}^N(t) \right\rbrace = - \left\lbrace \left(i\left[  \vec{r} \cdot \vec{F}_t^N , V_{\mu}^N\right] + S_{\mu}^N\right) W_t^N(\vec{r})  , F_{\nu}^N(t)\right\rbrace  \\
   & = -\left[ \left( i\left[  \vec{r} \cdot \vec{F}_t^N , V_{\mu}^N\right] + S_{\mu}^N \right) \left( 2F_{\nu}^N(t) + i[\vec{r}\cdot \vec{F}_t^N, V_{\nu}^N] + S_{\nu}^N \right) \right.\\
   & \left. + \left[F_{\nu}^N(t), i\left[  \vec{r} \cdot \vec{F}_t^N , V_{\mu}^N\right] + S_{\mu}^N \right] \right] W_t^N{(\vec{r})} \\
   & \sim - \left\lbrace \left( 2 i \left[ \vec{r} \cdot \vec{F}_t^N , V_{\mu}^N  \right] + 2S_{\mu}^N \right) F_{\nu}^N(t)- \left[ \vec{r} \cdot \vec{F}_t^N , V_{\mu}^N  \right] \left[ \vec{r} \cdot \vec{F}_t^N , V_{\nu}^N  \right]  \right\rbrace W_t^N{(\vec{r})} \,
   }
\end{equation}
where the last step we are taking in consideration the leading order only. It can be further expressed as
\begin{equation}
\eqalign{
    & \left\lbrace  \left[F_{\mu}^N(t) , W_t^N{(\vec{r})} \right] ,F_{\nu}^N(t) \right\rbrace =   \\
    & \sim \left( -2i\sum_{\alpha} r_{\alpha } T_{\alpha \mu}^N F_{\nu}^N(t)  - \sum_{\alpha \beta} r_{\alpha} T_{\alpha \mu}^N T_{\nu \beta}^N r_{\nu}  -2S_{\mu}^N F_{\nu}^N(t) \right)W_{t}^N(\vec{r}) \, .
    }
\end{equation}
With regard to the last expression, we need to demonstrate that the last term is norm-bounded and negligible in the large $N$ limit. By using the result of Lemma \eref{lemma_squarefluctu_convergence},
\begin{equation}\label{statement_lemmB2}
\eqalign{
    & \lim_{N \rightarrow +\infty}   \omega \left(  W^N(\vec{r}) e^{t \linn} \left[ \left( F_{\mu}^{N}\right)^2 \right]  W^{N \dagger}(\vec{r})  \right) < +\infty
    }   
\end{equation}
and exploiting that $\lim_{N \rightarrow \infty} \| [ S_{\mu}^N, F_{\nu}^N(t)] \| = 0$, it is
\begin{equation}
\eqalign{
    & \lim_{N \rightarrow \infty} \omega_{\vec{r}_{1} \vec{r}_{2}} \left( e^{t \linn } \left[  S_{\mu}^N F_{\nu}^N(t) W_t^{N}(\vec{r}) \right] \right) = \lim_{N \rightarrow \infty} \omega_{\vec{r}_{1} \vec{r}_{2}} \left( e^{t \linn } \left[  F_{\nu}^N(t) S_{\mu}^N  W_t^{N}(\vec{r}) \right] \right) \\
    & \leq \lim_{N \rightarrow +\infty} \| S_{\mu}^N \| \sqrt{ \omega \left(  W^N(\vec{r}) e^{t \linn} \left[ \left( F_{\mu}^{N}\right)^2 \right]  W^{N \dagger}(\vec{r})  \right)} = 0
    }   
\end{equation}
From the above result it follows that 
\begin{equation}\label{l8_HRealI}
   \eqalign{
    & \lim_{N \rightarrow \infty} \omega_{\vec{r}_{1} \vec{r}_{2}} \left( e^{t \linn } \left[ H_{R}^I \right] \right) = \\
    & \lim_{N \rightarrow \infty} \omega_{\vec{r}_{1} \vec{r}_{2}} \left( e^{t \linn } \left[ \left( -2i \vec{r} \cdot \left( T^N (i h^{\mathrm{R}})\vec{F}_t^N \right) - \vec{r} \cdot \left( T^N (ih^{\mathrm{R}}) T^N \right) \right) W_t^N{\vec{r} }\right] \right) \, .
    }
\end{equation}
We can  now proceed with the term $H_R^{II} = 2  i \sum_{\mu \nu } h_{\mu \nu}^{\mathrm{R}}  \omega_t(m_{\nu}^N) \sum_{k} \left[v_{\mu}^{(k)}  , W_t^N{(\vec{r})} \right]$. To this end, we consider the following commutator at the leading order,
\begin{equation}
\eqalign{
  & \sum_{k} \left[v_{\mu}^{(k)}  , W_t^N{(\vec{r})} \right] =  \\
  & \left( -i\left[  \vec{r} \cdot \vec{F}_t^N , \sum_k v_{\mu}^{(k)} \right] + \frac{1}{2} \left[  \vec{r} \cdot \vec{F}_t^N , \left[  \vec{r} \cdot \vec{F}_t^N , \sum_k v_{\mu}^{(k)} \right]  \right]-\tilde{S}_{\mu}^{N} \right) W_t^N(\vec{r})\\
  & \sim \left( i\left[  \sum_k v_{\mu}^{(k)} , \vec{r} \cdot \vec{F}_t^N  \right] + \frac{1}{2} \left[  \vec{r} \cdot \vec{F}_t^N , \left[  \vec{r} \cdot \vec{F}_t^N , \sum_k v_{\mu}^{(k)} \right]  \right] \right) W_t^N(\vec{r}) \, ,
  }
\end{equation}
where we defined $\tilde{S}_{\mu}^{N} \equiv \sum_{n \geq 3} \frac{i^n}{n!}  \mathbb{K}^{n}_{\vec{r} \cdot \vec{F}^N_t} [\sum_{k} v_{\mu}^{(k)}] $, and in the last step we kept only the leading order contributions, as it can be proven that $\| \tilde{S}_{\mu}^{N} \|  \leq \frac{v}{\sqrt{N}} e^{2d^{2}r_{\max} v} $. We can thus express the two remaining contributions to $H_{R}^{II}$ exploiting 
 \begin{equation}
\eqalign{
    & 2  i \sum_{\alpha} \sum_{\mu \nu } h_{\mu \nu}^{\mathrm{R}}  \omega_t(m_{\nu}^N) \frac{1}{\sqrt{N}} i\sum_k [v_{\mu}^{(k)}, v_{\alpha}^{(k)}]r_{\alpha} \\
    & = - i \sum_{\alpha \beta} \sum_{\mu \nu } 2  ih_{\mu \nu}^{\mathrm{R}}  \omega_t(m_{\nu}^N) \frac{1}{\sqrt{N}} a_{\alpha \mu}^{\beta} \sum_k v_{\beta}^{(k)} r_{\alpha}  \\
    &  = i \sum_{\alpha \beta} r_{\alpha} \sum_{\mu \nu } 2 i h_{\mu \nu}^{\mathrm{R}}  \omega_t(m_{\nu}^N) a_{\alpha \beta}^{\mu} \left( F_{\beta}^{N}(t)+ \sqrt{N}\omega_{\beta}^{N}(t)  \right)\\
        & = i \sum_{\alpha \beta} r_{\alpha } \tilde{B}_{\alpha \beta}(t) \left( F_{\beta}^{N}(t)+ \sqrt{N}\omega_{\beta}^{N}(t)   \right) = i \vec{r} \cdot ( \tilde{B}_t \vec{F}_{t}^{N} ) + \sqrt{N} \vec{r} \cdot (i \tilde{B}_t \vec{\omega}_t^N ) \, ,
    }
\end{equation}
and
\begin{equation}
    \eqalign{
    & 2  i \sum_{\mu \nu } h_{\mu \nu}^{\mathrm{R}}  \omega_t(m_{\nu}^N)
    \frac{1}{2} \sum_k  \left[  \vec{r} \cdot \vec{F}_t^N , \left[  \vec{r} \cdot \vec{F}_t^N , v_{\mu}^{(k)} \right]  \right]   \\
    &= \frac{1}{2} \sum_{\alpha \beta} r_{\alpha} r_{\beta} \sum_{\mu \nu} (2ih_{\mu \nu}^{\mathrm{R}} )\omega_t(m_{\nu}^N) \frac{1}{N}\sum_{k}[v_{\alpha}^{(k)}, [v_{\beta}^{(k)}, v_{\mu}^{(k)}]] \\
    & = \frac{1}{2} \sum_{\alpha \beta \gamma} r_{\alpha} r_{\beta}  \sum_{\mu \nu} (2ih_{\mu \nu}^{\mathrm{R}} )\omega_t(m_{\nu}^N) a^{\mu}_{\beta \gamma}T_{\gamma \alpha}^N = \frac{1}{2} \sum_{\alpha \beta \gamma} r_{\alpha} r_{\beta}  \tilde{B}_{\beta \gamma}(t) T_{\gamma \alpha}^N = \\
    & = \frac{1}{2} \vec{r} \cdot \left( \tilde{B}_t T^{N} \vec{r} \right) \, .
    }
\end{equation}
As a result, the last two expressions allow us to conclude that
\begin{equation}\label{l8_HRealII}
   \eqalign{
    & \lim_{N \rightarrow \infty} \omega_{\vec{r}_{1} \vec{r}_{2}} \left( e^{t \linn } \left[ H_{R}^ {II} \right] \right) = \\
    & \lim_{N \rightarrow \infty} \omega_{\vec{r}_{1} \vec{r}_{2}} \left( e^{t \linn } \left[ \left( i \vec{r} \cdot ( \tilde{B}_t \vec{F}_{t}^{N} ) + \sqrt{N} \vec{r} \cdot (i \tilde{B}_t \vec{\omega}_t^N ) \right.\right.\right. \\
    & \left.\left.\left.+ \frac{1}{2} \vec{r} \cdot \left( \tilde{B}_t T^{N} \vec{r} \right)   \right) W_t^N{\vec{r} }\right] \right) \, .
    }
\end{equation}
The proof of the lemma is obtained by considering together Eqs.~\eref{l8_Himaginary}, \eref{l8_HRealI}, \eref{l8_HRealII}.
\end{proof}

\subsection{Lemma 10}
Before proceeding further, we use that the action of $\mathcal{D}_\ell $ on a local exponential can be split in two terms. We have
$$
\mathcal{D}_\ell[W_t^N(\vec{r})]=\frac{1}{2}\sum_{k=1}^N\left(\left[\Gamma_\ell(\Delta_N^\ell) j_\ell^{\dagger\, (k)}, W_t^N(\vec{r})\right]j_{\ell}^{(k)}\Gamma_\ell(\Delta_N^\ell)+\Gamma_\ell(\Delta_N^\ell)j_\ell^{\dagger \, (k)}\left[W_t^N(\vec{r}),j_\ell^{(k)}\Gamma_\ell(\Delta_N^\ell)\right]\right)\, .
$$
Using that $[AB,C]=A[B,C]+[A,C]B$, we rewrite this as
\begin{equation}
\eqalign{
\mathcal{D}_\ell[W_t^N(\vec{r})] =& \frac{1}{2}\sum_{k=1}^N\left(\Gamma_\ell(\Delta_N^\ell)\left[ j_\ell^{\dagger\, (k)}, W_t^N(\vec{r}) \right]j_{\ell}^{(k)}\Gamma_\ell(\Delta_N^\ell) \right. \\
& \left. +\Gamma_\ell(\Delta_N^\ell) j_\ell^{\dagger\, (k)}\left[W_t^N(\vec{r}),j_\ell^{(k)}\right]\Gamma_\ell(\Delta_N^{\ell})\right)\\
\qquad \qquad \quad +&\frac{1}{2} \sum_{k=1}^N\left(\left[\Gamma_\ell(\Delta_N^\ell), W_t^N(\vec{r})\right]j_\ell^{\dagger\, (k)}j_{\ell}^{(k)}\Gamma_\ell(\Delta_N^\ell) \right. \\
& \left. + \Gamma_\ell(\Delta_N^\ell)j_\ell^{\dagger\, (k)}j_\ell^{(k)}\left[W_t^N(\vec{r}),\Gamma_\ell(\Delta_N^\ell)\right]\right)\, , 
\label{WeylLike-diss-two-terms}
}
\end{equation}
and we separate the action of the dissipation in the following way. The action of the last two terms on the right-hand side of Eq.~\eref{WeylLike-diss-two-terms}, denoted as $D_{I}$, and the first two terms, denoted as $D_{II}$, namely
\begin{equation}\label{diss-first-term}
   \eqalign{
D_I \equiv &\frac{1}{2} \sum_{k=1}^N\left(\left[\Gamma_\ell(\Delta_N^\ell), W_t^N(\vec{r})\right]j_\ell^{\dagger\, (k)}j_{\ell}^{(k)}\Gamma_\ell(\Delta_N^\ell) \right. \\
& \left. + \Gamma_\ell(\Delta_N^\ell)j_\ell^{\dagger\, (k)}j_\ell^{(k)}\left[W_t^N(\vec{r}),\Gamma_\ell(\Delta_N^\ell)\right]\right)\, ,
} 
\end{equation}
and
\begin{equation}\label{diss-second-term}
\eqalign{
D_{II} \equiv & \frac{1}{2}\sum_{k=1}^N\left(\Gamma_\ell(\Delta_N^\ell)\left[ j_\ell^{\dagger\, (k)}, W_t^N(\vec{r}) \right]j_{\ell}^{(k)}\Gamma_\ell(\Delta_N^\ell) \right. \\
& \left. +\Gamma_\ell(\Delta_N^\ell) j_\ell^{\dagger\, (k)}\left[W_t^N(\vec{r}),j_\ell^{(k)}\right]\Gamma_\ell(\Delta_N^{\ell})\right)\, , 
}
\end{equation}
\begin{lemma}\label{lemma9}
    Given a dynamical generator as defined by Eqs.~\eref{e0_totalHamiltonian}-\eref{jumps}, satisfying Assumption~\eref{Gamma}, the dissipation term $D_I$ defined by Eq.~\eref{diss-first-term} acts on local exponential $W_t^N(\vec{r}) = e^{i \vec{r}\cdot \vec{F}^N_t}$ as
        \begin{equation*}
    \eqalign{
        & \lim_{N \rightarrow \infty} \omega_{\vec{r}_1 \vec{r}_2 }  \left( e^{t\linn} \left[ D_I \right] \right) \\
        & =        \lim_{N \rightarrow \infty} \omega_{\vec{r}_1 \vec{r}_2 } \left( e^{t\linn} \left[ \left( - \frac{1}{2} (\vec{r} \cdot T^{N}\vec{r}_\ell) \left(\Gamma_\ell'(\Delta_N^{\ell}) \right)^2  \frac{1}{N} \sum_{k}  j_\ell^{\dagger (k)} j_\ell^{ (k)}  (\vec{r}_\ell \cdot T^{N}\vec{r})  \right. \right.\right. \\
        & \left. \left. \left. -  \frac{1}{2} \Gamma_\ell(\Delta_N^{\ell}) \Gamma_\ell'(\Delta_N^{\ell}) \vec{r}\cdot (T^N \vec{r}_\ell) (\vec{c}_{\ell} \cdot (T^{N} \vec{r} )) \right)W_t^{N}(\vec{r})  \right] \right) \, ,}
    \end{equation*}
    where $\vec{r}_\ell$ is the vector with components 
$r_{\ell \alpha}$, introduced by Eq.~\eref{eq:Delta_lc_average} as coefficients defining $\Delta_N^{\ell}$ in terms of linear combinations of average operators; $\vec{c}_\ell$ is the vector with components ${c}_{\ell \eta} = \sum_{\gamma, \gamma' } c_{\ell \gamma}^{j *} c_{\ell \gamma'}^{j } \frac{1}{2}(a_{\gamma \gamma'}^\eta + b_{\gamma \gamma'}^{\eta}) $, having introduced $j_{\ell}^{(k)} = \sum_{\gamma} c_{\ell \gamma}^{j} v_{\gamma}^{(k)}$ and $b_{\gamma \gamma'}^{\eta}$ such that $  \lbrace v_{\gamma}^{(k)}, v_{\gamma'}^{(k)}\rbrace = \sum_{\eta} b_{\gamma \gamma'}^{\eta}v_{\eta}^{(k)}$. 
\end{lemma}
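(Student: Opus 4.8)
The plan is to expand the two commutators with $\Gamma_\ell(\Delta_N^\ell)$ appearing in $D_I$ to leading order in $1/N$, using the fact that $\Gamma_\ell(\Delta_N^\ell)$ is a function of an average operator and that, by Lemma~\ref{lemma_commutators}, commutators of $\Gamma_\ell(\Delta_N^\ell)$ with strictly local operators scale as $1/N$. First I would write $\left[\Gamma_\ell(\Delta_N^\ell),W_t^N(\vec{r})\right] = \left(\Gamma_\ell(\Delta_N^\ell) - W_t^N(\vec{r})\Gamma_\ell(\Delta_N^\ell)W_t^{N\dagger}(\vec{r})\right)W_t^N(\vec{r})$ and expand $W_t^N(\vec{r})\Gamma_\ell(\Delta_N^\ell)W_t^{N\dagger}(\vec{r}) = \sum_{n\ge 0}\tfrac{i^n}{n!}\mathbb{K}^n_{\vec{r}\cdot\vec{F}^N_t}[\Gamma_\ell(\Delta_N^\ell)]$ via the identity used in Lemma~\ref{lemma7}. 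The point is that $[\vec{r}\cdot\vec{F}^N_t,\Gamma_\ell(\Delta_N^\ell)]$ is already $O(1/\sqrt N)$ (it is a sum of $N$ single-site commutators each $O(1/N)$, times the $1/\sqrt N$ in $\vec{F}^N_t$, but the derivative structure $\Gamma_\ell'(\Delta_N^\ell)$ brings it to the relevant order), so only the linear term survives at leading order while $n\ge 2$ terms vanish in norm after applying $\omega_{\vec{r}_1\vec{r}_2}$. Concretely, $[\vec{r}\cdot\vec{F}^N_t,\Gamma_\ell(\Delta_N^\ell)] \approx \Gamma_\ell'(\Delta_N^\ell)\,[\vec{r}\cdot\vec{F}^N_t,\Delta_N^\ell]$, and $[\vec{r}\cdot\vec{F}^N_t,\Delta_N^\ell] = \sum_{\alpha\beta}r_\alpha r_{\ell\beta}\tfrac1N\sum_k[v_\alpha^{(k)},v_\beta^{(k)}] = \vec{r}\cdot(T^N\vec{r}_\ell)$ using the structure constants and the definition $T^N_{\alpha\beta}=\sum_\gamma a^\gamma_{\alpha\beta}m^N_\gamma$.

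The second step is to combine this expansion of the commutator with the remaining operator content $\tfrac12\sum_k j_\ell^{\dagger(k)}j_\ell^{(k)}\Gamma_\ell(\Delta_N^\ell)$ (and the adjoint term). Here I would write $\tfrac1N\sum_k j_\ell^{\dagger(k)}j_\ell^{(k)}$ as an average operator and note that $\Gamma_\ell(\Delta_N^\ell)$ multiplying it contributes $\Gamma_\ell(\Delta_N^\ell)\Gamma_\ell'(\Delta_N^\ell)$-type factors. When the two commutators (from $W_t^N$ with $\Gamma_\ell$ on the left, and with $\Gamma_\ell$ on the right) are both expanded, one gets a term quadratic in the $\Gamma_\ell'$ (the $\left(\Gamma_\ell'(\Delta_N^\ell)\right)^2$ piece, multiplied by $\tfrac1N\sum_k j_\ell^{\dagger(k)}j_\ell^{(k)}$ and sandwiched between $(\vec{r}\cdot T^N\vec{r}_\ell)$ and $(\vec{r}_\ell\cdot T^N\vec{r})$) and a cross term. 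The cross term requires also expanding the commutator acting on the product $j_\ell^{\dagger(k)}j_\ell^{(k)}$; writing $j_\ell^{(k)}=\sum_\gamma c^j_{\ell\gamma}v_\gamma^{(k)}$ and using both the structure constants $a$ and the anticommutator coefficients $b$ (since $[\cdot,j^\dagger j]$ produces both commutators and one needs $\{v_\gamma,v_{\gamma'}\}$), one assembles the vector $\vec{c}_\ell$ with the stated components ${c}_{\ell\eta}=\sum_{\gamma\gamma'}c^{j*}_{\ell\gamma}c^j_{\ell\gamma'}\tfrac12(a^\eta_{\gamma\gamma'}+b^\eta_{\gamma\gamma'})$, giving the $\Gamma_\ell(\Delta_N^\ell)\Gamma_\ell'(\Delta_N^\ell)\,\vec{r}\cdot(T^N\vec{r}_\ell)(\vec{c}_\ell\cdot(T^N\vec{r}))$ term.

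The third step is to control the error terms uniformly in $N$ so that passing to $\omega_{\vec{r}_1\vec{r}_2}$ and then $N\to\infty$ is legitimate. Each discarded term is a product of one factor that vanishes in operator norm (the $n\ge 2$ contributions in the BCH-type expansions, bounded using $\|[X,Y]\|\le 2\|X\|\|Y\|$, Assumption~\ref{Gamma} controlling $\gamma,\gamma',\gamma''$, and Lemma~\ref{lemma_commutators} for the $1/N$ scaling of $\Gamma_\ell$ commutators) with bounded operators, possibly multiplied by $\sqrt N$ from the fluctuation normalization; I would check that in every such product the net power of $1/\sqrt N$ is strictly positive, exactly as in the proofs of Lemmas~\ref{lemma7} and \ref{lemma8st}. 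For the terms that survive, the factors $\Gamma_\ell'(\Delta_N^\ell)^2$ and $\Gamma_\ell(\Delta_N^\ell)\Gamma_\ell'(\Delta_N^\ell)$ and $\tfrac1N\sum_k j^\dagger_\ell j_\ell$ are bounded average-type operators, so they commute with Weyl-like operators in the thermodynamic limit and pass through $\omega_{\vec{r}_1\vec{r}_2}$; there is no need to replace them by their mean-field scalars at this stage, since the statement is phrased at the operator level inside $e^{t\linn}[\cdots]$.

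I expect the main obstacle to be bookkeeping the cross term correctly: when expanding $[\Gamma_\ell(\Delta_N^\ell),W_t^N(\vec{r})]$ one must be careful that the ``remaining'' commutator structure in $D_I$ couples to $j^\dagger_\ell j_\ell$, so both $\mathbb{K}^1$-expansions (of $\Gamma_\ell$ against $W_t^N$ on the two sides of Eq.~\eqref{diss-first-term}) contribute, and one has to track which $T^N$ factor acts on $\vec{r}$ versus $\vec{r}_\ell$, and which $\Delta_N^\ell$-derivative ($\Gamma_\ell'$ versus the full $\Gamma_\ell$) multiplies each piece — getting the $-\tfrac12$ prefactors, the placement of $\vec{r}_\ell$ inside $S$, and the precise definition of $\vec{c}_\ell$ (in particular the symmetrized $\tfrac12(a+b)$, which arises because $j^\dagger_\ell j_\ell$ is not Hermitian-symmetric in the basis indices and one must split the commutator of $\vec{r}\cdot\vec{F}^N_t$ with the product into commutator and anticommutator parts) all consistent with the statement. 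The structure of the argument is otherwise parallel to Lemmas~\ref{lemma7} and \ref{lemma8st}, so the analytic estimates carry over with only the substitution of $\Gamma_\ell$-commutator bounds from Lemma~\ref{lemma_commutators}.
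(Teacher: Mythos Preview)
Your overall plan (BCH-type expansion of $[\Gamma_\ell(\Delta_N^\ell),W_t^N(\vec r)]$, chain-rule approximation $[\vec r\cdot\vec F_t^N,\Gamma_\ell]\approx\Gamma_\ell'(\Delta_N^\ell)[\vec r\cdot\vec F_t^N,\Delta_N^\ell]$, and norm bounds for the remainders) is the right kind of argument, but there is a genuine power-counting gap in the second step. Each summand of $D_I$ is, at the leading order you extract, of size $O(\sqrt N)$: $[\Gamma_\ell,W_t^N]$ is $O(1/\sqrt N)$, while $\sum_k j_\ell^{\dagger(k)}j_\ell^{(k)}\Gamma_\ell$ is $O(N)$. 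The two summands in Eq.~\eqref{diss-first-term} have \emph{opposite} signs (since $[W,\Gamma]=-[\Gamma,W]$), so they cancel at this order; the $O(1)$ contribution you are after only emerges from the \emph{next} commutation, when the surviving $W_t^N$ is moved past $\Gamma_\ell$ and past $\tfrac1N\sum_k j_\ell^{\dagger(k)}j_\ell^{(k)}$. Your description (``when the two commutators are both expanded one gets a term quadratic in $\Gamma_\ell'$'') reads as if the two first-order expansions multiply, but they are \emph{added}; the second factor of $\Gamma_\ell'$ does not come from there.

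The paper avoids this pitfall by first rewriting $D_I$ as an exact sum of nested commutators,
\[
D_I=\underbrace{\tfrac12\sum_k[\Gamma_\ell,W_t^N]\,[j_\ell^{\dagger(k)}j_\ell^{(k)},\Gamma_\ell]}_{D_I^1}
\;+\;
\underbrace{\tfrac12\sum_k\big[[\Gamma_\ell,W_t^N],\,\Gamma_\ell\,j_\ell^{\dagger(k)}j_\ell^{(k)}\big]}_{D_I^2},
\]
so that the $O(\sqrt N)$ cancellation is built in. Then $D_I^1=O(1/\sqrt N)$ by Lemma~\ref{lemma_commutators} and can be dropped, while $D_I^2$ is split again: the double commutator $[[\Gamma_\ell,W_t^N],\Gamma_\ell]$ produces the $(\Gamma_\ell')^2$ piece (each commutator with $\Gamma_\ell$ brings one factor of $\Gamma_\ell'$ via the power-series expansion in $\Delta_N^\ell$), and the commutator $[[\Gamma_\ell,W_t^N],j_\ell^{\dagger(k)}j_\ell^{(k)}]$ produces the $\Gamma_\ell\Gamma_\ell'$ cross term with $\vec c_\ell$, exactly as you anticipated for the cross term. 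If you repair your argument by first exhibiting this commutator identity (or, equivalently, by explicitly tracking the $O(\sqrt N)$ cancellation and then re-expanding $[W_t^N,\bar J\,\Gamma_\ell]$), the rest of your outline goes through.
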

\begin{proof}
Looking at Eq.~\eref{diss-first-term}, we split $D_I$ into two parts, $D_I= D_{I}^1+ D_{I}^2$, where we have
\begin{equation}
\eqalign{
    D_{I}^1&=\frac{1}{2}\sum_{k=1}^{N}\left[\Gamma_\ell(\Delta_N^\ell),W_t^N(\vec{r}) \right]\left[j_\ell^{\dagger \, (k)} j_\ell^{(k)}, \Gamma_\ell(\Delta_N^\ell)\right] \, ,\\
    D_{I}^2&=\frac{1}{2}\sum_{k=1}^N
\left[\left[\Gamma_\ell(\Delta_N^\ell),W_t^N(\vec{r}) \right],\Gamma_\ell(\Delta_N^\ell)j_\ell^{\dagger\, (k)}j_\ell^{(k)}\right]\, .
}
\end{equation}
We first focus on the terms $D_I^1$. The first result that we proof is the following bound,
\begin{equation}\label{comm_localexponential}
    \| \left[\Gamma_\ell(\Delta_N^\ell), W_t^N(\vec{r})\right]  \| \leq \frac{\gamma(z)}{\sqrt{N}} \, ,
\end{equation}
with $\gamma(\cdot)$ defined in Assumption~\eref{Gamma}, and $z=\delta_\ell (1+e^{d^4 r_{\max} a_{\max}})$. Indeed, let us consider the commutator $ \left[\Gamma_\ell(\Delta_N^\ell), W_t^N(\vec{r})\right] = \left( \Gamma_\ell(\Delta_N^\ell)-W_t^{N}(\vec{r}) \Gamma_\ell(\Delta_N^\ell) W_t^{N \dagger}(\vec{r}) \right)W_t^{N}(\vec{r}) $. The second term in the right-hand side can be written as 
\begin{equation}\label{WgammaW}
\eqalign{
    W_t^{N}(\vec{r}) \Gamma_\ell(\Delta_N^\ell) W_t^{N \dagger}(\vec{r}) & = \sum_{b \geq 0 } c_\ell^b \left( W_t^N(\vec{r}) \Delta_N^\ell W_t^{N \dagger}(\vec{r}) \right)^b \\
    & = \sum_{b \geq 0 } c_\ell^b \left( \Delta_N^\ell +D_{\Delta} \right)^b  \\
    & =\Gamma_\ell(\Delta^{N}_\ell) +   \sum_{b \geq 0} c_{\ell}^b \sum_{s=1}^b  \left( \matrix{b \cr s} \right)(\Delta_N^\ell)^{b-s} D_{\Delta}^s\, , 
    }
\end{equation}
where in the first step we employed Assumption~\eref{Gamma}, and in the second step we introduced
\begin{equation}\label{S_definition}
\eqalign{
    & D_{\Delta} \equiv W_t^N(\vec{r})  \Delta_N^\ell W_t^{N \dagger}(\vec{r}) - \Delta_N^\ell = \sum_{n \geq 1} \frac{i^n}{n!}\mathbb{K}^n_{\vec{r}\cdot \vec{F}_t^N}[\Delta_N^\ell] \\
    & = \sum_{n \geq 1} \frac{i^n}{n!} \sum_{\nu_1 ,...,\nu_n}\sum_{\alpha} r_{\ell \alpha} r_{\nu_1}\cdots r_{\nu_n} \sqrt{N}^n [m^N_{\nu_1},\cdots,[m^N_{\nu_n},m_{\alpha}^N]\cdots] \\
    & = \sum_{n \geq 1} \frac{i^n}{n!} \sum_{\nu_1 ,...,\nu_n}\sum_{\alpha} r_{\ell \alpha} r_{\nu_1}\cdots r_{\nu_n} \frac{1}{N^{n/2}}\sum_{\gamma_1, \cdots ,\gamma_n}a_{\nu_n \alpha}^{\gamma_n} a_{\nu_{n-1} \gamma_n}^{\gamma^{n-1} }\cdots a_{\nu_1 \gamma_{2}}^{\gamma_1} m_{\gamma_1}^N \, .
    }
\end{equation}
The latter operator is normed-bounded as
\begin{equation}\label{norm_bound_S}
    \| D_{\Delta} \| \leq \frac{1}{\sqrt{N}}\delta_\ell \sum_{n \geq 1} \frac{1}{n!} (r_{\max} a_{\max} d^4)^n \leq \frac{1}{\sqrt{N}} \delta_\ell \, e^{d^4 r_{\max} a_{\max}} \, ,
\end{equation} 
so that for the commutator $\left[\Gamma_\ell(\Delta_N^\ell), W_t^N(\vec{r})\right]$ it is
\begin{equation}
    \eqalign{
    & \| \left[\Gamma_\ell(\Delta_N^\ell), W_t^N(\vec{r})\right] \| \leq \| \left( \Gamma_\ell(\Delta_N^\ell)-W_t^{N}(\vec{r}) \Gamma_\ell(\Delta_N^\ell)  W_t^{N \dagger}(\vec{r}) \right)\| \| W_t^{N}(\vec{r}) \| \\
    & \leq  \left\| \sum_{b \geq 0} c_{\ell}^b \sum_{s=1}^b \left( \matrix{b \cr s} \right) (\Delta_N^\ell)^{b-s} (D_{\Delta})^s   \right\| \\
    & \leq \frac{1}{\sqrt{N}}\sum_{b \geq 0} |c_\ell^b| \sum_{s=1}^b \left( \matrix{b \cr s} \right) \delta_\ell^{b-s}(\delta_\ell e^{d^4 r_{\max} a_{\max}}) ^s \\
    & \leq \frac{1}{\sqrt{N}} \sum_{b \geq 0} |c_\ell^b|\delta_\ell^b(1+e^{d^4 r_{\max} a_{\max}})^b \equiv \frac{1}{\sqrt{N}}\gamma( \delta_\ell (1+e^{d^4 r_{\max} a_{\max}})) \, ,
    }
\end{equation}
It is straightforward now to show that $D_I^1$ is a norm-bounded operator. Indeed, by exploiting Lemma \eref{lemma_commutators} and the bound \eref{comm_localexponential}, it is 
\begin{equation}
    \| D_I^1 \| \leq \frac{1}{\sqrt{N}} \|j_\ell \|^2  \delta_\ell \gamma'(\delta_\ell) \gamma( \delta_\ell (1+e^{d^4 r_{\max} a_{\max}})) \, . 
\end{equation}
Let us now consider the term $D_{I}^2$, which will be in turn split in the following two contributions
\begin{equation}\label{dIsplit}
    \eqalign{D_{I}^2&=\frac{1}{2}\sum_{k=1}^N
\left[\left[\Gamma_\ell(\Delta_N^\ell),W_t^N(\vec{r}) \right],\Gamma_\ell(\Delta_N^\ell)\right] j_\ell^{\dagger\, (k)}j_\ell^{(k)} \\
& + \frac{1}{2}\sum_{k=1}^N \Gamma_\ell(\Delta_N^\ell)
\left[\left[\Gamma_\ell(\Delta_N^\ell),W_t^N(\vec{r}) \right],j_\ell^{\dagger\, (k)}j_\ell^{(k)}\right] \\
& = D_{I}^{21} + D_{I}^{22} \, . 
}
\end{equation}
In order to tackle the first term of the last equation, $D_I^{21}$, let us consider the following commutator at the leading order
\begin{equation}\label{GammaWdouble}
    \eqalign{
   & \left[\left[\Gamma_\ell(\Delta_N^\ell),W_t^N(\vec{r}) \right],\Gamma_\ell(\Delta_N^\ell)\right] 
    \\ &= \left\lbrace \Gamma_\ell(\Delta_N^\ell)\left( \Gamma_\ell(\Delta_N^\ell)-W_t^{N}(\vec{r}) \Gamma_\ell(\Delta_N^\ell)  W_t^{N \dagger}(\vec{r}) \right) \right. \\
    & - \left. \left( \Gamma_\ell(\Delta_N^\ell)-W_t^{N}(\vec{r}) \Gamma_\ell(\Delta_N^\ell)  W_t^{N \dagger}(\vec{r}) \right)W_t^{N}(\vec{r}) \Gamma_\ell(\Delta_N^\ell)  W_t^{N \dagger}(\vec{r}) \right\rbrace W_t^{N }(\vec{r}) \\
    & = \left\lbrace \Gamma_\ell(\Delta_N^\ell) \sum_{b \geq 0} c_{\ell}^b \sum_{s=1}^b \left(  \matrix{b \cr s} \right)  (\Delta_N^\ell)^{b-s} (D_{\Delta})^s \right. \\
    & - \left. \left( \sum_{b \geq 0} c_{\ell}^b \sum_{s=1}^b \left( \matrix{b \cr s} \right)  (\Delta_N^\ell)^{b-s} (D_{\Delta})^s \right) W_t^{N}(\vec{r}) \Gamma_\ell(\Delta_N^\ell)  W_t^{N \dagger}(\vec{r}) \right\rbrace W_t^{N }(\vec{r}) \\
    & \sim  -\left( \sum_{b \geq 0} c_{\ell}^b \sum_{s=1}^b \left( \matrix{b \cr s} \right) (\Delta_N^\ell)^{b-s} (D_{\Delta})^s \right)^2 W_t^{N }(\vec{r})
    }
\end{equation}
where in the second last line we used the bound \eref{comm_localexponential}, so that
$$ \| [W_t^{N}(\vec{r}) \Gamma_\ell(\Delta_N^\ell)  W_t^{N \dagger}(\vec{r}), \sum_{b \geq 0} c_{\ell}^b \sum_{s=1}^b \left( \matrix{b \cr s} \right) (\Delta_N^\ell)^{b-s} (D_{\Delta})^s ] \| \leq O(N^{3/2}) \, .$$  
The squared term in the last line of Eq.~\eref{GammaWdouble} can be further treated. To do so, we first split it into two terms,
$$\sum_{b \geq 0} c_{\ell}^b \sum_{s=1}^b \left( \matrix{b \cr s} \right) (\Delta_N^\ell)^{b-s} (D_{\Delta})^s = \sum_{b \geq 0} c_{\ell}^b  b (\Delta_N^\ell)^{b-1} D_{\Delta}  + \sum_{b \geq 0} c_{\ell}^b \sum_{s=2}^b \left( \matrix{b \cr s} \right) (\Delta_N^\ell)^{b-s} (D_{\Delta})^s \, , $$
and we notice that, by  using Assumption~\eref{Gamma} and Eq.~\eref{norm_bound_S} one obtains
\begin{equation}\label{bound_DI222}
    \eqalign{
 & \|\sum_{b \geq 0} b c_{\ell}^b   (\Delta_N^\ell)^{b-1} D_{\Delta}  \|  \leq \frac{1}{\sqrt{N}} e^{d^4r_{\mathrm{max}} a_{\mathrm{max}}}\delta_{\ell}\gamma'(\delta_\ell) \,  ,
    } 
\end{equation}
as well as
\begin{equation}\label{bound_DI222_b}
\eqalign{
    & \|  \sum_{b \geq 0} c_{\ell}^b \sum_{ s=2 }^b \left( \matrix{b \cr s} \right) (\Delta_N^\ell)^{(b-s)}   \left(D_{\Delta} \right)^s   \| \\
    & \leq  \sum_{b \geq 0} |c_\ell^b| \sum_{s\geq 2} \left( \matrix{b \cr s}\right) \delta_\ell^{(b-s)} \left(\frac{1}{\sqrt{N}} \delta_\ell \, e^{d^4 r_{\max} a_{\max}} \right)^s  \\
    & \leq \frac{1}{N} \sum_{b \geq 0} |c_\ell^b| \delta_{\ell}^{b}\sum_{s=0}^b \left( \matrix{b \cr s} \right)   e^{s d^4 r_{\max} a_{\max}}  \leq \frac{\gamma(\delta_\ell(1+e^{d^4 r_{\max} a_{\max}}))}{N} \, ,
    }
\end{equation}
and also the following
\begin{equation}\label{bound_DI222_c}
\eqalign{
    & \left\|  \sum_{b \geq 0} b c_{\ell}^b  (\Delta_N^\ell)^{b-1}  \right.  \\
    & \left. \times  \left(\sum_{ n \geq 2
    } \frac{i^n}{n!} \sum_{\nu_1 ,...,\nu_n}\sum_{\alpha} r_{\ell \alpha} r_{\nu_1}\cdots r_{\nu_n} \frac{1}{N^{n/2}}\sum_{\gamma_1, \cdots ,\gamma_n}a_{\nu_n \alpha}^{\gamma_n} a_{\nu_{n-1} \gamma_n}^{\gamma^{n-1} }\cdots a_{\nu_1 \gamma_{2}}^{\gamma_1} m_{\gamma_1}^N \right)   \right\| \\
    & \leq \frac{1}{N}  \delta_\ell e^{d^4 r_{\max} a_{\max}} \sum_b b |c_\ell^b|  \delta{_\ell}^{(b-1)} \leq \frac{ \delta_\ell \gamma^{'}(\delta_\ell) e^{d^4 r_{\max} a_{\max}}}{N} \, .
    }
\end{equation}
By means the norm bounds in Eqs.~\eref{bound_DI222}-\eref{bound_DI222_c} it follows that
\begin{equation}
    \eqalign{
    & \| \left( \sum_{b \geq 0} c_{\ell}^b \sum_{s=1}^b \left( \matrix{b \cr s} \right) (\Delta_N^\ell)^{b-s} (D_{\Delta})^s \right)^2  \\
    & - 
    \left( \sum_{b \geq 0} c_{\ell}^b b (\Delta_N^\ell)^{b-1} i \sum_{\alpha} r_{\ell \alpha} \sum_{\nu} \frac{1}{\sqrt{N}}r_{\nu} \sum_{\gamma}a_{\nu\alpha}^{\gamma} m_{\gamma}^N \right)^2 \| \\
    & \leq  \frac{1}{N^2} \underbrace{ \left[ \gamma^2(\delta_\ell(1+e^{d^4 r_{\max} a_{\max}})) + \delta_\ell^2 \gamma^{' \, 2}(\delta_\ell) e^{2 d^4 r_{\max} a_{\max}} \right]}_{C_{211}} \\ & + \frac{1}{N^{3/2}} \underbrace{2\delta_{\ell}\gamma'(\delta_\ell)e^{d^4r_{\mathrm{max}} a_{\mathrm{max}}} \left[ \gamma(\delta_\ell(1+e^{d^4 r_{\max} a_{\max}}))     + \delta_\ell \gamma'(\delta_\ell) e^{ d^4 r_{\max} a_{\max}}\right]}_{C_{212}}
    \, ,} 
\end{equation}
where we defined $C_{211} \equiv \gamma^2(\delta_\ell(1+e^{d^4 r_{\max} a_{\max}})) + \delta_\ell^2 \gamma^{' \, 2}(\delta_\ell) e^{2 d^4 r_{\max} a_{\max}} $, and $C_{212} = 2\delta_{\ell}\gamma'(\delta_\ell)e^{d^4r_{\mathrm{max}} a_{\mathrm{max}}} \left[ \gamma(\delta_\ell(1+e^{d^4 r_{\max} a_{\max}}))     + \delta_\ell \gamma'(\delta_\ell) e^{ d^4 r_{\max} a_{\max}}\right]$.
As a result it is, at leading order,
\begin{equation}
    \eqalign{   D_{I}^{21} & \sim -\frac{1}{2} \sum_{k} \left( \sum_{b \geq 0} c_{\ell}^b b (\Delta_N^\ell)^{b-1} i \sum_{\alpha} r_{\ell \alpha} \sum_{\nu} \frac{1}{\sqrt{N}}r_{\nu} \sum_{\gamma}a_{\nu\alpha}^{\gamma} m_{\gamma}^N \right)^2 \\
    & \times W_t^{N }(\vec{r}) j_\ell^{\dagger (k)} j_\ell^{ (k)} W_t^{N \dagger }(\vec{r})W_t^{N }(\vec{r})  \\
    & \sim \frac{1}{2} \left(\Gamma_\ell'(\Delta_N^{\ell}) \sum_{\nu} \sum_{\ell \alpha} r_{\nu} a_{\nu \alpha}^{\gamma} r_{\ell \alpha } m_{\gamma}^N \right)^2 \frac{1}{N} \sum_{k}  j_\ell^{\dagger (k)} j_\ell^{ (k)} \\
    & = \frac{1}{2} (\vec{r} \cdot T^{N}\vec{r}_\ell) \left(\Gamma_\ell'(\Delta_N^{\ell}) \right)^2  \frac{1}{N} \sum_{k}  j_\ell^{\dagger (k)} j_\ell^{ (k)}  (\vec{r} \cdot T^{N}\vec{r}_\ell) \\
    & = - \frac{1}{2} (\vec{r} \cdot T^{N}\vec{r}_\ell) \left(\Gamma_\ell'(\Delta_N^{\ell}) \right)^2  \frac{1}{N} \sum_{k}  j_\ell^{\dagger (k)} j_\ell^{ (k)}  (\vec{r}_\ell \cdot T^{N}\vec{r}) \, ,
    }
\end{equation}
where the symbol $\sim$ is identifying here the following bound
\begin{equation}
\eqalign{
    &  \| D_{I}^{21} + \frac{1}{2} (\vec{r} \cdot T^{N}\vec{r}_\ell) \left(\Gamma_\ell'(\Delta_N^{\ell}) \right)^2  \frac{1}{N} \sum_{k}  j_\ell^{\dagger (k)} j_\ell^{ (k)}  (\vec{r}_\ell \cdot T^{N}\vec{r}) \| \\
    &\leq \frac{1}{N^{1/2}} \frac{C_{212}}{2}\| j_\ell \|^2 + \frac{1}{N} \frac{C_{211}}{2} \| j_\ell \|^2 +\frac{1}{N^{1/2}} \| j_\ell\|^{2} e^{d^2 2 v r_{\mathrm{max} }} (\delta_\ell\gamma'(\delta_\ell) e^{d^4 r_{\mathrm{max} a_{\mathrm{max} }}})^2 \, . \\ 
}
\end{equation}

Let us then consider the term $D_I^{22}$ of Eq.~\eref{dIsplit}, which we write here for the sake of clarity as
\begin{equation}
\eqalign{
D_I^{22} & = \frac{1}{2}\sum_{k=1}^N \Gamma_\ell(\Delta_N^\ell)
\left[\left[\Gamma_\ell(\Delta_N^\ell),W_t^N(\vec{r}) \right],j_\ell^{\dagger\, (k)}j_\ell^{(k)}\right] \\
& =  \frac{1}{2}\sum_{k=1}^N \Gamma_\ell(\Delta_N^\ell) \left[ \left( \Gamma_\ell(\Delta_N^\ell)-W_t^{N}(\vec{r}) \Gamma_\ell(\Delta_N^\ell) W_t^{N \dagger}(\vec{r}) \right)W_t^{N}(\vec{r}), j_\ell^{\dagger\, (k)}j_\ell^{(k)} \right] \\
& = \frac{1}{2}\sum_{k=1}^N \Gamma_\ell(\Delta_N^\ell)  \left[ \left( \Gamma_\ell(\Delta_N^\ell)-W_t^{N}(\vec{r}) \Gamma_\ell(\Delta_N^\ell) W_t^{N \dagger}(\vec{r}) \right), j_\ell^{\dagger\, (k)}j_\ell^{(k)} \right]W_t^{N}(\vec{r}) \\
& + \frac{1}{2}\sum_{k=1}^N \Gamma_\ell(\Delta_N^\ell)  \left( \Gamma_\ell(\Delta_N^\ell)-W_t^{N}(\vec{r}) \Gamma_\ell(\Delta_N^\ell) W_t^{N \dagger}(\vec{r}) \right) \left[ W_t^{N}(\vec{r}), j_\ell^{\dagger\, (k)}j_\ell^{(k)} \right] \\
&\equiv D_I^{221} + D_{I}^{222} \, . \\
  }  
\end{equation}
We start to inspect the second-last line term, which we denoted $D_I^{221}$. Its norm-bound, by using Eq.~\eref{WgammaW}, can be written as
\begin{equation}
\eqalign{
  & \| \frac{1}{2}\sum_{k=1}^N \Gamma_\ell(\Delta_N^\ell) \left[ \left( \Gamma_\ell(\Delta_N^\ell)-W_t^{N}(\vec{r}) \Gamma_\ell(\Delta_N^\ell) W_t^{N \dagger}(\vec{r}) \right), j_\ell^{\dagger\, (k)}j_\ell^{(k)} \right] \| \\
  &\leq  \frac{N}{2}\gamma(\delta_\ell) \sum_{b \geq 0} |c_{\ell}^b| \sum_{s=1}^b \left( \matrix{b \cr s} \right)  \left\| \left[ (\Delta_N^\ell)^{b-s} (D_{\Delta})^s ,  j_\ell^{\dagger\, (k)}j_\ell^{(k)} \right] \right\| \\
  & \leq  \frac{N}{2}\gamma(\delta_\ell) \sum_{b \geq 0} |c_{\ell}^b| \sum_{s=1}^b \left( \matrix{b \cr s} \right) \\ 
  & \times \left( \left\| \left[ (\Delta_N^\ell)^{b-s}  ,  j_\ell^{\dagger\, (k)}j_\ell^{(k)} \right]  \right\| \|D_{\Delta} \|^s + \| (\Delta_N^\ell)\|^{b-s} \left\| \left[  (D_{\Delta})^s ,  j_\ell^{\dagger\, (k)}j_\ell^{(k)} \right]  \right\| \right) \\
  & \equiv \|D_{I}^{221'}\| +\| D_{I}^{221''}\| \, .
   }
\end{equation}
Here, employing \eref{norm_bound_S}, the first term is norm-bounded as  
\begin{equation}
   \| D_{I}^{221'}\| \leq  \frac{1}{N^{1/2}}\| j_\ell\|^2 \delta_\ell  \gamma(\delta_\ell)\gamma'(\delta_\ell[1+e^{s d^4 r_{\max} a_{\max}}]) \, .
\end{equation}
With respect to the second term, $\| D_{I}^{221''}\|$, we use that $[(D_{\Delta})^{s},\cdot] = \sum_{j=0}^{s-1} (D_{\Delta})^j[D_{\Delta},\cdot](D_{\Delta})^{s-1-j}$, and that, by employing the definition of $D_{\Delta}$ in Eq.~\eref{S_definition}, it is
\begin{equation}
\eqalign{
  & [D_{\Delta},j^{(k),\dagger}_\ell j^{(k)}_\ell]\\
  &=  \sum_{n \geq 1} \frac{i^n}{n!} \sum_{\nu_1 ,...,\nu_n}\sum_{\alpha} r_{\ell \alpha} r_{\nu_1}\cdots r_{\nu_n} \frac{1}{N^{n/2}}\sum_{\gamma_1, \cdots ,\gamma_n}a_{\nu_n \alpha}^{\gamma_n} a_{\nu_{n-1} \gamma_n}^{\gamma^{n-1} }\cdots a_{\nu_1 \gamma_{2}}^{\gamma_1} [m_{\gamma_1}^N, j^{(k),\dagger}_\ell j^{(k)}_\ell] \\
   &=  \sum_{n \geq 1} \frac{i^n}{n!} \sum_{\nu_1 ,...,\nu_n}\sum_{\alpha} r_{\ell \alpha} r_{\nu_1}\cdots r_{\nu_n} \frac{1}{N^{n/2}}\sum_{\gamma_1, \cdots ,\gamma_n}a_{\nu_n \alpha}^{\gamma_n} a_{\nu_{n-1} \gamma_n}^{\gamma^{n-1} }\cdots a_{\nu_1 \gamma_{2}}^{\gamma_1} \frac{O_{\ell \gamma_1}^{(k)}}{N},
  } 
\end{equation}
with $O_{\ell \gamma_1}^{(k)}$ a local operator with finite support. Therefore,
\begin{equation}
    \eqalign{
    \| D_{I}^{221''} \| &\leq \frac{N\gamma(\delta_\ell)}{2} \sum_{b}|c_\ell^b|\sum_{s=1}^b \left( \matrix{b \cr s} \right) \delta_\ell^{b-s} \sum_j (\frac{1}{\sqrt{N}}\delta_\ell e^{d^4 r_{\max} a_{\max}})^{s}\|O_{\ell \gamma}\|\frac{1}{N} \\
    & \leq \frac{1}{2N^{1/2}}\| O_{\ell \gamma}\|  \delta_\ell e^{d^4r_{\max} a_{\max}} \gamma(\delta_\ell)\gamma'(\delta_\ell(1+e^{d^4 r_{\max} a_{\max}})) \, .
    }
\end{equation}
As a result, we see that the contribution $D_I^{221}$ in norm bounded as
\begin{equation}
    \| D_I^{221} \| \leq \frac{1}{N^{1/2}}  \delta_\ell  \gamma(\delta_\ell)\gamma'(\delta_\ell[1+e^{s d^4 r_{\max} a_{\max}}]) \left(\| j_\ell \|^2 + \frac{1}{2}\|O_{\ell \gamma} \| e^{d^4r_{\max} a_{\max}}\right) \, .
\end{equation}

In order to conclude the proof of the Lemma, we have to consider the remaining term, that is
\begin{equation*}
\eqalign{
    D_I^{222} & = \frac{1}{2}\sum_{k=1}^N \Gamma_\ell(\Delta_N^\ell)  \left( \Gamma_\ell(\Delta_N^\ell)-W_t^{N}(\vec{r}) \Gamma_\ell(\Delta_N^\ell) W_t^{N \dagger}(\vec{r}) \right) \\
    & \times \left(W_t^{N}(\vec{r})j_\ell^{\dagger\, (k)}j_\ell^{(k)}W_t^{N \dagger}(\vec{r})- j_\ell^{\dagger\, (k)}j_\ell^{(k)} \right) W_t^{N}(\vec{r}) \, ,\\
    }
\end{equation*}
Let us first consider 
\begin{equation}
\eqalign{
   &  \left( \Gamma_\ell(\Delta_N^\ell)-W_t^{N}(\vec{r}) \Gamma_\ell(\Delta_N^\ell) W_t^{N \dagger}(\vec{r}) \right)
     \left(W_t^{N}(\vec{r})j_\ell^{\dagger\, (k)}j_\ell^{(k)}W_t^{N \dagger}(\vec{r})- j_\ell^{\dagger\, (k)}j_\ell^{(k)} \right) \\
     & = - \left(\sum_{b \geq 0} c_{\ell}^b \sum_{s=1}^b \left( \matrix{b \cr s} \right) (\Delta_N^\ell)^{b-s} (D_{\Delta})^s\right) \left( \sum_{n\geq 1} \frac{i^n}{n!} \mathbb{K}_{\vec{r} \cdot \vec{F}_t^N }[ j_\ell^{\dagger\, (k)}j_\ell^{(k)}] \right) \\
    & = - \left(\sum_{b \geq 0} c_{\ell}^b b (\Delta_N^\ell)^{b-1} \frac{i}{\sqrt{N}}  \sum_{\alpha} r_{\ell \alpha} \sum_{\nu \gamma} r_{\nu} a_{\nu \alpha}^{\gamma} m_{\gamma}^N \right) \left( i [ \vec{r} \cdot \vec{F}_t^N ,j_\ell^{\dagger\, (k)}j_\ell^{(k)}] \right) + P \\
    & = \left(\Gamma' (\Delta_N^\ell)  \sum_{\alpha} r_{\ell \alpha} \sum_{\nu } r_{\nu} T_{\nu \alpha}^N \right) \left(\frac{1}{\sqrt{N}} [ \vec{r} \cdot \vec{F}_t^N ,j_\ell^{\dagger\, (k)}j_\ell^{(k)}] \right) + P  \, ,\\
    }
\end{equation}
where we have defined 
\begin{equation}
    P = - \left(\sum_{b \geq 0} c_{\ell}^b \sum_{s=1}^b \left( \matrix{b \cr s} \right) (\Delta_N^\ell)^{b-s} (D_{\Delta}|_{n \neq s \neq 1})^s \right) \left( \sum_{n\geq 2} \frac{i^n}{n!} \mathbb{K}_{\vec{r} \cdot \vec{F}_t^N }[ j_\ell^{\dagger\, (k)}j_\ell^{(k)}] \right) \, .
\end{equation}
Here, the notation $D_{\Delta}|_{n \neq s \neq 1}$ signals that when the term $s=1$ is considered, the sum over $n$ in the operator $D_{\Delta}$ [as defined by Eq.~\eref{S_definition}] starts from $n=2$.
By using the norm-bound 
\begin{equation}
\eqalign{
\left\|  
 \mathbb{K}^n_{\vec{r} \cdot \vec{F}^{N}_t}[j_\ell^{\dagger\, (k)}j_\ell^{(k)}]   \right\| 
 \leq (d^2 r_{\max} 2 v)^n \frac{\| j_\ell\|^2}{N^{n/2}} \, ,
 }
\end{equation}
and that, by exploiting Eqs.~\eref{bound_DI222_b} and \eref{bound_DI222_c}, the following bound holds,
\begin{equation}
\eqalign{
    & \left\| \sum_{b \geq 0} c_{\ell}^b \sum_{s=1}^b \left( \matrix{b \cr s} \right) (\Delta_N^\ell)^{b-s} (D_{\Delta}|_{n \neq s \neq 1})^s
    \right\| \\
    & \leq \frac{1}{N}\left[\gamma(\delta_\ell(1+ e^{ d^{4}r_{\max} a_{\max}} ))+ \delta_\ell \gamma'(\delta_\ell) e^{d^4 r_{\max} a_{\max}} \right] \, ,
    }
\end{equation}
then it is 
\begin{equation}
    \| P \| \leq \frac{1}{N^2} \| j_{\ell}\|^2 e^{2 d^{2}r_{\max} v}\left[ \gamma(\delta_\ell(1+ e^{ d^{4}r_{\max} a_{\max}} )) + \delta_\ell \gamma'(\delta_\ell) e^{d^4 r_{\max} a_{\max}}\right] \, .
\end{equation}
Therefore we can conclude the proof, by using $\sum_{k}\frac{1}{\sqrt{N}} [ \vec{r} \cdot \vec{F}_t^N ,j_\ell^{\dagger\, (k)}j_\ell^{(k)}] = \sum_{\mu} \sum_{\beta} r_{\mu} c_{\ell \beta} \frac{1}{N}\sum_k[v_{\mu}^{(k)}, v_{\beta}^{(k)}] = - \sum_{\mu \beta} c_{\ell \beta} T_{\beta \mu}^{N} r_{\mu} = -\vec{c}_{\ell} \cdot (T^{N} \vec{r} )$, 
\begin{equation}
\eqalign{
    \| D_I^{222} - & \frac{1}{2} \Gamma(\Delta_N^{\ell}) \Gamma'(\Delta_N^{\ell}) \vec{r}\cdot (T^N \vec{r}_\ell) (-\vec{c}_{\ell} \cdot (T^{N} \vec{r} ))\|  \\
    & \leq \frac{1}{2N}   \| j_{\ell}\|^2 e^{2 d^{2}r_{\max} v} \gamma(\delta_\ell) \left[ \gamma(\delta_\ell(1+ e^{ d^{4}r_{\max} a_{\max}} )) +\delta_\ell \gamma'(\delta_\ell) e^{d^4 r_{\max} a_{\max}}\right] \, ,
    }
\end{equation}
where we further used $j_{\ell}^{(k)} = \sum_{\gamma} c_{\ell \gamma}^{j} v_{\gamma}^{(k)}$, and introduced ${c}_{\ell \beta} = \sum_{\gamma, \gamma' } c_{\ell \gamma}^{j *} c_{\ell \gamma'}^{j } \frac{1}{2}(a_{\gamma \gamma'}^\beta + b_{\gamma \gamma'}^{\beta}) $, with $\lbrace v_{\gamma}^{(k)} ,v_{\gamma'}^{(k)}\rbrace =  \sum_{\beta} b^{\beta}_{\gamma \gamma'} v_{\beta}^{(k)}$.
\end{proof}
\subsection{Lemma 11}

\begin{lemma}\label{lemma10}
    Given a dynamical generator as defined by Eqs.~\eref{e0_totalHamiltonian}-\eref{jumps}, satisfying Assumption~\eref{Gamma}, the action of the dissipation term $D_{II}$ defined by Eq.~\eref{diss-second-term} on local exponentials  is such that
        \begin{equation*}
    \eqalign{
    & \lim_{N \rightarrow \infty} \omega_{\vec{r}_1 \vec{r}_2 }  \left( e^{t\linn} \left[ D_{II} \right] \right)= \\
    & = \lim_{N \rightarrow \infty} \omega_{\vec{r}_1 \vec{r}_2 }  \left( e^{t\linn}\left[ i \vec{r} \cdot \left( \Gamma_\ell^2(\Delta_\ell^N)M_\ell \vec{F}_t^N \right) + i \sqrt{N} \vec{r} \cdot \left( \Gamma_\ell^2(\Delta_\ell^N) M_\ell \vec{\omega}_t^N \right) \right.\right.\\
    & \left. \left. + \frac{1}{2 } \vec{r} \cdot \left( \Gamma_{\ell}^2(\Delta_\ell^N)M_\ell T^N\vec{r} \right) + \frac{1}{2}\vec{r} \cdot \left( \Gamma_\ell^2(\Delta_\ell^N) G^{N}_\ell \vec{r} \right)\right] W_{t}^N(\vec{r})
    \right) \, . }
    \end{equation*}
\end{lemma}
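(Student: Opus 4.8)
The plan is to establish the statement as the dissipative analogue of Lemmas~\ref{lemma7}--\ref{lemma9}: one expands the action of $D_{II}$ on the local exponential in powers of $1/\sqrt N$, discards the norm-vanishing tails, and isolates the $O(\sqrt N)$ and $O(1)$ pieces that survive after applying $\omega_{\vec r_1\vec r_2}\circ e^{t\linn}$ and letting $N\to\infty$. The two structural inputs are, first, that the rate factors $\Gamma_\ell(\Delta_N^\ell)$ are functions of average operators, so that $\|[\Gamma_\ell(\Delta_N^\ell),O]\|=O(1/N)$ for strictly local $O$ and $\|[\Gamma_\ell(\Delta_N^\ell),m_\alpha^N]\|=O(1/N)$ (Lemma~\ref{lemma_commutators}), while $\|[\Gamma_\ell(\Delta_N^\ell),W_t^N(\vec r)]\|\le\gamma(z)/\sqrt N$ (Eq.~\eref{comm_localexponential}); and second, that $\mathcal D_\ell^{\mathrm{Loc}}$ is the \emph{local} dissipator with $\mathcal D_\ell^{\mathrm{Loc}}[v_\alpha^{(k)}]=\sum_\beta M_{\ell\alpha}^\beta v_\beta^{(k)}$, so that $M_\ell$ will play exactly the role that $A$ plays in Lemma~\ref{lemma7}.

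\textbf{Steps.} First I would put each summand in ``un-conjugated'' form: with $A_k:=[j_\ell^{\dagger(k)},W_t^N(\vec r)]j_\ell^{(k)}+j_\ell^{\dagger(k)}[W_t^N(\vec r),j_\ell^{(k)}]$, so that $D_{II}=\frac{1}{2}\sum_k\Gamma_\ell(\Delta_N^\ell)A_k\Gamma_\ell(\Delta_N^\ell)$, one checks $A_k=\tilde A_k\,W_t^N(\vec r)$ with $\tilde A_k=2j_\ell^{\dagger(k)}\big(W_t^N(\vec r)j_\ell^{(k)}W_t^{N\dagger}(\vec r)\big)-W_t^N(\vec r)\big(j_\ell^{\dagger(k)}j_\ell^{(k)}\big)W_t^{N\dagger}(\vec r)-j_\ell^{\dagger(k)}j_\ell^{(k)}$. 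Crucially $\tilde A_k$ remains supported on site $k$, because conjugation by $W_t^N(\vec r)=e^{i\vec r\cdot\vec F_t^N}$ acts trivially on the off-site part of $\vec r\cdot\vec F_t^N$, so $\mathbb{K}^n_{\vec r\cdot\vec F_t^N}$ applied to a site-$k$ operator stays on site $k$ with norm $\le(2vd^2r_{\max})^n v/N^{n/2}$, whence $\|\tilde A_k\|=O(1/\sqrt N)$. Using the first structural input I would commute the bracketing $\Gamma_\ell$'s through $\tilde A_k$ and through $W_t^N(\vec r)$ to collect $\Gamma_\ell^2(\Delta_N^\ell)\,\frac{1}{2}\sum_k\tilde A_k\,W_t^N(\vec r)=\Gamma_\ell^2(\Delta_N^\ell)\,\mathcal D_\ell^{\mathrm{Loc}}[W_t^N(\vec r)]$, plus a remainder generated by $[\Gamma_\ell(\Delta_N^\ell),W_t^N(\vec r)]$ that I would expand through $W_t^N(\vec r)\Gamma_\ell(\Delta_N^\ell)W_t^{N\dagger}(\vec r)=\Gamma_\ell(\Delta_N^\ell)+\Gamma_\ell'(\Delta_N^\ell)D_\Delta+\dots$ precisely as in the proof of Lemma~\ref{lemma9}. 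Next I would expand $\mathcal D_\ell^{\mathrm{Loc}}[W_t^N(\vec r)]$ in $1/\sqrt N$: the order $N^{-1/2}$ term, summed over $k$ with $\sqrt N\,m_\beta^N=F_\beta^N+\sqrt N\,\omega_\beta^N$, yields $\big(i\vec r\cdot(M_\ell\vec F_t^N)+i\sqrt N\,\vec r\cdot(M_\ell\vec{\omega}_t^N)\big)W_t^N(\vec r)$; the order $N^0$ term splits into a piece coming from the second-order conjugation of a single $j$-factor, which upon expanding in $\{v_\gamma\}$ via $[v_\mu,v_\alpha]=\sum_\gamma a_{\mu\alpha}^\gamma v_\gamma$, $\{v_\mu,v_\alpha\}=\sum_\gamma b_{\mu\alpha}^\gamma v_\gamma$ and $\frac{1}{N}\sum_k v_\eta^{(k)}=m_\eta^N$ gives $\frac{1}{2}\vec r\cdot(M_\ell T^N\vec r)\,W_t^N(\vec r)$, and into the genuine cross term $\frac{1}{2}\,\frac{1}{N}\sum_k[\vec r\cdot\vec{v}^{(k)},j_\ell^{\dagger(k)}][\vec r\cdot\vec{v}^{(k)},j_\ell^{(k)}]$ between the two $j$-factors, which under the same expansion is exactly $\frac{1}{2}\vec r\cdot(G_\ell^N\vec r)\,W_t^N(\vec r)$ with $G_\ell^N$ as in Proposition~\ref{proposition_mesodynamics}. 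Finally I would multiply by $\Gamma_\ell^2(\Delta_N^\ell)$ and apply $\omega_{\vec r_1\vec r_2}\circ e^{t\linn}$, disposing of the residual commutators $[\Gamma_\ell^2,W_t^N(\vec r)]$, $[\Gamma_\ell^2,F_\beta^N]$, $[\Gamma_\ell^2,m_\eta^N]$ and of the replacement $\Gamma_\ell^2(\Delta_N^\ell)\to\Gamma_\ell^2(\Delta_\ell(t))$ by means of Lemma~\ref{lemma_commutators}, Eq.~\eref{comm_localexponential}, Lemma~\ref{lemma_bound_mf}, Lemma~\ref{proof_Lemma-aux}, and Lemma~\ref{Lemma-dilation} together with the boundedness of $\omega(W^N(\vec r)e^{t\linn}[(F_\mu^N(t))^2]W^{N\dagger}(\vec r))$ from Lemma~\ref{lemma_squarefluctu_convergence}, which via Cauchy--Schwarz controls all fluctuation-bilinear remainders.

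\textbf{Main obstacle.} The delicate part is the $O(1)$ bookkeeping. Both the $1/N$ tails of the conjugations $W_t^N(\vec r)j_\ell^{(k)}W_t^{N\dagger}(\vec r)$ and the remainders produced when the bracketing rate factors are commuted past $W_t^N(\vec r)$ are individually of order one once summed over $k$, so one must show that after assembling \emph{all} of them only the four terms in the statement, each carrying a clean $\Gamma_\ell^2$, survive in the mesoscopic limit, the $\Gamma_\ell'$-type contributions recombining or cancelling. This forces one to keep the full $1/\sqrt N$ expansions of $W_t^N(\vec r)\Gamma_\ell(\Delta_N^\ell)W_t^{N\dagger}(\vec r)$ and of $W_t^N(\vec r)j_\ell^{(k)}W_t^{N\dagger}(\vec r)$, not merely their leading terms, and mirrors — with the additional $j^{\dagger}(\cdot)j$ structure — the computation already carried out for $D_I$ in Lemma~\ref{lemma9}.
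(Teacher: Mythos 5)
Your proposal follows essentially the same route as the paper's proof: you isolate $\Gamma_\ell^2(\Delta_N^\ell)\,\mathcal{D}_\ell^{\rm Loc}[W_t^N(\vec r)]$ (the paper's $D_{II}^1$, obtained there by writing $D_{II}=\Gamma_\ell\,\mathcal{D}_\ell^{\rm Loc}[W_t^N]\,\Gamma_\ell=\Gamma_\ell^2\mathcal{D}_\ell^{\rm Loc}[W_t^N]+\Gamma_\ell[\mathcal{D}_\ell^{\rm Loc}[W_t^N],\Gamma_\ell]$), expand the local dissipator action via the $\mathbb{K}$-series to produce exactly the $M_\ell\vec F_t^N$, $\sqrt N\,M_\ell\vec\omega_t^N$, $M_\ell T^N$ and $G_\ell^N$ terms, and dispose of the $[\Gamma_\ell,W_t^N]$-generated remainder with Lemma~\ref{lemma_commutators}, the bound \eref{comm_localexponential}, and Cauchy--Schwarz together with Lemma~\ref{lemma_squarefluctu_convergence} — precisely the estimates the paper uses to show its $D_{II}^2$ vanishes in the mesoscopic limit. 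The only difference is bookkeeping (commuting $\Gamma_\ell$ past the site-local $\tilde A_k$ one site at a time rather than past the whole $\mathcal{D}_\ell^{\rm Loc}[W_t^N]$), which is not a genuinely different argument.
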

\begin{proof}
To start the proof, we define the following operators 
$$
\tilde{W}_1=\frac{1}{2}\sum_{k}\left[ j_\ell^{\dagger\, (k)}, W_t^{N} \right]j_{\ell}^{(k)}\, , \qquad  \tilde{W}_2=\frac{1}{2}\sum_{k}j_\ell^{\dagger\, (k)}\left[W_t^N , j_{\ell}^{(k)}\right]\, 
$$
which are such that $\tilde{W}_1+\tilde{W}_2=\mathcal{D}_\ell^{\rm Loc}[W_t^N]$, being the action of the local dissipator $\mathcal{D}_\ell^{\rm Loc}[\cdot]$ defined by Eq.~\eref{D_loc} . Through such operators we write
$$
D_{II}=\Gamma_\ell^2(\Delta_N^\ell)\mathcal{D}_\ell^{\rm Loc}[W_t^N(\vec{r})]+\Gamma_\ell(\Delta_N^\ell)\left(\left[\tilde{W}_1,\Gamma_\ell(\Delta_N^\ell)\right]+\left[\tilde{W}_2,\Gamma_\ell(\Delta_N^\ell)\right]\right)\, ,
$$
that allows us to split $D_{II}$ in the following terms
\begin{eqnarray}
    & D_{II}^1 = \Gamma_\ell^2(\Delta_N^\ell)\mathcal{D}_\ell^{\rm Loc}[W_t^N(\vec{r})] \, ,\\
    & D_{II}^2 = \Gamma_\ell(\Delta_N^\ell)\left(\left[\tilde{W}_1,\Gamma_\ell(\Delta_N^\ell)\right]+\left[\tilde{W}_2,\Gamma_\ell(\Delta_N^\ell)\right]\right) \, .
\end{eqnarray}
Let us start by considering the former term, $D_{II}^1 $, and explicitly derive $ \mathcal{D}_\ell^{\rm Loc}[W_t^N(\vec{r})]$, that is
\begin{equation}\label{D_local_W}
\eqalign{
    \mathcal{D}_\ell^{\rm Loc}[W_t^N(\vec{r})] & = \frac{1}{2}\sum_{k}\left( \left[ j_\ell^{\dagger\, (k)}, W_t^{N} \right]j_{\ell}^{(k)} + j_\ell^{\dagger\, (k)}\left[W_t^N , j_{\ell}^{(k)}\right]\right) \\
    & =  \frac{1}{2}\sum_{k} \left[ \left( j_\ell^{\dagger\, (k)}-  W_t^{N} j_\ell^{\dagger\, (k)}W_t^{N \dagger} \right) W_t^{N} j_\ell^{(k)}W_t^{N \dagger} \right. \\
    & \left. +  j_\ell^{\dagger\, (k)} \left( W_t^{N} j_\ell^{(k)}W_t^{N \dagger} - j_\ell^{ (k)}\right)   \right] W_t^N(\vec{r}) = \\
    &  = \frac{1}{2}\sum_{k} \left\lbrace \left( -\sum_{n \geq 1} \frac{i^n}{n!} \mathbb{K}_{\vec{r}\cdot \vec{F}_t^N} \left[ j_\ell^{\dagger \, (k)} \right] \right) \left( \sum_{n \geq 0} \frac{i^n}{n!} \mathbb{K}_{\vec{r}\cdot \vec{F}_t^N} \left[ j_\ell^{ (k)} \right]  \right) \right.  \\
    & + \left. j_\ell^{\dagger\, (k)} \left( \sum_{n \geq 1} \frac{i^n}{n!} \mathbb{K}_{\vec{r}\cdot \vec{F}_t^N} \left[ j_\ell^{ (k)} \right] \right) \right\rbrace W_t^N(\vec{r}) = \\
    &  = \frac{1}{2}\sum_{k} \left\lbrace \left( - i[\vec{r}\cdot \vec{F}_t^N,  j_\ell^{\dagger\, (k)}]+\frac{1}{2}\left[\vec{r}\cdot \vec{F}_t^N,[\vec{r}\cdot \vec{F}_t^N,   j_\ell^{\dagger\, (k)}]\right] -\tilde{S}_3[j_{\ell}^{\dagger \, (k)}] \right) \right.\\
    & \times \left. \left( j_\ell^{(k)} +  i[\vec{r}\cdot \vec{F}_t^N,  j_\ell^{ (k)}] + \tilde{S}_2[j_{\ell}^{ (k)}]   \right) \right.  \\ 
    & + \left. j_\ell^{\dagger\, (k)} \left( + i[\vec{r}\cdot \vec{F}_t^N,  j_\ell^{(k)}] - \frac{1}{2}\left[\vec{r}\cdot \vec{F}_t^N,[\vec{r}\cdot \vec{F}_t^N,   j_\ell^{ (k)}]\right] -\tilde{S}_3[j_{\ell}^{ (k)}] \right) \right\rbrace W_t^N(\vec{r}),
    }
\end{equation}
where $\tilde{S}_{\bar{n}}[O^{(k)}] \equiv \sum_{n \geq \bar{n}} \frac{i^n}{n!}  \mathbb{K}^{n}_{\vec{r} \cdot \vec{F}^N_t} [O^{(k)}] $. By using the norm-bound $ \| \tilde{S}_{\bar{n}} \|  \leq \|O \| e^{2d^{2}r_{\max} v} / \sqrt{N}^{\bar{n}}$, we thus have 
\begin{equation}\label{Dloc_onW}
    \eqalign{
     \mathcal{D}_\ell^{\rm Loc}[W_t^N(\vec{r})] & = \frac{1}{2}\sum_{k} \left( - i[\vec{r}\cdot \vec{F}_t^N,  j_\ell^{\dagger\, (k)}] j_\ell^{(k)} + j_\ell^{\dagger\, (k)}i[\vec{r}\cdot \vec{F}_t^N,  j_\ell^{(k)}] \right) \\
     &   +   \frac{1}{2}\sum_{k} \left(  \frac{1}{2}\left[\vec{r}\cdot \vec{F}_t^N,[\vec{r}\cdot \vec{F}_t^N,   j_\ell^{\dagger\, (k)}]\right] j_\ell^{(k)} - j_\ell^{\dagger\, (k)} \frac{1}{2}\left[\vec{r}\cdot \vec{F}_t^N,[\vec{r}\cdot \vec{F}_t^N,   j_\ell^{ (k)}]\right] \right.\\
     & \left. + [\vec{r}\cdot \vec{F}_t^N,  j_\ell^{\dagger\, (k)}][\vec{r}\cdot \vec{F}_t^N,  j_\ell^{(k)}] \right) + P\,
     }
\end{equation}
with 
\begin{equation}
\eqalign{
   \| P\| &  \leq  \frac{1}{2} N \left( \left\|\tilde{S}_{1}[j_\ell^{\dagger, (k)}] \right\| \, \left\| \tilde{S}_{2}[j_\ell^{ (k)}]  \right\| +2 \left\| \tilde{S}_{3}[j_\ell^{ (k)}]  \right\| \| j_\ell^{(k)}\| \right. \\
   & \left. + \|\frac{1}{2}\left[\vec{r}\cdot \vec{F}_t^N,[\vec{r}\cdot \vec{F}_t^N,   j_\ell^{\dagger\, (k)}]\right] [\vec{r}\cdot \vec{F}_t^N,  j_\ell^{(k)}] \| \right) \\
   & \leq \frac{1}{2\sqrt{N}} \|j_\ell \|^2 \left[ e^{4d^{2}r_{\max} v} + 2e^{2d^{2}r_{\max} v} + (2 d^2 r_{\max} v)^3 \right]
   }
\end{equation}
In order to express the leading order terms of $ \mathcal{D}_\ell^{\rm Loc}[W_t^N(\vec{r})]$ in a compact form, we rewrite the first line of Eq.~\eref{Dloc_onW} as
\begin{equation}
\eqalign{
    & \frac{1}{2}\sum_{k} \left( - i[\vec{r}\cdot \vec{F}_t^N,  j_\ell^{\dagger\, (k)}] j_\ell^{(k)} + j_\ell^{\dagger\, (k)}i[\vec{r}\cdot \vec{F}_t^N,  j_\ell^{(k)}] \right) \\
    & = i \sum_{\mu} r_{\mu}  \left( \frac{1}{2}\sum_{k}  \left(  \left[   j_\ell^{\dagger\, (k)}, {F}_{\mu}^N(t) \right] j_\ell^{(k)} + j_\ell^{\dagger\, (k)} [{F}_{\mu}^N(t),  j_\ell^{(k)}]  \right) \right)\\
    & =  i \sum_{\mu} r_{\mu}  \mathcal{D}_\ell^{\rm Loc}[F_{\mu}^N(t)] = i \sum_{\mu} r_{\mu}  \sum_{\nu} M_{\ell \mu}^{\nu} (F_{\nu}+ \sqrt{N}\omega_t^{\nu}) \,  .
     }
\end{equation}
Moreover, employing 
\begin{equation}
    \eqalign{
    &    \frac{1}{2}\left[\vec{r}\cdot \vec{F}_t^N,[\vec{r}\cdot \vec{F}_t^N,   j_\ell^{\dagger\, (k)}]\right] j_\ell^{(k)} - j_\ell^{\dagger\, (k)} \frac{1}{2}\left[\vec{r}\cdot \vec{F}_t^N,[\vec{r}\cdot \vec{F}_t^N,   j_\ell^{ (k)}]\right]  \\
    & = -\frac{1}{2}\left[\vec{r}\cdot \vec{F}_t^N,[   j_\ell^{\dagger\, (k)}, \vec{r}\cdot \vec{F}_t^N ]j_\ell^{(k)} \right]  -  \frac{1}{2}\left[\vec{r}\cdot \vec{F}_t^N, j_\ell^{\dagger\, (k)}[\vec{r}\cdot \vec{F}_t^N,   j_\ell^{ (k)}]\right] \, ,
    }
\end{equation}
the second line of Eq.~\eref{Dloc_onW} reads
\begin{equation}
\eqalign{
    &\frac{1}{2} \sum_{\mu} \sum_{\nu} r_{\mu} r_{\nu} \left[ \mathcal{D}_\ell^{\rm Loc}[F_{\mu}^N(t)], F_{\nu}^{N}(t) \right] = \\
    & \frac{1}{2} \sum_{\mu} \sum_{\nu} r_{\mu} r_{\nu} \sum_{\gamma} M_{\ell \mu}^{\gamma}[F_{\gamma}^{N}(t), F_{\nu}^{N}(t)] = \frac{1}{2} \sum_{\mu} \sum_{\nu} r_{\mu}  \sum_{\gamma} M_{\ell \mu}^{\gamma}T_{\gamma \nu}^{N} r_{\nu} \, ,\\
   }
\end{equation}
and, using $j_\ell^{(k)} = \sum_{\alpha}c_{\ell \alpha}^{j}v_{\alpha}^{(k)}$, the last line can be written as 
\begin{equation}
\eqalign{
  & \frac{1}{2} \sum_{k }  [\vec{r}\cdot \vec{F}_t^N,  j_\ell^{\dagger\, (k)}][\vec{r}\cdot \vec{F}_t^N,  j_\ell^{(k)}] \\
  &= -\frac{1}{2} \sum_{\mu \nu}r_{\mu}  r_{\nu}  \frac{1}{N} \sum_k \sum_{\alpha \beta} c^{j,*}_{\ell \alpha} c^{j}_{\ell \beta} [v_{\mu}^{(k)},v_{\alpha}^{(k)}][v_{\beta}^{(k)},v_{\nu}^{(k)}] \\
  &= -\frac{1}{2} \sum_{\mu \nu}r_{\mu}  r_{\nu}  \frac{1}{N} \sum_k \sum_{\alpha \beta} c^{j,*}_{\ell \alpha} c^{j}_{\ell \beta} a_{\mu \alpha}^{\gamma} a_{\beta \nu}^{\gamma'} v_{\gamma}^{(k)} v_{\gamma'}^{(k)} \\
  & = -\frac{1}{2} \sum_{\mu \nu}r_{\mu}  r_{\nu}  \sum_{\alpha \beta} \sum_{\gamma \gamma'} c^{j,*}_{\ell \alpha} c^{j}_{\ell \beta} a_{\mu \alpha}^{\gamma} a_{\beta \nu}^{\gamma'} d_{\gamma \gamma'}^{\eta} m_{\eta}^N\\
  }
\end{equation}
where we have further used $v_{\gamma }v_{\gamma'} =( \lbrace v_{\gamma }, v_{\gamma'} \rbrace + [v_{\gamma },v_{\gamma'}])/2 = \sum_{\eta} d_{\gamma \gamma'}^{\eta}m_{\eta}^N$. Before going further on, notice that, by construction, $ \frac{1}{2} \sum_{k }  [\vec{r}\cdot \vec{F}_t^N,  j_\ell^{\dagger\, (k)}][j_\ell^{(k)}, \vec{r}\cdot \vec{F}_t^N] $ is positive definite. For the sake of a lighter notation, we can define the matrix $G_{\ell, \mu \nu}^N  \equiv   \sum_{\alpha \beta} \sum_{\gamma \gamma'} a_{\mu \alpha}^{\gamma} c^{j,*}_{\ell \alpha}d_{\gamma \gamma'}^{\eta} m_{\eta}^N c^{j}_{\ell \beta}  a_{\beta \nu}^{\gamma'} $
By all the above contributions, we can finally write
\begin{equation}
    \eqalign{
    & \lim_{N \rightarrow \infty} \omega_{\vec{r}_1 \vec{r}_2 }  \left( e^{t\linn} \left[ D_{II}^{1} \right] \right) = \\
    &  \lim_{N \rightarrow \infty} \omega_{\vec{r}_1 \vec{r}_2 }  \left( e^{t\linn}\left[ i \vec{r} \cdot \left( \Gamma_\ell^2(\Delta_\ell^N)M_\ell \vec{F}_t^N \right) + i \sqrt{N} \vec{r} \cdot \left( \Gamma_\ell^2(\Delta_\ell^N)M_\ell \vec{\omega}_t^N \right) \right.\right.\\
    & \left. \left. + \frac{1}{2 } \vec{r} \cdot \left( \Gamma_\ell^2(\Delta_\ell^N)M_\ell T^N\vec{r} \right) 
    -\frac{1}{2}\vec{r} \cdot( \Gamma_\ell^2(\Delta_\ell^N) G_\ell^{N}\vec{r})\right] W_{t}^N(\vec{r})
    \right) \, .
    }
\end{equation}

To proceed further and evaluating the contributions coming from $D_{II}^2$, we notice that the latter can be written as
$   D_{II}^2 = \Gamma_\ell(\Delta_N^\ell) \left[\mathcal{D}_\ell^{\mathrm{Loc}}[W_t^{N}(\vec{r})],\Gamma_\ell(\Delta_N^\ell)\right] $. Thus, by employing the expression $\mathcal{D}_\ell^{\mathrm{Loc}}[W_t^{N}(\vec{r})]$ of Eq.~\eref{D_local_W} it is
\begin{equation}
\eqalign{
    D_{II}^2 & = \Gamma_\ell(\Delta_N^\ell) \\
    & \times \left\lbrace \left[  \left( i \vec{r} \cdot \left( M_\ell \vec{F}_t^N \right) + i \sqrt{N} \vec{r} \cdot \left( M_\ell \vec{\omega}_t^N \right) \right) W_t^N(\vec{r}), \Gamma_\ell(\Delta_N^\ell)  \right] \right. \\
    &  + \left[ \left( \frac{1}{2 } \vec{r} \cdot \left( M_\ell T^N\vec{r} \right) + \frac{1}{2}\vec{r} \cdot( G_\ell^{N}\vec{r}) \right)W_t^N(\vec{r}) , \Gamma_\ell(\Delta_N^\ell) \right] \\
    & \left. + [P W_t^N(\vec{r}), \Gamma_\ell(\Delta_N^\ell)  ]\right\rbrace \, .
    }
\end{equation}
The contribution in the last line can be bounded as 
\begin{equation}
\eqalign{
 & \| \Gamma_\ell(\Delta_N^\ell) \left[P W_t^N(\vec{r}), \Gamma_\ell(\Delta_N^\ell) \right] \|  \leq  2\gamma^2(\delta_\ell)\| P \| \\
 & \leq  \frac{\gamma^2(\delta_\ell)}{\sqrt{N}} \|j_\ell \|^2 \left[ e^{4d^{2}r_{\max} v} + 2e^{2d^{2}r_{\max} v} + (2 d^2 r_{\max} v)^3 \right] \, .
 }
\end{equation}
Similarly, by employing Lemma \eref{lemma_commutators} and the bound \eref{comm_localexponential}, for the second-last line it is 
\begin{equation}
    \eqalign{
   & \left\| \Gamma_\ell(\Delta_N^\ell) \left[ \left( \frac{1}{2 } \vec{r} \cdot \left( M_\ell T^N\vec{r} \right) + \frac{1}{2}\vec{r} \cdot( G_\ell^{N}\vec{r}) \right) W_t^N(\vec{r}), \Gamma_\ell(\Delta_N^\ell) \right] \right\| \\
   & \leq \gamma(\delta_\ell)r_{\max}^2 \left( | M_\ell| a_{\max} +|G_\ell|   \right) \left(\frac{1}{2} \frac{1}{\sqrt{N}}\gamma( \delta_\ell (1+e^{d^4 r_{\max} a_{\max}})) + \frac{1}{N}\delta_\ell \gamma'(\delta_\ell)  \right) \,.
    }
\end{equation}
Finally, we are left with the following term, 
\begin{equation}
    \eqalign{
    & \Gamma_\ell(\Delta_N^\ell)  \left[  \left( i \vec{r} \cdot \left( M_\ell \vec{F}_t^N \right)  \right) W_t^N(\vec{r}), \Gamma_\ell(\Delta_N^\ell)  \right]\\
    &= i \Gamma_\ell(\Delta_N^\ell) \sum_{\mu \nu}r_{\mu}M_{\ell \mu}^{\nu}\left[F_{\nu}^{N}(t),  \Gamma_\ell(\Delta_N^\ell) \right]W_t^{N}(\vec{r}) \\
    & + i \Gamma_\ell(\Delta_N^\ell) \sum_{\mu \nu}r_{\mu}M_{\ell \mu}^{\nu} F_{\nu}^{N}(t) \left[W_t^{N}(\vec{r}), \Gamma_\ell(\Delta_N^\ell) \right] \,.
    }
\end{equation}
The second line of the latter expression can be bounded, using that 
\begin{equation}
    \| \left[F_{\nu}^{N}(t),  \Gamma_\ell(\Delta_N^\ell) \right] \| \leq \frac{2}{\sqrt{N}}\delta_\ell \gamma'(\delta_\ell)\, ,
\end{equation}
as
\begin{equation}
   \eqalign{ 
   & \|\Gamma_\ell(\Delta_N^\ell) \sum_{\mu \nu}r_{\mu}M_{\ell \mu}^{\nu}\left[F_{\nu}^{N}(t),  \Gamma_\ell(\Delta_N^\ell) \right]W_t^{N}(\vec{r}) \| \\
   & \leq \frac{2}{\sqrt{N}} \delta_\ell \gamma'(\delta_\ell) \gamma(\delta_\ell) d^4 r_{\max}|M_\ell| \, .
   }
\end{equation}
Finally, using Eq.~\eref{comm_localexponential}, the result \eref{statement_lemmB2}, and Lemma \eref{lemma_squarefluctu_convergence} it is
\begin{equation}
\eqalign{
  &   \lim_{N \rightarrow} \omega_{\vec{r}_1 \vec{r}_2} \left(e^{t\linn}\left[ \Gamma_\ell(\Delta_N^{\ell})F_{\nu}^N(t) \left[W_t^{N}(\vec{r}), \Gamma_\ell(\Delta_N^\ell) \right] \right] \right) \\
  & =  \lim_{N \rightarrow} \omega_{\vec{r}_1 \vec{r}_2} \left(e^{t\linn}\left[ F_{\nu}^N(t) \Gamma_\ell(\Delta_N^{\ell})\left[W_t^{N}(\vec{r}), \Gamma_\ell(\Delta_N^\ell) \right] \right] \right) = \\
  & \leq \lim_{N\rightarrow +\infty} \gamma^2(\delta_\ell) \| \left[W_t^{N}(\vec{r}), \Gamma_\ell(\Delta_N^\ell) \right] \|^2 \sqrt{\omega\left( W^{N}(\vec{r}_1) \left[ \left(F_{\nu}^N(t)\right)^2 \right]  W^{N \dagger}(\vec{r}_1) \right)} = 0 \, ,
    }
\end{equation}
which concludes the proof of the Lemma.
\end{proof}

\section{Analysis of a quantum Hopfield Neural Network}

\subsection{Mean-field equations }
\label{C:HNN_mean Field}
In this Appendix, we derive the mean-field equations for the model presented in Section \ref{sec5}. We consider a system made of $N$ spin $1/2$-particles,  which evolves according to a Markovian quantum master equation, characterized by jumps operators %
$$
J_{\pm}^{(k)} = \hat{\sigma}_{\pm}^{(k)}\Gamma_{\pm}(\Delta E), \qquad \Gamma_{\pm}(\Delta E)= \frac{ e^{\pm \frac{\beta}{2}\Delta E }}{ \sqrt{2}} \, , \qquad \Delta E= \frac{1}{N}\sum_{\mu=1}^{p}\xi^{\mu}_{k}\sum_{j } \xi_{j}^{\mu} \hat{\sigma}^{(j)}_{z} \, ,$$
and Hamiltonian
$$H=\Omega\sum_{i=1}^{N} \hat{\sigma}^{(i)}_x \, .$$
As already commented in the main text, the variables $\lbrace \xi_{k}^{\mu} \rbrace$ are i.i.d. random variables, with $\xi_{k}^{\mu} = \pm 1$ and probability distribution $\mathbb{P}(\xi_{k}^{\mu} = \pm 1) = \frac{1}{2}$. By applying the large-spin mapping described in Section \ref{S:large-spin_Mapping}, we consider the mean-field equations related to the the average operators $m^{N_s}_{\alpha, k}$, which reads
\begin{equation}\label{eq:mf-largespinHNN}
\eqalign{
    & \dot{m}_{k,z}(t) = 2\Omega m_{k,y}(t)-\cosh{(\beta \Delta E_{\Lambda_k}(t))} m_{k,z}(t)+\sinh{(\beta \Delta E_{\Lambda_k}(t))} \, ,\\
    & \dot{m}_{k,y}(t) = -2\Omega m_{k,z}(t)-\frac{1}{2}\cosh{(\beta \Delta E_{\Lambda_k}(t))}m_{k,y}(t) \, ,\\
    & \dot{m}_{k,x}(t) =-\frac{1}{2}\cosh{(\beta \Delta E_{\Lambda_k}(t)) }m_{k,x}(t) \, .\\
    }
\end{equation}
It is worth noticing that the equation for the variable $\dot{m}_{k,x}(t)$ decouples. In the following, without loss of generality, we will consider an initial condition $m_{k,x}(0)=0$ and focus on the equations of motion along the $y$ and $z$ directions. 

We remind that we aim at characterizing, in the long time limit behaviour, the retrieval property of the system. To this purpose, it is more useful to consider the equations of motion for the \textit{overlap} variables. For the sake of completeness, we introduce the latter starting from the corresponding operator formulation, that reads
$$ o^{\mu,N}_{\alpha} = \frac{1}{N}  \sum_{i=1}^{N} 
     \xi_{i}^{\mu}\sigma_{\alpha}^{(i)} = \frac{1}{2^p} \sum_{k=1}^{2^p}e_{k}^{\mu} \frac{S_{\alpha}^{(k)}}{N_s} \, ,$$
for $\mu=1,...,p$, having employed the large-spin mapping in the last step. It is worth noticing that the expectation value of these operators quantifies the amount of alignment of the spin $1/2$ system with respect to to the (classical) configuration defined by the $\mu$-th pattern, $(\xi_{1}^{\mu},..., \xi_{N}^{\mu})$ [or, equivalently, of the system of $2^p$ large-spins with respect to the $\mu$-th configuration $(e_{1}^{\mu},..., e_{2^p}^{\mu})$]. The equations of motion for the overlap parameters $o^{\mu}_{\alpha} = \lim_{N \rightarrow + \infty} \omega_t(o^{\mu,N}_{\alpha})$ read
\begin{equation}\label{eq_overlap}
\eqalign{
    & \dot{o}_z^{\mu}(t) = 2\Omega o_{y}^{\mu} (t) - \braket{\braket{ \cosh{(\beta \vec{\xi}\cdot \vec{o}_z(t) )} }} {o}_z^{\mu}(t) + \braket{\braket{\xi^{\mu} \sinh{ (\beta \vec{\xi}\cdot \vec{o}_z(t) )}}} \, , \\
    & \dot{o}_y^{\mu}(t) = -2\Omega o_{z}^{\mu}(t) - \frac{1}{2}\braket{\braket{ \cosh{(\beta \vec{\xi}\cdot \vec{o}_z(t) )} }} {o}_y^{\mu}(t)\, .
    } \\
\end{equation}
The above equations have been further simplified under the self-averaging hypothesis \cite{Rotondo:JPA:2018, GrensingK86, KochP_JSP_89}, allowing us to employ the substitution $\frac{1}{N} \sum_{i}f(\xi_i^{\mu}) \rightarrow \frac{1}{2^p} \sum_{\mathbb{\xi}}\mathbb{P}(\xi) f{(\mathbb{\xi})} \equiv \braket{\braket{f{(\mathbb{\xi}})}}$, wit h $f(\cdot)$ a generic function. We also employed the substitution $\omega_t(\sigma_i^{\alpha}) = \xi_i^{\mu} o^{\mu}_\alpha(t)$, assuming a homogeneous distribution of the misalignment between patterns and spins \cite{Fiorelli:PRA:2019, FiorelliLM22, BoedekerFM23}. The equations \eref{eq_overlap} are symmetric with respect to $(i)$ variable permutation $(o_{z}^{\mu},o_{y}^{\mu}) \leftrightarrow (o_{z}^{\nu},o_{y}^{\nu})$, and $(i)$ mode inversion $(o_{z}^{\mu},o_{y}^{\mu}) \leftrightarrow (-o_{z}^{\mu},-o_{y}^{\mu})$, and therefore the dynamics displays the following invariant subspaces: $(o_{z}^{\mu},o_{y}^{\mu}) = (0,0)$,  $(o_{z}^{\mu},o_{y}^{\mu}) = (o_{z}^{\nu},o_{y}^{\nu})$, $(o_{z}^{\mu},o_{y}^{\mu}) = (-o_{z}^{\nu},-o_{y}^{\nu})$. 

First of all we consider the paramagnetic solutions, $(o_{z}^{\mu},o_{y}^{\mu}) = (0,0)$ $\forall \mu$, analyzing their stability. We linearize the EoMs \eref{eq_overlap}, with respect to the solution $o_{\alpha}^{\mu} = 0+\delta o^{\mu}_{\alpha} $, $\alpha=z,y$, obtaining
\begin{equation}
    \left( 
    \matrix{ \delta \dot{o}^{\mu}_z \cr
    \delta \dot{o}^{\mu}_y }
    \right) =
    \left(
    \matrix{
    \beta -1 && 2\Omega \cr
    -2 \Omega && -\frac{1}{2}
    }
    \right)
    \left( 
    \matrix{ \delta o^{\mu}_z \cr
    \delta o^{\mu}_y
    }
    \right) \, .
\end{equation}
The stability matrix coincides with the one analyzed in Ref. \cite{Rotondo:JPA:2018}. We summarize here the results, for the sake of completeness. The eigenvalues reads
$\lambda_{\pm} = \frac{1}{2}[\beta - \frac{3}{2} \pm \sqrt{(\beta -\frac{1}{2})^2 - 16\Omega^2}]$, and give rise to the following different regimes:
\begin{itemize}
    \item[$(V)$] $ \lbrace \beta < 1+8\Omega^2 \rbrace \cap \lbrace \beta < 3/2 \rbrace \cap \lbrace |\beta - 1/2| > 4\Omega \rbrace $,  $\lambda_{\pm} \in \mathbb{R}^-$, the fixed point is stable;  
    \item[$(IV)$]$ \lbrace \beta < 1+8\Omega^2 \rbrace \cap \lbrace \beta < 3/2 \rbrace \cap \lbrace |\beta - 1/2| < 4\Omega \rbrace $, $\lambda_{\pm} \in \mathbb{C}$, with $\mathrm{Re}(\lambda_{\pm}) \in \mathbb{R}^-$, the fixed point is stable and spiralizing. 
    \item[$(III)$] $ \lbrace \beta < 1+8\Omega^2 \rbrace \cap \lbrace \beta > 3/2 \rbrace \cap \lbrace \beta  < 4\Omega +1/2 \rbrace $, $\lambda_{\pm} \in \mathbb{C}$, with $\mathrm{Re}(\lambda_{\pm}) \in \mathbb{R}^+$, the fixed point is unstable and spiralizing. 
    \item[$(II)$] $ \lbrace \beta < 1+8\Omega^2 \rbrace \cap \lbrace \beta > 3/2 \rbrace \cap \lbrace \beta  > 4\Omega +1/2 \rbrace $, $\lambda_{\pm} \in \mathbb{C}$, with $\mathrm{Re}(\lambda_{\pm}) \in \mathbb{R}^+$, the fixed point is unstable.
    \item[$(I)$] $  \beta > 1+8\Omega^2$, $\lambda_{\pm} \in \mathbb{R}$, with different sign, the fixed point is a saddle point.
\end{itemize}

It is worth noticing that the change in the sign of $\mathrm{Re}[\lambda_{\pm}]$ between regions $(III)$ and $(IV)$ identifies a Hopf bifurcation, that signals limit cycle solutions, already found by Refs. \cite{Rotondo:JPA:2018, FiorelliLM22}. 

Let us now turn our attention to the ferromagnetic, or retrieval, solutions, that have the form $o^{\nu}_\alpha = \bar{o}_\alpha \delta_{\nu \mu} $  for a specific $\mu=1,...,p$. By employing Eqs. \eref{eq_overlap}, the stationary equations for these type of solutions read
\begin{equation}\label{c:statRet}
    \left[1+8\frac{\Omega^2}{\cosh^2{(\beta \bar{o}_z)}} \right] \bar{o}_z = \tanh{(\beta \bar{o}_z)} \, ,
\end{equation}
which admit finite values solution for $(1+8 \tilde{\Omega}^2)<\beta$, where $\tilde{\Omega} = \Omega/ \cosh{(\beta \bar{o}_z)}$. The stability analysis can be performed by linearizing Eqs. \eref{eq_overlap} with respect to the solutions $o^{\nu}_{\alpha} = \bar{o}_{\alpha} \delta_{\nu \mu} + \delta o_{\alpha}^{\nu} $, further taking $\mu=1$ without loss of generality. The stability matrix reads in this case
\begin{equation}
    \left(  \matrix{
    \cosh{(\beta \bar{o}_z)}\left[ \beta\left( 1- \bar{o}_z  \tan{(\beta \bar{o}_z)} \right)-1 \right] && 2\Omega \cr
    -2\Omega( 1-\beta {o}_z  \tan{(\beta \bar{o}_z)}) && -\frac{1}{2} \cosh{(\beta \bar{o}_z)} \,
    }
    \right)
\end{equation}
with eigenvalues that can be written as
$$
\eqalign{
\lambda_{ \pm} & = \frac{\cosh{(\beta \bar{o}_z)}}{2} \left\lbrace \beta' -\frac{3}{2} \pm \sqrt{(\beta'-\frac{3}{2})^2 + 2(\beta'-1+8 \tilde{\Omega}^2(\beta-1-\beta')} \right\rbrace  \\
& = \frac{\cosh{(\beta \bar{o}_z)}}{2} \left\lbrace \beta' -\frac{3}{2} \pm \sqrt{(\beta')^2 - (1+16\tilde{\Omega}^2)\beta'+16 \tilde{\Omega}^2(\beta-1)+\frac{1}{4}} \right\rbrace \, .  \\
}
$$
where we introduced $\beta' = \beta(1-\bar{o}_z \tanh{(\beta \bar{o}_z)}).$ For the parameter regime $\beta > 1+8\tilde{\Omega}$ we have numerically evaluated the eigenvalues $\lambda_{\pm}$ with respect to the retrieval solutions, obtaining $\lambda_{\pm} \in \mathbb{R}^{-}$, and thus stable retrieval solutions. 

The resulting phase diagram is illustrated in Fig.~\ref{fig:phaseD}. It displays a high-temperature paramagnetic phase, and a low-temperature retrieval phase, taking place for values of $\beta$ and $\Omega$ such that $\beta>(1+8\tilde{\Omega}^2)$. There exists an intermediate region where stationary configurations feature self-sustained oscillations. Such a parametric region ranges from the parameters at which the Hopf bifurcation takes place, to the parameter values corresponding to stable retrieval solutions, $\beta < 3/2$, $\Omega > 1/4$,  $\beta < 1+8\tilde{\Omega}^2$. We further performed numerical analysis to unveil the characteristic of this phase, which enforce the suggestion that it correspond to a limit-cycle phase. Figure \ref{fig:LC_SD}(a) shows the  standard deviation 
$$o_{\mathrm{sd}} = \sqrt{\frac{1}{N_I}\sum_{i \in I}[o_z(t_i)-\bar{o}_z]^2} \, , $$
of the overlap component $o_z(t)$ with respect to the corresponding long-time average, say $\bar{o}_z$. Here, $\lbrace o_{z}(t_i)\rbrace_{i=1}^{N_I}$ are obtained at times $t \in I$, being $I$ a time-interval chosen at long times. We set $N_I=2000$, $I=[9 \times 10^3, 10^4]$. The upper and lower panel in Fig.~\ref{fig:LC_SD}(b) show the parametric plot and flux diagram of the vector field $( o_z(t), o_{y}(t))$ at long times. In the lower panel, results are shown at $T=0.6$, $\Omega=0.35$, corresponding to the retrieval phase. Here the convergence to two stable solutions is illustrated. The upper panel is obtained for values of $(T,\Omega)$ in the limit-cycle phase, $T=0.6$, $\Omega=0.6$, and it displays a stable orbit. 
\begin{figure}
    \centering
\includegraphics[width=0.8\linewidth]{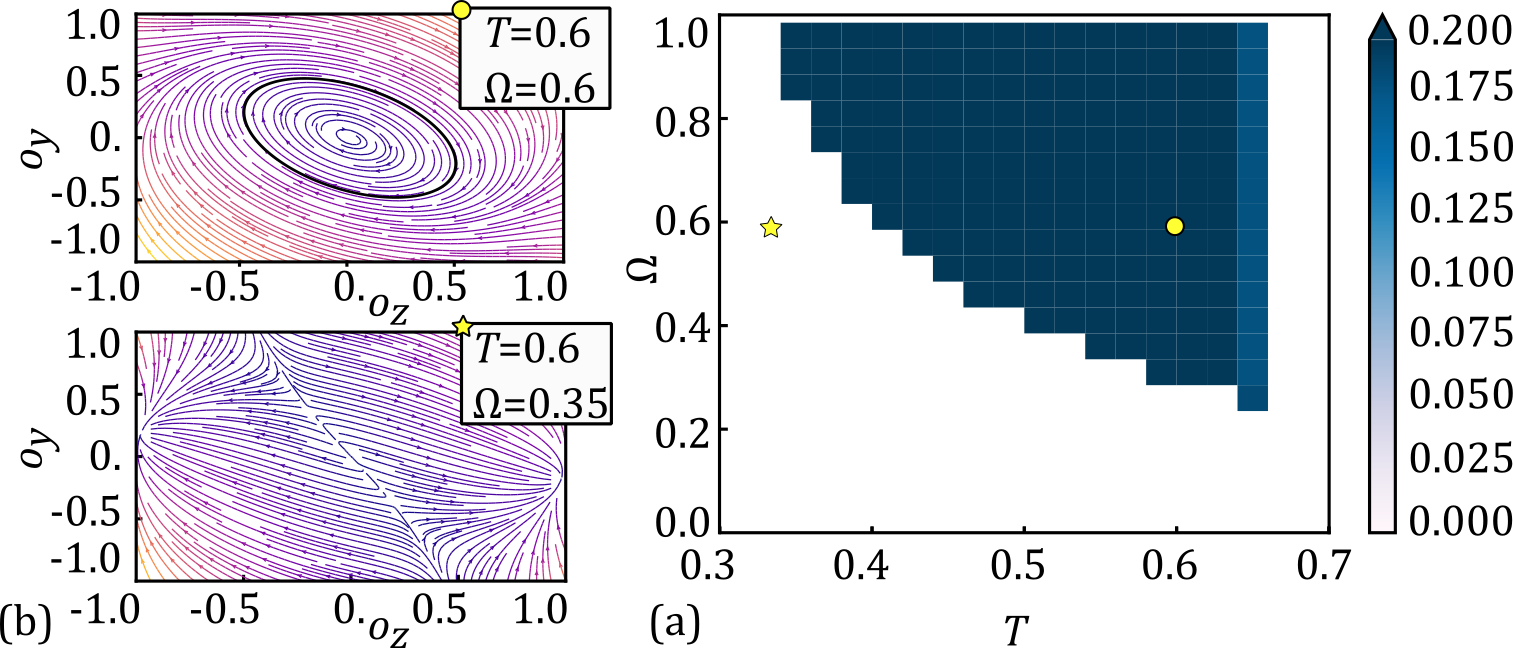}
    \caption{\textbf{Limit-cycle phase}. (a) For the case $p=1$, we display the standard deviation $o_{\mathrm{sd}}$ of the overlap $o_z(t)$ with respect to the stationary solutions, at long times. We set initial conditions such that $o_{\alpha}(0) \approx 0$, for $\alpha = y, z$. In the blue region $o_{\mathrm{sd}}$ is finite, signalling a limit-cycle phase. (b) With $T=0.6$, the two panels display the parametric plot and the flux diagram of the vector field $(o_y(t),o_z(t))$, for $\Omega=0.35$ (bottom panel), and $\Omega=0.6$ (upper panel). The latter shows a closed orbit, that in general characterizes limit-cycle phases. Colors for the flux diagram identify the norm of the corresponding vector field, which increases from purple to yellow. }
    \label{fig:LC_SD}
\end{figure}

\subsection{Derivation of the equation for the covariance matrix}
\label{A:covarianceM_HNN}

By exploiting Theorem \eref{theorem-mesoscopicdynamics} and the general results of Section \ref{sec4}, we derive the time evolution of the covariance matrix related to the system open quantum generalized HNN. For the sake of completeness, and for deriving some considerations (as it will be clearer at the end of the Section), we add here also an additional, all-to-all Hamiltonian of the type 
$H_2= \frac{g}{2}\frac{1}{2^{p}N_{s}}\sum_{h,k} \tilde{w}_{h,k} S^{(h)}_z S^{(k)}_z$.
The time evolution of the covariance matrix reads
\begin{equation}
\dot{\Sigma}^{\omega} = Q\Sigma^{\omega}+\Sigma^{\omega}Q^T+\sigma S^{\mathrm{sym}}\sigma^T + \Gamma^{2}G^{\mathrm{sym}} \, ,
\end{equation}
where we assume to consider an initial state that is stationary with respect to the mean-field variables.
The matrices appearing in the equation of the covariance matrix read 
$${Q} \equiv  \vec{f}(\vec{\omega}) - i\sigma(\vec{\omega}) 2ih^{\mathrm{R}} \, , $$
$$\Gamma^2 G^{\mathrm{sym}} = \sum_\ell\sum_{k'} \Gamma_\ell^2(\Delta_{\ell}^{k'}) G_\ell^{\mathrm{sym} \, k'} \, , $$
$$[G^{\mathrm{sym} \, k'}_\ell]_{(\mu h) (\nu k)} = \delta_{hk} \delta_{hk'} \left[ \sum_{\alpha \beta} c^{j,*}_{\alpha} c^{j}_{\beta} \sum_{\gamma \gamma' } a_{\mu \alpha}^{\gamma}  a_{\nu \beta }^{\gamma'} \frac{1}{2}\sum_{i}\left( \lbrace v_{\gamma}^{(i)}, v_{\gamma'}^{(i)} \rbrace + [v_{\gamma}^{(i)}, v_{\gamma'}^{(i)}] \right)\right]^{\mathrm{sym}}   \,  ,$$
$$\eqalign{
[S^{\mathrm{sym}}]_{(\mu h ),(\nu k)} & =  \frac{1}{2}\sum_\ell \sum_{k'} (r_{\ell \mu}^{h k'} {r}_{\ell \nu}^{k' k} + r_{\ell \nu}^{h k'} {r}_{\ell \mu}^{k' k} ) \left(\Gamma'(\Delta^{k'}_{\ell}) \right)^2   \omega_t\left(\frac{1}{N_s}\sum_{i \in \Lambda_{k'}}  j_\ell^{\dagger (i)} j_\ell^{ (i)}   \right)  \\
& + \frac{1}{2} \sum_\ell \sum_{k'} \Gamma(\Delta_{\ell}^{k'}) \Gamma'(\Delta_{\ell}^{k'})  (r_{\ell \alpha}^{hk'} \delta_{k'k}c_{\ell \beta} + r_{\ell \beta}^{k'k}\delta_{k'h} c_{\ell \alpha}  ) \, .
}$$
We will now specify the terms of the above expression in relation to the model we are considering, starting from the matrix $Q$. We remind that the matrix form $f_{\alpha}$ in the contributions $\vec{f}(\vec{\omega})$ is formally given by Lemma \ref{lemma_gen_action}. We further employ that, for the two-body interaction Hamiltonian $H_2$, it is $h_{(\alpha h) \, (\beta, k) } = \delta_{\alpha \beta} \delta_{\beta z} \frac{g}{2}\tilde{w}_{h,k}$; the on-site jump operators reads $j_{\pm}^{(k)} = \sigma_{\pm}^{(k)}$; and the structure constant of the single-particle algebra read $a_{\alpha \beta}^{\gamma} = 2i\mathcal{E}_{\alpha \beta \gamma}$, with $ \mathcal{E}_{\alpha \beta \gamma }$ the Levi-Civita tensor. Thus, the action of the local dissipator $\mathcal{D}_{\pm}^{\mathrm{Loc}}[v_{\alpha}^{(i)}] = \sum_{\beta=1}^{d^2}M_{l (\alpha h)}^{\beta k} $, on elements of the basis reads
\begin{eqnarray}
& \mathcal{D}_{\pm}^{\mathrm{Loc}}[\sigma_{x}^{(i)}] = -\frac{1}{2} \sigma_{x}^{(i)} \, , \\
& \mathcal{D}_{\pm}^{\mathrm{Loc}}[\sigma_{y}^{(i)}] =  -\frac{1}{2} \sigma_{y}^{(i)} \, , \\
& \mathcal{D}_{\pm}^{\mathrm{Loc}}[\sigma_{z}^{(i)}] = \pm \mathbb{I}-\sigma_z^{(i)} \,. \\
\end{eqnarray}
This defines a real matrix $M$ such that
\begin{equation}
M_{l (\alpha h)}^{(\beta k)} = \delta_{h k}\left[ \delta_{\alpha \beta} \left(-\frac{1}{2}\delta_{x\alpha}-\frac{1}{2}\delta_{y\alpha} -\delta_{\alpha z}\right) + \ell \delta_{\alpha z} \delta_{\beta 4}  \right]\, .
\end{equation}
It is worth noticing that, as far as it concerns the contribution to the quantum fluctuations, we consider the components $\alpha = x,y,z$, the identity playing the role of an irrelevant operator, as it commutes with the other fluctuation operators. As a consequence, the action of the mean-field equations of motion on these components can be written as
\begin{eqnarray*}
f_{(\alpha \, h) \, (\beta \, k)} & =  i  A_{(\alpha h),  (\beta,k)}  + i \sum_{\gamma, h'} B_{(\alpha h), (\gamma h') (\beta k)} m_{\gamma h'} \\
&+ \sum_{\ell}  \Gamma^{2}_{\ell}(\Delta_{\ell}^{h}) M_{\ell (\alpha h) }^{(\beta k)} \, ,  \\
\end{eqnarray*} 
where
\begin{eqnarray*}
 & A_{(\alpha \, h), (\beta \, k)} = \Omega \sum_{\beta'=x,y,z}   2i \delta_{\beta' x}\mathcal{E}_{ \beta' \alpha \beta } \delta_{k,h} \, , \\
 & B_{(\alpha,h) (\gamma,h') (\beta k)} = 2i \delta_{\gamma \gamma'} \delta_{\gamma' z} \mathcal{E}_{\gamma' \alpha \beta} (g\tilde{w}_{h h'}) \delta_{hk}\, , 
 \end{eqnarray*} 
\begin{eqnarray*}
 \sum_{\ell}  \Gamma^{2}_{\ell} (\Delta_{\ell}^{h}(t)) M_{\ell (\alpha h) }^{(\beta k)} & = \sum_{\ell=\pm}  \Gamma^{2}_{\ell} (\Delta E_{\Lambda_h}) M_{\ell (\alpha h) }^{(\beta k)} \\
& = \delta_{hk} \cosh{(\beta \Delta E_{\Lambda_h})}  \delta_{\alpha \beta} \left(-\frac{1}{2}\delta_{x\alpha}-\frac{1}{2}\delta_{y\alpha} -\delta_{\alpha z}\right) \,. \\
\end{eqnarray*} 
Finally, using also
\begin{equation}
\eqalign{
    -i\sum_{\gamma h'  } \sigma_{(\alpha h) \, (\gamma h')} 2ih_{(\gamma h') (\beta k)} &  =  -i\sum_{\gamma  } \sigma_{(\alpha h) \, (\gamma h')} \delta_{h h'} 2i\delta_{\gamma \beta} \delta_{\beta z}\frac{g}{2}\tilde{w}_{h' k}\\
    & = + g \sigma_{\alpha \beta (h)} \delta_{\beta z} \tilde{w}_{h k} \, ,\\
    }
\end{equation}
it is
\begin{eqnarray}
    [Q]_{(\alpha \, h), (\beta \, k)} & = -2\Omega \mathcal{E}_{x\alpha \beta} \delta_{hk} -2g\mathcal{E}_{z\alpha \beta}\sum_{h'} \tilde{w}_{hh'} m_{z h'}\delta_{h k} \\
    &+ g \sigma_{\alpha \beta (h)} \delta_{\beta z} \tilde{w}_{h k} \\ \nonumber 
    & + \delta_{hk} \cosh{(\beta \Delta E_{\Lambda_h})} \delta_{\alpha \beta} \left(-\frac{1}{2}\delta_{x\alpha}-\frac{1}{2}\delta_{y\alpha} -\delta_{\alpha z}\right) \, , \\ \nonumber
    [Q]^{T}_{(\alpha \, h), (\beta \, k)} & = +2\Omega \mathcal{E}_{x\alpha \beta} \delta_{h k} + 2g\mathcal{E}_{z\alpha \beta}\sum_{h'} \tilde{w}_{hh'} m_{z h'}  \delta_{h k} \\
    &- g \sigma_{\alpha \beta (h)} \delta_{\beta z} \tilde{w}_{h k} \\ \nonumber
    & + \delta_{hk} \cosh{(\beta \Delta E_{\Lambda_h})} \delta_{\alpha \beta} \left(-\frac{1}{2}\delta_{x\alpha}-\frac{1}{2}\delta_{y\alpha} -\delta_{\alpha z}\right) \, .\\ \nonumber
\end{eqnarray}
By exploiting the expression of the constant of the algebra $a_{\alpha \beta}^{\gamma} $, and having defined the local jump operators, one obtains
\begin{equation}
\eqalign{
[\Gamma^2 G^{\mathrm{sym}} ]_{(\alpha h)(\beta k)}& = \delta_{hk} \cosh{(\beta \Delta E_{\Lambda_h})} \left\lbrace\delta_{\alpha x}\delta_{x\beta} +\delta_{\alpha y}\delta_{y\beta} \right. \\
& \left. + 2\delta_{\alpha z}\delta_{z\beta}\right[ 1-\tanh{(\beta \Delta E_{\Lambda_h})} m_{z,h}\left]  \right\rbrace \\
& - \delta_{\alpha x}\delta_{z \beta} \sinh{(\beta \Delta E_{\Lambda_h})} m_{x,h} - \delta_{\alpha y}\delta_{z \beta} \sinh{(\beta \Delta E_{\Lambda_h})} m_{y,h}\\
&- \delta_{\alpha z}\delta_{x \beta} \sinh{(\beta \Delta E_{\Lambda_h})} m_{x,h} - \delta_{\alpha z}\delta_{y \beta} \sinh{(\beta \Delta E_{\Lambda_h})} m_{y,h} \, ,
}
\end{equation}
We can thus derive the term $\sigma S^{\mathrm{sym}} \sigma^{T}$. To begin with, we divide it according to the two followng terms,
\begin{equation}
\eqalign{
& [S^{\mathrm{sym}}_1]_{(\mu h)(\nu k)} = \frac{1}{2}\sum_\ell \sum_{k'} (r_{\ell \mu}^{hk'} r_{\ell \nu}^{k' k}+ r_{\ell \nu}^{hk'} r_{\ell \mu}^{k' k})\left(\Gamma'(\Delta^{k'}_{\ell} ) \right)^2   \omega_t\left(\frac{1}{N_s}\sum_{i \in \Lambda_{k'}}  j_\ell^{\dagger (i)} j_\ell^{ (i)}   \right) \\
& [S^{\mathrm{sym}}_2]_{(\mu h)(\nu k)} = \frac{1}{2} \sum_\ell [\Gamma(\Delta_{\ell}^{h}) \Gamma'(\Delta_{\ell}^{h})  r_{\ell \mu}^{hk} c_{\ell \nu} + r_{\ell \nu}^{h k} c_{\ell \mu} \Gamma(\Delta_{\ell}^{k}) \Gamma'(\Delta_{\ell}^{k})  ] 
}
\end{equation}
such that $S^{\mathrm{sym}} =  S^{\mathrm{sym}}_1 +  S^{\mathrm{sym}}_2 $. We start from the terms $S^{\mathrm{sym}}_1$, and we notice that the components of the vector $\vec{r}_\ell$ read here $ r_{\ell \alpha}^{h k} = \delta_{\alpha z} \tilde{w}_{kh}= \delta_{\alpha z} \tilde{w}_{hk}$, so that they are independent of $\ell$ and symmetric in the large-spin index. Moreover, the average operator reads
$$\frac{1}{N_s} \sum_{i \in \Lambda_{k}} j_{\ell}^{(i)\dagger} j_{\ell}^{(i)} = \frac{1- \ell m_{z,k}^{N_s}}{2} \, ,$$
and it has to be evaluated on a clustering state, as we are considering the thermodynamic limit. Furthermore, for the derivative of the rates, $\Gamma_{\ell}(\Delta E_{\Lambda_k})=  e^{\ell \frac{\beta}{2}\Delta E_{k}}/\sqrt{2}$, it is 
\begin{equation}
\left(\Gamma_{\ell}'(\Delta E_{\Lambda_{k}}) \right)^2 = \frac{\beta^2}{4} \Gamma_{\ell}^2( \Delta E_{\Lambda_{k}} ) \, .
\end{equation}
Thus we have
\begin{equation}
\eqalign{
& [\sigma S^{\mathrm{sym}}_1 \sigma^T]_{(\alpha h), (\beta k)}  = \frac{1}{2}  \sigma_{\alpha \mu (h)} \delta_{\mu, z}\sum_{\ell, k'} \left[\Gamma_{\ell}^{'\, 2}(\Delta E_{\Lambda_{k'}})(1- \ell m_{z,k}) \right] \tilde{w}^{hk'}\tilde{w}^{k' k} \delta_{\nu z}\sigma_{ \beta \nu (k)} \\
& =  \frac{\beta^2}{8}\sigma_{\alpha z (h)}\sum_{k'} [ \cosh{(\beta \Delta E_{\Lambda_{k'}})} - m_{z k'}  \sinh{(\beta \Delta E_{\Lambda_{k'}})}]\tilde{w}^{hk'}\tilde{w}^{k' k}\sigma_{\beta z (k)} \, .
}
\end{equation}
Finally, we focus on the term $S^{\mathrm{sym}}_2$. To this end, we notice that the elements $c_{\ell \eta} $ for $\eta =x, y, z$ reads here 
$$\eqalign{
{c}_{\ell , \eta} & = \sum_{\gamma, \gamma' } c_{\ell, \gamma}^{j *} c_{\ell, \gamma'}^{j } \frac{1}{2}(a_{\gamma \gamma'}^\eta + b_{\gamma \gamma'}^{\eta}) \\
& = \sum_{\gamma \gamma'} c_{\gamma \ell}^{j \, *}c_{\gamma' \ell}^{j \, *} [\delta_{\gamma \gamma'} + i\mathcal{E}_{\gamma \gamma' \eta}] = - \frac{1}{2}\ell \mathcal{E}_{xyz}\delta_{\eta z}
 \, , }$$
where we used $j_{\ell}^{(k)} = \sigma_{\pm}^{(k)}$, and thus $c_{\ell, \gamma}^j = \lbrace \frac{1}{2}, \pm \frac{i}{2}  ,0 ,0\rbrace$. Hence, we obtain 
\begin{equation}
    \eqalign{
    [\sigma S^{\mathrm{sym}}_2 \sigma^T]_{(\alpha h), (\beta k)} 
    & = \sigma_{\alpha \mu (h)} \delta_{\mu z} \sum_{\ell} \left\lbrace -\frac{\ell}{4} \left[ \Gamma_{\ell}(\Delta E_{\Lambda_{h}}) \Gamma_{\ell}'(\Delta  E_{\Lambda_{h}})+ \right. \right. \\
    & \left. \left. \Gamma_{\ell}(\Delta  E_{\Lambda_{k}}) \Gamma_{\ell}'(\Delta  E_{\Lambda_{k}} )\right] \right\rbrace \tilde{w}_{hk} \delta_{\nu z}\sigma_{\beta \nu (k)} \\
    & = -\frac{\beta}{8} \sigma_{\alpha z (h)} \tilde{w}_{hk} \sum_{\ell}[\Gamma_{\ell}^2(\Delta  E_{\Lambda_{h}})+\Gamma_{\ell}^2(\Delta E_{\Lambda_{k}} ) ] \sigma_{\beta z (k)} \\
    & = -\frac{\beta}{8} \sigma_{\alpha z (h)} \tilde{w}_{hk} [\cosh(\beta \Delta E_{\Lambda_{h}})+\cosh(\beta  \Delta E_{\Lambda_{k}}) ] \sigma_{\beta z (k)} \, .
    }
\end{equation}
Collecting all the above terms together, the EoMs for the covariance matrix reads
\begin{equation}\label{C:EoM_covariancematrix}
    \eqalign{
    \dot{\Sigma}_{(\alpha, h), (\beta,k)}^{\omega} & = \sum_{\mu, k'} \left\lbrace -2\Omega \mathcal{E}_{x\alpha \mu} \delta_{hk'} -2g\mathcal{E}_{z\alpha \mu}\sum_{h'} \tilde{w}_{hh'} m_{z h'} \delta_{h k'} \right. \\ 
    & + g \sigma_{\alpha \mu (h)} \delta_{\mu z} \tilde{w}_{h k'} \\  
    & \left. + \delta_{hk'} \cosh( \beta \Delta E_{\Lambda_{h}}) \delta_{\alpha \mu} \left(-\frac{1}{2}\delta_{x\mu}-\frac{1}{2}\delta_{y\mu} -\delta_{\mu z}\right) \right\rbrace\Sigma^{\omega}_{(\mu k') (\beta k)} \, + \\    
    &+\sum_{\mu k'} \Sigma^{\omega}_{(\alpha h) (\mu k')} \left\lbrace2\Omega \mathcal{E}_{x\mu \beta} \delta_{k' k} + 2g\mathcal{E}_{z\mu \beta}\sum_{h'} \tilde{w}_{k'h'} m_{z h'}(t)  \delta_{k' k} \right. \\
    &- g \sigma_{\mu \beta (k')} \delta_{\beta z} \tilde{w}_{k' k} \\
    & + \left. \delta_{k'k} \cosh( \beta \Delta E_{\Lambda_{k'}}) \delta_{\mu \beta} \left(-\frac{1}{2}\delta_{x\mu}-\frac{1}{2}\delta_{y\mu} -\delta_{\mu}\right) \right\rbrace+ \\
    & + \delta_{hk} \cosh( \beta \Delta E_{\Lambda_{h}}) \left\lbrace\delta_{\alpha x}\delta_{x\beta} +\delta_{\alpha y}\delta_{y\beta} \right. \\
    & \left. + 2\delta_{\alpha z}\delta_{z\beta}\right[ 1-\tanh( \beta \Delta E_{\Lambda_{h}} ) m_{z,h} \left]  \right\rbrace \\
    & - \delta_{\alpha x}\delta_{z \beta}\sinh\left(  \beta \Delta E_{\Lambda_{h}} \right)m_{x,h} - \delta_{\alpha y}\delta_{z \beta}\sinh\left( \beta \Delta E_{\Lambda_{h}} \right)m_{y,h}\\
    &- \delta_{\alpha z}\delta_{x \beta}\sinh\left(\beta \Delta E_{\Lambda_{h}} \right)m_{x,h} - \delta_{\alpha z}\delta_{y \beta}\sinh\left( \beta \Delta E_{\Lambda_{h}} \right)m_{y,h}\\
    & + \frac{\beta^2}{8}\sigma_{\alpha z (h)}\sum_{k'} [ \cosh{(\beta  \Delta E_{\Lambda_{k'}})} - m_{z k'}  \sinh{(\beta  \Delta E_{\Lambda_{k'}})}]\tilde{w}^{hk'}\tilde{w}^{k' k}\sigma_{\beta z (k)} \\ 
    &-\frac{\beta}{8} \sigma_{\alpha z (h)} \tilde{w}_{hk} [\cosh(\beta  \Delta E_{\Lambda_{h}})+\cosh(\beta  \Delta E_{\Lambda_{k}}) ] \sigma_{\beta z (k)} \, .
    }
\end{equation}
By setting $g=0$, and $m_{x,h}=0$ $\forall h$, from the above expression one can obtain the one in Eq.~\eref{EoM_covariancematrix}. In the following, we will however derive some considerations that go beyond the $g=0$ case. As observed in the main text, without the all-to-all interaction, i.e. at $g=0$, the matrix $Q$ does not couple the correlations amongst large-spins, that are given in terms of the off-diagonal elements $\Sigma^{\omega}_{(\alpha h),(\beta k)}$ for $h\neq k$; the matrix $\Gamma^2 G^{\mathrm{sym}}$ does not depend on $g$ and it is always diagonal with respect to the large-spin components. Finally, the term $\sigma S^{\mathrm{sym} } \sigma$ presents off-diagonal terms: at $g=0$, the mean-field stationary solutions features $m_{x,k} = 0$ $\forall k$, implying that the non-zero element are at position ${(xh),(xk)} $ $\forall k, h$. instead, for $g \neq 0$, there is more than one element for each block $(\alpha,h) (\beta,k)$, thus giving a corresponding finite determinant. In general, one could thus expect stronger quantum correlations in the presence of a direct, all-to-all, interaction. 

 \newpage
\bibliography{DM_bib2}        \bibliographystyle{unsrt}

\end{document}